\newtheorem{theorem}{Theorem}
\newtheorem{definition}{Definition}
\newtheorem{lemma}{Lemma}
\newtheorem{corollary}{Corollary}
\newtheorem*{notation*}{Notation}
\newtheorem{claim}{Claim}
\newcommand{\act}{\alpha}
\newcommand{\RP}{R_{\cal P}}
\newcommand{\decision}{\mathsf{dec}}
\newcommand{\abort}{\mathsf{abort}}
\newcommand{\commit}{\mathsf{commit}}
\newcommand{\AllOne}{\mathtt{all[1]}}
\newcommand{\err}{$\overline{\mbox{\textit{err}}}$}
\newcommand{\Proc}{\mathbb{P}}
\newcommand{\bbF}{\mathbb{F}}
\newcommand{\defemph}[1]{\textbf{\textit{#1}}}
\newcommand{\modelf}{\gamma^f}
\newcommand{\BB}{\mathtt{B.1.Consensus}}
\newcommand{\Sr}{S^r}
\renewcommand{\cref}{\Cref}
\newcommand{\ind}[2]{\approx_{#1}^{#2}}
\newcommand{\Sone}{{\bf S1}}
\newcommand{\Stwo}{{\bf S2}}
\newcommand{\Sthree}{{\bf S3}}
\newcommand{\Prot}{{\cal P}}
\newcommand{\mcc}[1]{\leadsto_{#1}}% 
\newcommand{\mccr}{\mcc{r}}
\newcommand{\eqdef}{\triangleq}
\newcommand{\Stealth}{\mbox{{\sc Stealth}}}
\newcommand{\Quick}{\mbox{{\sc D2}}}
\newcommand{\Short}{\mbox{{\sc D1}\rm{f1}}}
\newcounter{linecounter} 
\newcommand{\linenumbering}{\ifthenelse{\value{linecounter}<10} 
	{(0\arabic{linecounter})}{(\arabic{linecounter})}}
\renewcommand{\thelinecounter}{\ifnum \value{linecounter} >  
	9\else 0\fi \arabic{linecounter}}
\newlength {\squarewidth}
\title{Silence}
\author{Guy Goren\\
Viterbi Faculty of Electrical Engineering,\\ Technion\\
\texttt{sgoren@campus.technion.ac.il}
\and 
Yoram Moses\\
Viterbi Faculty of Electrical Engineering,\\ Technion\\
\texttt{moses@ee.technion.ac.il}  
}
\begin{document}
\date{}
\maketitle

\begin{abstract}

The cost of communication is a substantial factor affecting the scalability of many distributed applications. Every message sent can incur a cost in storage, computation, energy and bandwidth. Consequently, reducing the communication costs of distributed applications is highly desirable. The best way to reduce message costs is by communicating without sending any messages whatsoever. 
This paper initiates a rigorous investigation into the use of silence 
in synchronous settings, in which  processes can fail. 
We formalize sufficient conditions for information transfer using silence, as well as necessary conditions for particular cases of interest. 
This allows us to identify message patterns that enable communication through silence. 
In particular, a pattern called a {\em silent choir} is identified, and shown to be central to information transfer via silence in failure-prone systems. 
The power of the new framework is demonstrated on the {\em atomic commitment} problem (AC). 
A complete characterization of the tradeoff between message complexity and round complexity in the synchronous model with crash failures is provided, in terms of lower bounds and matching protocols. 
In particular, a new message-optimal AC protocol is designed using silence, in which processes decide in~3 rounds in the common case. 
This significantly improves on the best previously known message-optimal AC protocol, in which decisions were performed in $\Theta(n)$ rounds. 
\end{abstract}
\bigskip\bigskip

$~$\hfill\begin{tabular}{ll}
\textit{And in the naked light I saw} \\[.5ex]
\textit{Ten thousand people, maybe more}\\[.5ex]
\textit{People talking without speaking}\\
$~$$\cdots$\\
\textit{And no one dared}\\[.5ex]
\textit{Disturb the sound of silence}\\[1ex]
&\hspace{-1cm}{Paul Simon, 1964}
\end{tabular}

\medskip\vfill
\noindent
\textbf{Keywords}:
Silent information exchange, null messages, silent choir, atomic commitment, consensus,  optimality, fault-tolerance, knowledge.\\
\vfill

\thispagestyle{empty}
\newpage
\setcounter{page}{1}
\section{Introduction}\label{sec:intro}
The cost of communication is a substantial factor limiting the scalability of many distributed applications (see, e.g. \cite{Bitcoin,zookeeper,XFT,jogalekar2000evaluating, lev2016modular}). Indeed, sending a message imposes costs in storage, computation, energy and bandwidth. Consequently, reducing the communication costs of distributed applications is highly desirable. The best way to reduce these costs is by communicating without sending any messages whatsoever. 
In reliable synchronous systems processes can exchange information in silence, effectively ``{\it sending a message by not sending a message},'' to use a term from~\cite{lamport1984using}.  In fault-prone systems, however, using silence in this way is considerably more subtle. Roughly speaking, in a reliable system in which messages from~$j$ to~$i$ are guaranteed to be delivered within~$\Delta$ time units, if~$j$ sends no message to~$i$ at time~$t$, then~$i$ is able to detect this fact at time~$t+\Delta$. 
This, in turn, can be used to pass information from~$j$ to~$i$. 
In a setting in which~$j$ may fail, however, it is possible for~$i$ not to receive~$j$'s message because~$j$ failed in some way.

Given the value of reducing communication costs and the power of silence to do so, efficient protocols for a variety of tasks of interest
make effective, albeit implicit, use of silence 
	(see, e.g., \cite{bar1991consensus, amdur1992message, hadzilacos1993message, guerraoui2017fast, liskov1993practical, dolev1983authenticated, bernstein1985loosely}). 
There is no clear theory underlying this use of silence, however. 
This paper initiates a rigorous investigation into the use of silence in synchronous settings in which  processes can fail. 
We formalize sufficient conditions for information transfer using silence, as well as necessary conditions for particular cases of interest. 
This allows us to identify message patterns that enable communication through silence. 

 Using the proposed framework, we consider the {\em atomic commitment} problem (AC) in synchronous systems with crash failures \cite{BHG87,dwork1983inherent} as a case study. A silence-based analysis provides new lower bounds 
 on the efficiency of AC protocols that are optimal in the common case. The conditions identified for using silence are then used to provide protocols that match the lower bounds. In some cases, the new protocols significantly improve the state-of-the-art.

The main contributions of this paper are: 
	\begin{itemize}[itemsep=0pt]
		\item Simple rules for silent information transfer in the presence of failures are provided and formalized. 
		\item The {\em Silent Choir} Theorem, which precisely captures the communication patterns required for 
		silence to convey information about the initial values, is proven. 
		In failure-prone systems, a silent choir plays the role that null messages play in reliable systems, 
		facilitating communication via silence. 
		\item Using a silence-based analysis,  lower bounds are established on the number of messages needed for  AC protocols to decide in~$D$ rounds in the common case, for any given~$D\ge 1$. 
		\item Silent information transfer rules are used to design protocols that prove these bounds to be tight, 
		thereby completely characterizing the tradeoff between the message complexity and round complexity of AC. 
		 For the most interesting case of message-optimal AC, we present the \Stealth\/ protocol, which reduces the decision times from $\Theta(n)$  to $\Theta(1)$ rounds (in fact, from $n+2f$  to~3). 
	\end{itemize}
	
This paper is organized as follows. The next section introduces our model of computation, and formal definitions of indistinguishability, knowledge and AC. 
		\cref{sec:Silence} formalizes how silence can be used to convey information in fault-prone synchronous message-passing systems. 
		In \cref{sec:AC} lower bounds and upper bounds for optimal AC in the common case are established. Finally, concluding remarks are discussed in~\cref{sec:discussion}.

	% and facilitates the design of optimal protocols (as will be exampled on a specific problem).}

		%%%%%%%%%%%%%%%%%%%%%%%%%%%%%%%%%%%%%%%%%%%%%%%%%%%%%%%%%%%%%%%%%%%%%%%%%%%%%%%%%%%%%%%%%%%%%%%%%%%%%%%%%%%%%%%%%%%%%%%%%%%%%%%%%%%%%%%%%%%%%%%%%%%%%%%%%%%%

	\section{Definitions~and~Preliminary Results}\label{sec:def}
	\subsection{Atomic Commitment}	\label{sec:Ac-def}
	In the atomic commitment problem \cite{BHG87,dwork1983inherent, guerraoui2017fast,hadzilacos1987knowledge}, each process~$j$ starts out with a binary initial value $v_j\in\{0,1\}$.  Processes need to decide among two actions, $\commit$ or $\abort$. All runs of an AC protocol are required to satisfy the 
		following conditions: 

\newpage
	\begin{spacing}{.3}

		\noindent
		\underline{
			{\sc Atomic Commitment:}}
		\begin{itemize}[itemsep=.5pt]
			\item[]{\bf Decision:}\quad Every correct process must eventually decide either $\commit$ or $\abort$, 
			\item[]{\bf Agreement:}\quad All  processes that decide make the same decision, 
			\item[]{\bf Commit Validity}:\quad A process can decide $\commit$ only if all initial values are~$1$, and 
			\item[]{\bf Abort Validity}:\quad A process can decide $\abort$ only if  some initial value is~0, or if some process failed.\\[.3ex]
		\end{itemize}
	\end{spacing}
	\vspace{.5ex}
	\noindent
	The AC problem is motivated by distributed databases, in which the processes are involved in performing a joint transaction. An initial value of~0 can correspond to a vote to abort the transaction, while~1 is a vote to commit the transaction.  
	Typically, processes enter the AC protocol with a vote to commit, and failures in runs of AC are rare. We call a run in which all initial values are~1 and no failures occur, a {\em nice} run. As first advocated by~\cite{dwork1983inherent}, it is sensible to seek AC protocols that are optimized for the common case, i.e., for nice runs. Natural parameters to optimize are the number of messages sent in nice runs, and the number of rounds required for decision, and for halting. We call an AC protocol that sends the fewest possible messages in nice runs {\em message optimal}, and one that decides the soonest {\em round optimal}. 
%	As the following argument shows, silent inference is essential in the design of message optimal AC protocols. 
%	In nice runs all processes must $\commit$. But in order to $\commit$, a process must be guaranteed that all initial values are~1, because of Commit Validity. 
	In a recent paper \cite{guerraoui2017fast}, Guerraoui and Wang present a message-optimal AC protocol. It uses $n+f-1$ messages in nice runs (for a system with~$n$ processes and up to~$f$ crash failures). But this message efficiency seems to come at a cost in terms of decision time: processes decide in nice runs only at the end of $n+2f$ rounds. This raises an interesting open problem of identifying the tradeoff between message complexity and decision time, for AC protocols in the common case.
	
%	
%	However,  as we show in \cref{lem:n+k-1}, $n+f-1$ messages do not suffice to create a message chain between every pair of processes if $f<n-1$.
%%
%	As a result, at least some of the processes must commit based on silent inference in nice runs of message-optimal AC protocols. While the protocol by \cite{guerraoui2017fast} is message optimal, it allows processes to decide only at the end of $n+2f$ rounds. 
%	It is unclear, however, whether this cost in time is necessary. This raises an interesting open problem of identifying the tradeoff between message complexity and decision time, for AC protocols in the common case. 
%	

\subsection{Model of Computation}
\label{sec:model}
	We consider the standard synchronous message-passing model with benign crash failures. We assume a set $\Proc=\{0,1,\ldots,n-1\}$ of $n> 2$ processes.
	%
	%A system has~\mbox{$n\!\ge\!2$} processes denoted by $\Proc=\{1,2,\ldots,n\}$. 
	Each pair of processes is connected by a two-way communication link, and for each message the receiver knows the identity of the sender.
	%All messages that are sent reach their destination within a bounded time called \defemph{message delay} (MD for brevity), known to all.
	All processes share a discrete global clock that starts at time~$0$ and  advances by increments of one.
	Communication in the system proceeds in a sequence of \emph{rounds}, with round~$m+1$ taking place between time~$m$ and time~$m+1$, for $m\ge 0$. A message sent at time~$m$ (i.e., in round~$m+1$) from a process~$i$ to~$j$ will reach~$j$ by time~$m+1$, i.e., the end of round~$m+1$. 
	In every round, each process performs local computations, sends a set of messages to other processes, and finally receives messages sent to it by other processes during the same round.
	A faulty process in a given execution fails by \emph{crashing} in some round~$m\ge 1$. 
	In this case, the process behaves correctly in the first~$m-1$ rounds and it performs no actions and sends no messages from round~$m+1$ on. 
	During its crashing round~$m$, the process sends messages to an arbitrary subset among the processes to whom the protocol prescribes that it send messages in round~$m$. It does not take any decisions at the end of its crashing round (i.e., at time~$m$). As is customary in agreement problems such as Consensus, we shall assume that each process~$i$ starts at time~$0$ in some \emph{initial state}, which can be assumed for simplicity in this paper to consist of its \emph{initial value} $v_i\in\{0,1\}$. 
	%Intuitively, a value of~1 signifies that~$i$ approves the transaction, while~0 means that it disapproves of it. 
	
	%We identify time with the natural numbers (including~0) $\Nat$. 
	At any given time~$m\ge 0$, a process is in a well-defined  \defemph{local state}.   For simplicity, we assume that the local state of a process~$i$  consists of its initial value~$v_i$, the current time~$m$, and the sequence of the actions that~$i$ has performed (including the messages it has sent and received) up to that time. In particular, its local state at time~0 has the form~$(v_i,0,\{ \} )$. We will also assume that once a process has crashed, its local state becomes~$\bot$. A \defemph{protocol} describes what messages a process sends and what decisions it takes, as a deterministic function of its local state. 
	% at the start of a round and the messages received and sent during a round.
	
	We will consider the design of protocols that are required to withstand up to~$f$ crashes.
	Thus, given $1\le f<n$, we denote by $\modelf$ the model described above in which it is guaranteed that no more than~$f$ processes fail in any given run. 
	We assume that a protocol~$\cal P$ has access to the values of~$n$ and~$f$, typically passed to~$\cal P$ as parameters.
	
	A \defemph{run} is a description of a (possibly infinite) execution of the system.
	%A protocol $\cal P$ and an adversary~$\alpha$ uniquely determine a run, and we write $r = \cal P [\alpha]$.
	We call a set of runs~$R$ a \defemph{system}.
	%A set  of runs~$R$ (which we call a \defemph{system}) is defined by a protocol and a set of possible adversaries.
	We will be interested in systems of the form $R_{\cal P}=R({\cal P},\modelf )$ consisting of all runs of a given protocol~$\cal P$ in which no more than~$f$ processes fail. 
	Observe that a protocol~$\cal P$ solves AC in the model~$\modelf$ if and only if every run of~$R_{\cal P}$  satisfies the Decision, Agreement and two Validity conditions described above. 
	Given a run~$r$ and a time~$m$, we denote the local state of process~$i$ at time~$m$ in run~$r$ by $r_i(m)$.
	Notice that a process~$i$ can be in the same local state in different runs of the same protocol. Since the current time~$m$  is represented in the local state $r_i(m)$, however,  $r(m)=r'(m')$ can hold only if $m=m'$.

\subsection{Indistinguishability and Knowledge}

%Since we assume that processes follow a deterministic protocol, the actions that a process takes are a function of its local state. Hence, if $r_i(m)=r'_i(m)$ for two runs of the same protocol, then~$i$ performs the same actions at time~$m$ in both~$r$ and~$r'$. 
We shall say that two runs~$r$ and~$r'$ are {\em indistinguishable} to process~$i$ at time~$m$ if $r_i(m)=r'_i(m)$. We denote this by $r\ind{i}{m}r'$. Notice that since we assume that processes follow deterministic protocols, if $r\ind{i}{m}r'$ then process~$i$ is guaranteed to perform the same actions at time $m$ in both~$r$ and~$r'$. 
Problem specifications  typically impose restrictions on actions, based on properties of the run. Moreover, since the actions that a process performs are a function of its local state,  the restrictions can depend on properties of other runs as well. 

%In a correct protocol for AC, for example, t
For example, the Agreement condition in AC implies that a process~$i$ cannot decide to $\commit$ at time~$m$  in a run~$r$ if there is an indistinguishable run~$r'\ind{i}{m}r$ in which some process decides $\abort$. Similarly, by Commit Validity a process~$i$ can not $\commit$ if there is a run~$r'$ that is indistinguishable  from~$r$ (to~$i$ at time~$m$) in which one of the initial values is~0. Similarly, by Abort Validity~$i$ can not $\abort$ at time~$m$ if there exists a nice run~$r'\ind{i}{m}r$  (i.e., all initial values are~1 and no failure occurs in~$r'$). 
These examples illustrate how indistinguishability can inhibit actions --- performing an action can be prohibited because of what may be true at indistinguishable runs. 

Rather than considering when actions are prohibited, we can choose to consider what is required in order for an action to be allowed by the specification. To this end, we can view Commit Validity as implying that process~$i$ is allowed to perform $\commit$ at time~$m$ in~$r$ only if all initial values are~1 in every run~$r'\ind{i}{m}r$. 
This is much stronger than stating that all values are~1 only in the run~$r$ itself, of course. Roughly speaking, the stronger statement is true because at time~$m$ process~$i$ cannot tell whether it is in~$r$ or in any of the runs $r'\ind{i}{m}r$. When this condition holds, we say that~$i$ {\em knows} that all values are~1.  More generally, it will be convenient to define the dual of indistinguishability, i.e., what is true at all indistinguishable runs, as what the process knows. More formally, following in the spirit of \cite{HM1,FHMV}, we proceed to define knowledge in our distributed systems as follows.%
\footnote{We introduce just enough of the theory of knowledge to support the analysis performed in this paper. For more details, see \cite{FHMV}.}
%Recall that in the synchronous systems we are considering 
%the local state of a process contains the current time~$m$. Hence,  if $m'\ne m$ then $r_i(m)\ne r'_i(m')$ for all runs $r,r'$.} 

%===
%To facilitate the analysis in this paper, we will make explicit use of indistinguishability and its dual, knowledge, in formalizing silence-based inference. Indistinguishability is a fundamental notion in distributed computing, and is used in an essential way in lower bound proofs and impossibility results. 
%As a result, 
%
%
%Since our analysis will use the notion of knowledge, we now present a simple definition of knowledge in distributed protocols. It will contain just enough detail to support our analysis in this paper. 
%In particular, our exposition will be simplified by the fact that the model is synchronous, and we will only care about very simple knowledge statements. 

%We call a pair of the form  $(r,m)$ consisting of a run~$r$ and a time~$m$ a \defemph{point}, and use it to refer to the state of affairs at time~$m$ in the run~$r$.  
We shall focus on knowledge of facts that correspond to properties of runs. We call these {\em facts about the run}, and denote them by $\varphi$, $\psi$, etc. In particular, facts such as ``$v_j=1$'' ($j$'s initial value is~1), {\em ``$j$ is faulty''} (in the current run), and {\em ``$j$ decides $\commit$''} (in the current run), are all examples of facts about the run. 

%So, too, is a fact such as {\em ``$j$ fails by the end of round~$k$''}. 

%{\tt Another is the fact $\cmc(i)$, which is true in a run~$r$ if there is a message chain from~$i$ to some process~$j$ that is correct in~$r$ (only~$j$ must be correct).}
%that $\mc{i}{j}$ holds for some correct process, which is true at $(r,m)$ if the run~$r$ contains a message chain from~$i$ to~$j$. 
%Notice that these particular facts are properties of the run~$r$. 
%Their truth does not depend on the time~$m$. 

\begin{definition}[Knowledge]
\label{def:know}
	Fix a system $R$, a run $r\in R$, a process~$i$ and a fact~$\varphi$.  
	We say that  $K_i\varphi$ (which we read as ``process~$i$ \defemph{knows}~$\varphi$'') holds at time~$m$ in~$r$ iff~$\varphi$ is true of all runs~$r'\in R$ such that $r'\ind{i}{m}r$.
	%$r_i(m)=r'_i(m)$. 
\end{definition}

Notice that knowledge is defined with respect to a given system~$R$. Often, the system is clear from context and is not stated explicitly. 
\cref{def:know} immediately implies the so-called {\em Knowledge property}: If $K_i\varphi$ holds at (any) time~$m$ in~$r$, then $r$ satisfies $\varphi$. 

%A well-known property of knowledge that follows directly from the definition is that only true facts can be known: 
%\begin{lemma}
%\label{lem:KnowledgeProperty}
%Fix a system~$R$ and let $r\in R$. If $K_i\varphi$ holds at time~$m$ in~$r$, then $\varphi$ holds at time~$m$ in~$r$. Moreover, if $\varphi$ is a fact about runs, then $r$ satisfies $\varphi$.
%\end{lemma}

We use Boolean operators such as $\neg$ (Not), $\wedge$ (And), and~$\vee$ (Or) freely in the sequel. While the basic facts $\varphi$, $\psi$, etc. of interest are properties of the run, knowledge can change over time. Thus,  for example, $K_i(v_j=0)$ may be false at time~$m$ in a run~$r$ and true at time~$m+1$, based perhaps on  messages that~$i$ does or does not receive in round~$m+1$.

A  recent paper~\cite{kop} provides an essential connection between knowledge and action in distributed protocols called the \defemph{knowledge of preconditions principle} (KoP). It states that whatever must be true when a particular action is performed by a process~$i$ must be known by~$i$ when the action is performed. This is one way of capturing the role of indistinguishability discussed above. 
More formally, we say that a fact~$\varphi$ is a \defemph{necessary condition} for an action~$\act$ in a system~$R$ if for all runs $r\in R$ and times~$m$, if~$\act$ is performed at time~$m$ in~$r$ then~$\varphi$ must be true at time~$m$ in~$r$. For deterministic protocols in synchronous models such as~$\modelf$, the KoP %of~\cite{kop} 
can be stated as follows: 
\begin{theorem}[KoP, \cite{kop}]
	\label{thm:kop}
	Fix a protocol $\cal P$ for~$\modelf$, let~$i\in\Proc$ and let~$\act$ be an action of~$i$ in~$\RP$. 
	If $\varphi$ is a necessary condition for~$\act$ in~$\RP$ then 
	$K_i\varphi$ is a necessary condition for~$\act$ in~$\RP$.
\end{theorem}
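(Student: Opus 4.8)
The plan is to simply unwind the definitions of knowledge and of a necessary condition, and then let determinism do the work. Fix an arbitrary run $r\in\RP$ and a time~$m$ at which~$i$ performs the action~$\act$; the goal is to show that $K_i\varphi$ holds at time~$m$ in~$r$. By \cref{def:know}, applied to the system $R=\RP$, this amounts to showing that~$\varphi$ is true at time~$m$ in every run $r'\in\RP$ with $r'\ind{i}{m}r$. So I would fix such a run~$r'$ and argue that~$\act$ is performed at time~$m$ in~$r'$ as well.

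The crucial step is the observation already emphasized in \cref{sec:model}: since processes follow deterministic protocols and the actions a process takes at time~$m$ are a function of its local state $r_i(m)$, the condition $r'\ind{i}{m}r$ --- i.e. $r'_i(m)=r_i(m)$ --- forces~$i$ to perform exactly the same actions at time~$m$ in~$r'$ as in~$r$. In particular, since~$\act$ is an action of~$i$ and~$i$ performs~$\act$ at time~$m$ in~$r$, process~$i$ also performs~$\act$ at time~$m$ in~$r'$.

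Now I would invoke the hypothesis that~$\varphi$ is a necessary condition for~$\act$ in~$\RP$: since $r'\in\RP$ and~$\act$ is performed at time~$m$ in~$r'$, the definition of necessary condition yields that~$\varphi$ is true at time~$m$ in~$r'$. As~$r'$ was an arbitrary run of~$\RP$ indistinguishable from~$r$ to~$i$ at time~$m$, \cref{def:know} gives $K_i\varphi$ at time~$m$ in~$r$. Since~$r\in\RP$ and~$m$ were arbitrary among those at which~$\act$ is performed, $K_i\varphi$ is a necessary condition for~$\act$ in~$\RP$, as desired.

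I do not anticipate a genuine obstacle: the statement is essentially the KoP of~\cite{kop} specialized to a deterministic synchronous model, where indistinguishability is literally equality of local states. The only two points that need care are expository rather than mathematical: first, knowledge is evaluated with respect to the very system~$\RP$ over which ``necessary condition'' is quantified, so no run outside~$\RP$ ever enters the argument; and second, it is precisely the deterministic dependence of~$i$'s actions on~$i$'s local state (set up in \cref{sec:model}) that licenses transferring the performance of~$\act$ from~$r$ to an indistinguishable~$r'$. If there is a ``hard part,'' it is just making that dependence explicit in the write-up.
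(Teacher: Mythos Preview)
Your argument is correct and is exactly the standard proof of the knowledge-of-preconditions principle: determinism transfers the performance of~$\act$ to any indistinguishable run, and then the necessary-condition hypothesis yields~$\varphi$ there, giving $K_i\varphi$ by \cref{def:know}. Note, however, that the paper does not actually supply its own proof of \cref{thm:kop}; it quotes the result from~\cite{kop} and uses it as a black box, so there is nothing in the paper to compare your write-up against beyond confirming that your proof matches the intended reasoning.
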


We denote the fact that all initial values are~1 by $\AllOne$. Since $\AllOne$ is a necessary condition for performing $\commit$ in AC, \Cref{thm:kop} immediately implies the following, which will be useful in our analysis of AC in \cref{sec:AC}:
\begin{corollary}[Hadzilacos \cite{hadzilacos1987knowledge}]
\label{cor:know-1}
When a process~$i$  commits in a run of an AC protocol, it must know $\AllOne$ (in particular, $K_i(v_j=1)$ must hold for every process~$j$). 
\end{corollary}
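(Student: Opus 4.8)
The plan is to obtain this as an immediate consequence of the knowledge of preconditions principle (\Cref{thm:kop}). First I would verify that $\AllOne$ is a \emph{necessary condition}, in the precise sense defined just before \Cref{thm:kop}, for the action $\commit$ in any system $\RP=R({\cal P},\modelf)$ where $\cal P$ solves AC: by Commit Validity, if some process decides $\commit$ at time~$m$ in a run~$r\in\RP$, then all initial values in~$r$ equal~$1$; since a process's initial value is fixed throughout the run, $\AllOne$ is in particular true at time~$m$ in~$r$. Thus whenever $\commit$ is performed at time~$m$, $\AllOne$ holds at time~$m$, which is exactly the required form of a necessary condition.

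Next I would apply \Cref{thm:kop} with the committing process as~$i$, with $\act=\commit$ (which is indeed an action of~$i$ in~$\RP$), and with $\varphi=\AllOne$. The theorem then yields directly that $K_i\AllOne$ is a necessary condition for~$i$ to perform $\commit$ in~$\RP$; that is, whenever~$i$ commits, $K_i\AllOne$ holds at that time. To obtain the parenthetical strengthening, I would observe that $\AllOne$ is logically the conjunction $\bigwedge_{j\in\Proc}(v_j=1)$, and that knowledge is monotone under logical implication: this is immediate from \Cref{def:know}, since if $\varphi$ implies~$\psi$ over the system~$R$ and $K_i\varphi$ holds in~$r$ at time~$m$, then $\psi$ is true at every run $r'\ind{i}{m}r$, so $K_i\psi$ holds as well. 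Since $\AllOne$ implies $(v_j=1)$ for each fixed~$j$, we conclude that $K_i(v_j=1)$ holds for every process~$j\in\Proc$ whenever~$i$ commits.

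I do not expect any genuine obstacle here, as all the work is carried out by \Cref{thm:kop}, which is quoted from prior work. The only point that warrants a moment's care is the bookkeeping in the first step: one must note that the phrasing ``a process can decide $\commit$ only if all initial values are~$1$'' delivers $\AllOne$ \emph{at the time the action is taken} — which relies only on the trivial fact that initial values do not change over the course of a run — so that it fits the time-indexed definition of a necessary condition required by the KoP.
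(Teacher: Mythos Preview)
Your proposal is correct and matches the paper's approach exactly: the paper simply notes that $\AllOne$ is a necessary condition for $\commit$ by Commit Validity and invokes \Cref{thm:kop}. Your additional remarks on the time-indexing of the necessary condition and on deriving each $K_i(v_j=1)$ from $K_i\AllOne$ via monotonicity of knowledge are sound elaborations that the paper leaves implicit.
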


\noindent
Hadzilacos \cite{hadzilacos1987knowledge} is a very elegant paper that was the first to apply knowledge theory to the study of commitment problems. 

	%%%%%%%%%%%%%%%%%%%%%%%%%%%%%%%%%%%%%%%%%%%%%%%%%%%%%%%%%%%%%%%%%%%%%%%%%%%%%%%%%%%%%%%%%%%%%%%%%%%%%%%%%%%%%%%%%%%%%%%%%%%%%%%%%%%%%%%%%%%%%%%%%%%%%%%%%%%%
\section{Silence in the Presence of Crashes}
\label{sec:Silence}
\label{sec:inference}

In a reliable setting, we can intuitively describe information transfer via silence as follows (we formalize the rules \Sone---\Sthree\/ in \cref{sec:silent broadcast}):
	\\[.5ex]
	%\label{rule-x}?
	\noindent\Sone.~{\bf Silent inference from a reliable source:}\quad 
 	Suppose 	that a process~$j$ is guaranteed to send~$i$ a message at time~$t$ in case some condition~$\varphi$ of interest {\em is false}. If, by time $t+\Delta$, process~$i$ has not received a message that~$j$ sent to~$i$ at time~$t$, then~$i$ can infer that~$\varphi$ holds. 
	\\[.5ex]
	In fact, extensive use of silence can achieve an even greater effect. 
	Suppose that the proper operation of a synchronous system depends on the truth of local conditions~$\varphi _j$, for different processes~$j$. In this case, all nodes of the system can be informed that all conditions hold simply if no process reports  a violation of its local condition. Effectively, this achieves a (silent) broadcast from each process to all processes, at no cost. 
	
	Notice that silence can be informative only in case there are alternative circumstances (e.g., when~$\varphi$ does not hold) under which a message would be received by~$i$. Thus, in a precise sense, the use of silence can serve to shift the communication costs among scenarios. This can be  especially useful for optimizing the behavior of protocols in cases of interest. Most popular is optimization for the common case (often referred to as  {\em fast path/slow path} protocols) in which a protocol is designed  to be very efficient in the common case, at the expense of being less efficient in uncommon cases. 
	In \Sone, the ability to convey information by silence depends on a reliability assumption, since if~$j$ may fail by crashing, for example, then it would be possible for~$i$ not to hear from~$j$ even if~$\varphi$ is untrue.  The information conveyed by a null message in this case can be described by:%
	%\footnote{The exposition in this section is informal in parts. The tools and terminology to  make formal definitions will be provided in \Cref{sec:def}.}
	\\[.5ex]
	%\label{rule-x}?
	\noindent\Stwo.{\bf~Silent inference from an unreliable source:}\quad
	%\noindent{\bf Silence rule with failures:}\quad 
	Suppose that a process~$j$ is guaranteed to send~$i$ a message at time~$t$ in case some condition~$\varphi$ of interest {\em does not} hold and it is not faulty. 
	If, by time $t+\Delta$, process~$i$ has not received a message that~$j$ sent to~$i$ at time~$t$,
	%If~$i$ does not receive a message sent by~$j$  at~$t$ by time $t+\Delta$, 
	then $i$ can infer that 
	at least one of the following is true:  (i) $\varphi$ holds, or (ii) a failure has occurred. \\[.5ex]
	In particular, if~$j$ sends messages to~$i$ over a reliable communication link, then (ii) reduces to process~$j$ being faulty. 
	Despite the fact that~$i$ may fall short of inferring that~$\varphi$ holds when it does not hear from~$j$, the information that it does obtain may still be very useful.  In many problems having to do with fault-tolerance, the specification depends on whether failures occur.
	Thus, for example, in the original {\em atomic commit} problem \cite{dwork1983inherent,guerraoui2017fast}, deciding to $\abort$ is allowed if one of the initial values is~0, or if a failure occurs. 
	In {\em Byzantine agreement}~\cite{PSL} and  {\em weak agreement}~\cite{LF82}, deciding on a default value is allowed in case of a failure. But this is not true in general. 
	In the popular {\em consensus} problem, for example, the Validity condition is independent of whether failures occur~\cite{Consensus}, thus rendering conditional knowledge as provided by \Stwo\ insufficient. 
	As we will see, there are circumstances under which silence can provide unconditional information even in the face of failures.
	
	For ease of exposition, our investigation will focus on synchronous message-passing systems, in which processes communicate in rounds, and a message sent in a particular round is received by the end of the round. 
	Moreover, we will consider settings in which only processes may fail, and there is an {\em a priori} bound of~$f$ on the number of processes that can fail. 
	One way to infer unconditional information in such settings is captured by:\\[.5ex]
	%\noindent{\bf Full silent inference despite failures:}\quad
	\noindent\Sthree.{\bf~Silent inference with bounded failures:}\quad
	Assume that at most~$f$ processes can fail, and
	%a bound of~$f$ on the number of possible failures, and 
	let~$S$ be a set of at least~$f+1$ processes. 
	Moreover, suppose that every process $j\in S$  is guaranteed to send~$i$ a message no later~than round~$m$  in case~$\varphi$ {\em does not} hold and~$j$ is not faulty. If~$i$ 
	%does not receive a message from any 
	hears from no 
	process in~$S$ by time~$m$, then~$i$ \mbox{can infer that~$\varphi$  holds.} \\[.5ex]
	\Sthree\ is used, for example, by Amdur {\em et al.}~\cite{amdur1992message}, who consider message-efficient solutions to Byzantine agreement. In order to prove to all processes that the original sender completed the first round, their protocol has the sender send messages to a set~$B$ of~$f$ processes in the second round. In the third round, a member of~$B$ sends no messages if it received the sender's second round message (implying that the sender did not fail in the first round),
	and sends messages to everyone otherwise.
	%This implies, in particular, that  the sender did not fail in the first round.
	%A member of~$B$ does send third round messages to everyone in case it did not hear from the sender in the second round.
	The set~$S=\{\mathrm{sender}\}\cup B$ satisfies the above rule, and allows all processes to infer the fact~$\varphi=$ {\em ``the sender completed the first round successfully''} based on silence in the third round.  
	
	The rules \Sone-\Sthree\   provide sufficient conditions for silent information transfer. As we will show, 
	both \Stwo\/ and \Sthree\/ can be used to facilitate the design and analysis of efficient fault-tolerant protocols. Moreover, \Sthree\  is in a precise sense also a necessary condition for inferring particular facts of interest from silence. This, in turn, 
	% and some specific facts, such as the identity of initial values, the above becomes also necessary. 
	will allow us to obtain simple and intuitive proofs of lower bounds and matching optimal protocols. 

%We are now ready to formalize inference from silence in the presence of failures. 

\subsection{Silent Broadcast}
\label{sec:silent broadcast}
There are several ways to formally state the rules \Sone---\Sthree; we will state them in a form that will be most convenient to use in the sequel. It is natural to capture the fact that process~$i$ infers that~$\varphi$ holds using the knowledge terminology. Moreover, in a synchronous setting, not sending a message at a given point is a choice just like performing an action. In \Sone, process~$j$ is assumed not to send a message in case~$\varphi$ holds. Thus, by the knowledge of preconditions principle (\cref{thm:kop}), $j$ must know~$\varphi$ to make this choice. We say that a protocol guarantees a property if this property is true in all runs of the protocol. We can thus formalize~\Sone\ as follows: 
\begin{lemma}
	\label{lem:sBC1}
	Let~$R$ be a synchronous system with reliable communication in which process~$j$ reliably follows a protocol~$\Prot$.
	Let~$\varphi$ be a fact about runs in~$R$ and assume that $\Prot$ guarantees that if $\neg K_j\varphi$ at time~$m-1$ then~$j$ sends a message to~$i$ in round~$m$. 
	If~$i$ does not receive a message from~$j$ in round~$m$ of a run~$r\in R$ then~$K_i\varphi$ holds at time~$m$ in~$r$.
\end{lemma}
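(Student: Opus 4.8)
The plan is to unwind \cref{def:know} and argue about an arbitrary indistinguishable run. Fix a run $r\in R$ in which~$i$ receives no message from~$j$ in round~$m$; to establish $K_i\varphi$ at time~$m$ in~$r$ it suffices, by \cref{def:know}, to show that $\varphi$ holds in every run $r'\in R$ with $r'\ind{i}{m}r$. So fix such an $r'$. The first step is purely bookkeeping: ``$i$ receives no message from~$j$ in round~$m$'' is a property of~$i$'s time-$m$ local state, and since $r'_i(m)=r_i(m)$, process~$i$ receives no message from~$j$ in round~$m$ of~$r'$ as well.

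The second step pushes this back to the sender. Because communication in~$R$ is reliable and~$j$ follows~$\Prot$ in every run of~$R$ (in particular~$j$ does not crash and is active in round~$m$ of~$r'$), any message~$j$ sends to~$i$ in round~$m$ is delivered to~$i$ by time~$m$. Hence, since~$i$ receives nothing from~$j$ in round~$m$ of~$r'$, process~$j$ sends no message to~$i$ in round~$m$ of~$r'$. This is the one place the reliability hypothesis is essential: without it, the silence could be explained by a lost message or by~$j$ crashing, and nothing about~$\varphi$ could be inferred --- exactly the gap that \Stwo\ and \Sthree\ are designed to fill.

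The third step is a one-line application of the assumed guarantee. Since~$j$ follows~$\Prot$ in~$r'$ and sends no message to~$i$ in round~$m$, the contrapositive of ``if $\neg K_j\varphi$ at time~$m-1$ then~$j$ sends a message to~$i$ in round~$m$'' forces $K_j\varphi$ to hold at time~$m-1$ in~$r'$. By the Knowledge property (the immediate consequence of \cref{def:know} noted in the text), $\varphi$ is then true in~$r'$. As $r'$ was an arbitrary run with $r'\ind{i}{m}r$, \cref{def:know} gives $K_i\varphi$ at time~$m$ in~$r$.

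I do not expect a real obstacle here; the proof is short once the definitions are in place. The only points requiring care are the two translations in the first two steps --- from ``$r'\ind{i}{m}r$'' to ``$i$ did not receive from~$j$ in round~$m$ of~$r'$,'' and from there, via reliability and~$j$'s non-failure, to ``$j$ did not send to~$i$ in round~$m$ of~$r'$.'' Note that \cref{thm:kop} is not actually invoked in the proof: the hypothesis of the lemma is already stated in terms of $K_j\varphi$, and KoP only serves (in the surrounding discussion) to justify why the knowledge-phrased guarantee is the right formalization of~\Sone.
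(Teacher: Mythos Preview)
Your proof is correct and follows essentially the same approach as the paper's own proof: fix an arbitrary $r'\ind{i}{m}r$, use indistinguishability to transfer ``no message received'' to~$r'$, use reliability and~$j$'s correctness to conclude ``no message sent'' in~$r'$, apply the protocol guarantee to obtain $K_j\varphi$ at time~$m-1$ in~$r'$, and invoke the Knowledge property. Your additional remarks about where reliability is essential and why \cref{thm:kop} is not needed are accurate and do not alter the argument.
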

\begin{proof}
Fix $r\in R$ and assume that~$i$ does not receive a message from~$j$ in round~$m$ of~$r$. Moreover, choose an arbitrary run $r'\in R$ such that $r'\ind{i}{m}r$. By \Cref{def:know} it suffices to show that $r'$ satisfies~$\varphi$. 
%By assumption, $i$ does not receive a message from~$j$ in round~$m$ of~$r$. 
% any message sent by~$j$ in round~$m$ is delivered by time~$m$.
By assumption, $i$ has the same local state at time~$m$ in both runs, and so it does not receive a message from~$j$ in round~$m$ of~$r'$. Since~$R$ is synchronous and communication is assumed to be reliable, it follows that~$j$ did not send a message to~$i$ in round~$m$ of~$r'$. Process~$j$ is assumed to reliably follow the protocol~$\Prot$. Hence,  by assumption, $K_j\varphi$ must be true at time~$m-1$ in~$r'$, or else it would have sent a message to~$i$. By the knowledge property it follows that~$\varphi$ holds at time~$m-1$ in~$r'$. Moreover, since~$\varphi$ is a fact about runs, $r'$ satisfies~$\varphi$, and we are done.
\end{proof}

\Stwo\ considers information gained by silence in systems in which processes can fail, such as the context~$\modelf$. In this case, $i$ might not receive a message from~$j$ due to a failure, 
 rather than because of information that~$j$ has at the beginning of the round. 
 We thus obtain a weaker variant of \Cref{lem:sBC1}, formalizing  \Stwo:

%\green{Thus, rather than relating the silence of a process to a property of the run, we will relate it to what the intended sender knows about the run. }
%Using our model and the knowledge formalism, a useful formalization of \Sone\ is: 

\begin{lemma}
%\begin{restatable}
	\label{lem:sBC2}
	Let~$\varphi$ be a fact about runs in the system~$\RP=R(\Prot,\modelf)$, and fix $i,j\in \Proc$ and a time $m> 0$. 
	Moreover, assume that $\Prot$ guarantees that if $\neg K_j\varphi$ at time~$m-1$ then~$j$ sends a message to~$i$ in round~$m$.
	If~$i$ does not receive a round~$m$ message from~$j$ in a run~$r\in\RP$ then~$K_i(\varphi \vee \text{$j$ is faulty})$ holds at time~$m$ in~$r$.
\end{lemma}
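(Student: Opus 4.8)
The plan is to follow the structure of the proof of \Cref{lem:sBC1}, adapting it to account for the possibility of failures. Fix a run $r\in\RP$ in which $i$ does not receive a round~$m$ message from~$j$, and pick an arbitrary run $r'\in\RP$ with $r'\ind{i}{m}r$. By \Cref{def:know} it suffices to show that $r'$ satisfies $\varphi\vee\text{($j$ is faulty)}$, i.e., that if $j$ is correct in $r'$ then $\varphi$ holds in $r'$.

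First I would observe that since $i$'s local state at time~$m$ is the same in $r$ and $r'$, and this state records the messages $i$ has received, process~$i$ does not receive a round~$m$ message from~$j$ in $r'$ either. Now I split into two cases. If $j$ is faulty in $r'$, then the disjunct ``$j$ is faulty'' is satisfied and we are done. If $j$ is correct in $r'$, then $j$ follows $\Prot$ throughout $r'$; in particular $j$ does not crash during round~$m$, so it sends in round~$m$ exactly the messages prescribed by $\Prot$ given its time-$(m-1)$ local state. Since $i$ receives no round~$m$ message from $j$ (and in the synchronous model any message actually sent in round~$m$ is delivered by time~$m$), $j$ must not have sent a message to $i$ in round~$m$. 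By the contrapositive of the assumed guarantee on $\Prot$, this means $\neg(\neg K_j\varphi)$ held at time~$m-1$ in $r'$, i.e., $K_j\varphi$ held at time~$m-1$ in $r'$. By the Knowledge property (the immediate consequence of \Cref{def:know} noted in the text), $\varphi$ is true at time~$m-1$ in $r'$, and since $\varphi$ is a fact about the run, $r'$ satisfies $\varphi$. In either case $r'$ satisfies $\varphi\vee\text{($j$ is faulty)}$, which completes the argument.

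The one subtlety to handle carefully — and the closest thing to an obstacle — is the treatment of $j$'s crashing round. Unlike in \Cref{lem:sBC1}, here a process may crash precisely in round~$m$, and during that round the model allows it to send messages to an arbitrary subset of the prescribed recipients. So I must not argue ``$j$ did not send to $i$ in round~$m$ $\Rightarrow$ $\neg K_j\varphi$ held'' unconditionally: that implication only holds when $j$ actually executes round~$m$ faithfully, which is exactly the case ``$j$ correct in $r'$'' (crashing in round~$m$ already makes $j$ faulty, landing us in the first case). Making this case distinction up front is what keeps the proof clean. Everything else is a routine adaptation of the earlier lemma, using that $\RP$ is synchronous and that $\neg K_j\varphi$ at time~$m-1$ forces a round~$m$ message to~$i$.
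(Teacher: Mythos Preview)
Your proof is correct and follows essentially the same approach as the paper's: fix an arbitrary indistinguishable run~$r'$, transfer the ``no message received'' observation to~$r'$, and split into the cases where~$j$ is faulty versus~$j$ faithfully executes round~$m$, invoking the protocol guarantee and the Knowledge property in the latter case. Your explicit discussion of the crashing-round subtlety is a welcome clarification but does not depart from the paper's argument.
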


%\begin{proof}
%(Sketch.%
%\footnote{A full proof of this lemma, appears in \cref{sec:proof-2}.}) The proof is very similar to that of \Cref{lem:sBC1}. Choose~$r'\ind{i}{m}r$ as in the original proof. 
%Since~$j$ may fail in~$\modelf$, process~$i$ does not hear from~$j$ in round~$m$ of~$r'$ either because $j$ fails before sending~$i$ a message, or because $K_j\varphi$ holds, and hence so does~$\varphi$.   
%In the former case $j$ is faulty in~$r'$, and in the latter case~$\varphi$ holds. Thus, $(\varphi \vee \text{$j$ is faulty})$ is true in~$r'$, and using \Cref{def:know} we have that $K_i(\varphi \vee \text{$j$ is faulty})$ is true at time~$m$ in~$r$.
%\end{proof}

\begin{proof}
	The proof is very similar to that of \Cref{lem:sBC1}.
	Fix $r\in R$ and assume that~$i$ does not receive a message from~$j$ in round~$m$ of~$r$. Moreover, choose an arbitrary run $r'\in R$ such that $r'\ind{i}{m}r$. By \Cref{def:know} we need to show that $r'$ satisfies~$(\varphi \vee \text{$j$ is faulty})$. 
	By assumption, $i$ has the same local state at time~$m$ in both runs, and so it does not receive a message from~$j$ in round~$m$ of~$r'$.
	However, since in~$\RP$ the protocol~$\Prot$ is applied in the unreliable context~$\modelf$, there are two possible reasons for this. One is that~$j$ fails in~$r'$ before it has a chance to send~$i$ a message in round~$m$, in which case $j$ is faulty in~$r'$. The other one is that~$j$ reliably follows~$\Prot$ at time~$m-1$, in which case by assumption, $K_j\varphi$ must be true at time~$m-1$ in~$r'$ (implying~$\varphi$ is true in~$r'$), or else it would have sent a message to~$i$.
	It follows that $(\varphi \vee \text{$j$ is faulty})$ must be true in~$r'$, and we are done.
\end{proof}

Clearly, the knowledge that is obtained by silence in~$\modelf\!$, as captured by \Cref{lem:sBC2}, is 
contingent, and hence quite a bit weaker than what can be guaranteed in the more reliable setting of~\Cref{lem:sBC1}. As mentioned in the Introduction, there are problems in which such weaker knowledge may be sufficient. A closer inspection shows, however, that it is possible to obtain unqualified knowledge even in fault-tolerant settings such as the context~$\modelf\!$. The key to this is the fact that even when any single process may be faulty, a process is often guaranteed that a set of processes must contain a nonfaulty process. In this case, we can use \Cref{lem:sBC2} to obtain the following formalization of~\Sthree: 

\begin{corollary}
\label{cor:s>f}
Let $\RP=R(\Prot,\modelf)$, let~$\varphi$ be a fact about runs in~$\RP$, fix a process $i\in \Proc$ and 
a set~$|S|>f$  of processes.
Moreover, assume that for all $j\in S$ the protocol~$\cal P$ guarantees that 
%(in all runs of~$\RP$) 
if~$\neg K_j\varphi$ at time~\mbox{$m-1\ge 0$} then~$j$ sends a message to~$i$ in round~$m$.	
If, in some run $r\in\RP$, process~$i$ does not receive a round~$m$ message from any process in~$S$, then~$K_i\varphi$ holds at time~$m$ in~$r$.
\end{corollary}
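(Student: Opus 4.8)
The plan is to derive \Cref{cor:s>f} directly from \Cref{lem:sBC2} by exploiting the pigeonhole-style observation that a set $S$ with $|S|>f$ must contain at least one process that is nonfaulty in any given run of $\RP$. The structure is a short deduction, not an induction or a fresh indistinguishability argument: apply the already-proven \Cref{lem:sBC2} once for each $j\in S$, conjoin the resulting knowledge facts, and simplify.

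First I would fix an arbitrary run $r\in\RP$ in which $i$ receives no round-$m$ message from any process in~$S$. For each $j\in S$, the hypothesis of \Cref{cor:s>f} is exactly the hypothesis of \Cref{lem:sBC2} (namely, $\Prot$ guarantees that $\neg K_j\varphi$ at time $m-1$ forces $j$ to send $i$ a round-$m$ message), and $i$ indeed receives no round-$m$ message from $j$ in $r$. Hence \Cref{lem:sBC2} yields $K_i(\varphi \vee \text{$j$ is faulty})$ at time~$m$ in~$r$, for every $j\in S$. Since knowledge is closed under conjunction (immediate from \Cref{def:know}: if each conjunct holds at all runs $r'\ind{i}{m}r$, so does the conjunction), we get that $K_i\big(\bigwedge_{j\in S}(\varphi \vee \text{$j$ is faulty})\big)$ holds at time~$m$ in~$r$. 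Distributing the disjunction over the conjunction, the fact inside is logically equivalent to $\varphi \vee \big(\bigwedge_{j\in S}\text{$j$ is faulty}\big)$, i.e. ``$\varphi$ holds, or every process in $S$ is faulty.''

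Next I would discharge the second disjunct. Because $|S|>f$ and at most $f$ processes fail in any run of $\RP=R(\Prot,\modelf)$, the fact ``every process in $S$ is faulty'' is false in every run of $\RP$; equivalently, $\varphi \vee \big(\bigwedge_{j\in S}\text{$j$ is faulty}\big)$ is valid in $\RP$ iff $\varphi$ is, and more to the point the two facts are true of exactly the same runs of $\RP$. Since knowledge in \Cref{def:know} quantifies only over runs $r'\in\RP$, and at every such $r'$ the fact ``$\varphi \vee (\text{all of $S$ faulty})$'' holds exactly when $\varphi$ holds, $K_i\big(\varphi \vee (\bigwedge_{j\in S}\text{$j$ faulty})\big)$ and $K_i\varphi$ are the same assertion in this system. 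Therefore $K_i\varphi$ holds at time~$m$ in~$r$, as required.

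The only genuinely delicate point — and the one I would be careful to state explicitly rather than wave at — is the last step: that one may replace the known fact by a logically equivalent one (equivalent over the runs of $\RP$, using the failure bound) without affecting what $i$ knows. This is sound because \Cref{def:know} defines $K_i\psi$ purely in terms of the set of runs of $R$ satisfying $\psi$, so facts with the same truth-set in $\RP$ are interchangeable inside $K_i$. I do not anticipate any obstacle beyond making that substitution rigorous and noting clearly where the hypothesis $|S|>f$ together with the at-most-$f$-failures guarantee of $\modelf$ is used; no indistinguishability chasing or new construction is needed, since all of that work is already packaged inside \Cref{lem:sBC2}.
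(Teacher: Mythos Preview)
Your proposal is correct and follows essentially the same approach as the paper: both apply \Cref{lem:sBC2} to the members of~$S$ and then use the pigeonhole observation that $|S|>f$ forces some $j\in S$ to be correct in any run of $\RP$. The only cosmetic difference is that the paper unfolds $K_i\varphi$ by fixing an arbitrary $r'\ind{i}{m}r$ and picking a correct witness $h\in S$ in that~$r'$, whereas you stay at the level of the $K_i$ operator, conjoin, and simplify; the underlying argument is identical.
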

\begin{proof}
Under the assumptions of the claim, choose a run $r'\ind{i}{m}r$ in~$\RP$. It suffices to show that~$\varphi$ is true in~$r'$. 
Since $|S|>f$  there must be a process $h\in S$ that is correct in~$r'$. \cref{lem:sBC2} implies that $(\varphi\vee \text{$h$ is faulty})$ is true in~$r'$. But since~$h$ is not faulty in~$r'$ we obtain that~$\varphi$ holds in~$r'$, as required.
\end{proof}

\cref{cor:s>f} shows how a set~$|S|>f$ of processes can inform a specific process~$i$ of a fact~$\varphi$ by not sending messages to~$i$. In fact, with no extra cost the set~$S$ can be silent to all processes in a round~$m$ of interest, thereby informing everyone that~$\varphi$ holds. We call this a \defemph{silent broadcast}. 

%\hfill\begin{verbatim}

\subsection{The Silent Choir Theorem} %tructure of Silent Inference}
\label{sec:structure}

The rule \Sthree\ and its formalization in \cref{cor:s>f} show that a protocol can orchestrate information transfer through silence by first informing (perhaps directly) a large enough set of processes about the fact of interest. We now show that, in a precise sense, this is indeed necessary. Intuitively, transferring information through silence requires enough processes to be mum to ensure that at least one of them is correct. We will call this the {\em Silent Choir Theorem}. The direct information transfer in the reliable case (\Sone) is a particular instance of this, because a choir of one suffices when there are no failures.
While the sufficient conditions of \cref{sec:silent broadcast} apply rather broadly to general facts~$\varphi$ about runs, to prove necessary conditions we need to restrict attention to certain {\em primitive} facts. This is because  knowledge of a composite fact such as conjunction $\varphi=\psi_1\wedge\psi_2$, for example, can be obtained by separately learning about each of its components. 
Intuitively, a primitive fact is a fact that is local to a process~$j$ and whose truth is not determined by the local states of the other processes. 
For ease of exposition, our analysis will exclusively consider a single primitive fact, the fact $v_j=1$ stating that~$j$'s local value is~1. Knowledge about initial values plays an important role in many problems, including consensus \cite{Consensus,DM} and AC.

Our treatment will make use of message chains. 
Given processes $i,i'\in\Proc$ and times $m,m'\ge 0$, we say that \defemph{there is a message chain from $(i,m)$ to $(i',m')$ in a run~$r$}, and write $(i,m)\mcc{r}(i',m')$, if either (a) $i=i'$ and $m\le m'$, or if (b)
there exist processes $i=j_0,j_1,\ldots,j_\ell=i'$ and times $m\le t_0<t_1,\ldots,<t_\ell\le m'$ such that 
for all $0\le k<\ell$, 
a message  from~$j_k$ to~$j_{k+1}$ is sent in~$r$ at time~$t_k$. Such a message chain is said to have \defemph{length}~$\ell$. (Notice that $m\le t_0$, so the first message in the chain may be sent after time~$m$.)  We say that there is a message chain from~$i$ to~$j$ in~$r$ if $(i,0)\mccr(j,m)$ for some time $m\ge 0$.

Recall that in~$\modelf$ process~$j$'s initial value is not known to processes $i\ne j$ at the start. Clearly, one way in which process~$i$ can come to know that $v_j=1$ is by direct communication, i.e., via a message chain from~$j$ to~$i$.
Alternatively,  $i$ can learn this fact using silence. 
The ways by which a process~$i$ can come to know that $v_j=1$ in~$\modelf$ are characterized by the following theorem. 
\begin{theorem}[Silent Choir]\label{thm:silent}\label{thm:choir}
	Let~$r$ be a run of a protocol~$\cal P$  in~$\modelf\!$. Denote by~$\bbF^r$  the set of faulty processes in~$r$,  and define $\Sr_j(t)\eqdef\{h\in\Proc\!:(j,0)\mccr(h,t)\}$. If $K_i(v_j=1)$ holds at time~$m$ in~$r$ and~$r$ does not contain a message chain from~$j$ to~$i$, then $m>0$ and \mbox{$|\bbF^r\cup \Sr_j(m-1)|>f$}. 
\end{theorem}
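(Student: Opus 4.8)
The plan is to assume the hypotheses --- $K_i(v_j=1)$ holds at time $m$ in $r$, and $r$ contains no message chain from $j$ to $i$ --- and to derive both conclusions by exhibiting runs that are indistinguishable from $r$ to $i$ at time $m$ but in which $v_j\neq 1$, which would violate \cref{def:know}. Note first that ``$r$ contains no message chain from $j$ to $i$'' means exactly $i\notin\Sr_j(t)$ for all $t$; since the length-$0$ chain connects $(j,0)$ to $(j,0)$, this already forces $i\neq j$. For $m>0$: the Knowledge property gives $v_j=1$ in $r$, so the failure-free run $\hat r$ of ${\cal P}$ in $\modelf$ whose initial values agree with $r$ except $v_j=0$ satisfies $\hat r_i(0)=(v_i,0,\{\})=r_i(0)$, hence $\hat r\ind{i}{0}r$ with $v_j\neq 1$ in $\hat r$; so $K_i(v_j=1)$ cannot hold at time $0$, i.e.\ $m>0$.

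For the second conclusion, suppose for contradiction that $|\bbF^r\cup\Sr_j(m-1)|\le f$ and build a run $r'$ of ${\cal P}$ as follows: its initial values agree with $r$ except $v_j=0$; process $j$ crashes in round~$1$ sending no messages; each $h\in\Sr_j(m-1)\setminus\{j\}$ crashes in round $c'_h:=\min(\tau_h,c_h)$, where $\tau_h:=\min\{t:h\in\Sr_j(t)\}$ (necessarily $\ge 1$) and $c_h$ is $h$'s crash round in $r$ ($+\infty$ if $h$ is correct in $r$), with $h$ sending, in each round $t\le c'_h$, exactly the messages it sends in round $t$ of $r$ --- so $h$ stays ``normal'' through its entry round $\tau_h$ and only then goes dark; each $h\in\bbF^r\setminus\Sr_j(m-1)$ fails exactly as in $r$; all other processes are correct. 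By construction the faulty set of $r'$ is $\bbF^r\cup\Sr_j(m-1)$, of size $\le f$, so $r'$ is a legal run of ${\cal P}$ in $\modelf$, and $v_j=0$ in $r'$.

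The heart of the argument is to show, by induction on $t=0,1,\dots,m$, that \emph{every process $h\notin\Sr_j(t)$ is in the same local state at time $t$ in $r$ and in $r'$.} The base case uses $\Sr_j(0)=\{j\}$ and the fact that only $v_j$ changed. For the step, fix $h\notin\Sr_j(t+1)$; by monotonicity of $\Sr_j(\cdot)$ the hypothesis applies at time $t$, giving $h$ the same time-$t$ state (if it is $\bot$ we are done, so assume $h$ is alive at $t$). The step rests on two facts: (a) by the combinatorial definition of message chains, every sender of a round-$(t+1)$ message that $h$ receives in $r$ lies outside $\Sr_j(t)$ --- otherwise $h$ would be in $\Sr_j(t+1)$; and (b) in this model the set of messages a process sends in round $t+1$ is a function of its time-$t$ state alone (chosen before that round's incoming messages arrive), so any process outside $\Sr_j(t)$ --- even one entering the cone precisely in round $t+1$ --- emits in round $t+1$ of $r'$ exactly what it emits in $r$ (this is why cone processes are kept normal through their entry round). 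Facts (a), (b) and the inductive hypothesis together show that $h$ sends and receives exactly the same round-$(t+1)$ messages in both runs, so its time-$(t+1)$ state is the same. Applying the invariant at $t=m$ to $h=i$ --- legitimate since $i\notin\Sr_j(m)$ --- yields $r'_i(m)=r_i(m)$, i.e.\ $r'\ind{i}{m}r$; with $v_j=0$ in $r'$ this contradicts $K_i(v_j=1)$ at time $m$ in $r$, so $|\bbF^r\cup\Sr_j(m-1)|>f$.

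I expect the main obstacle to be making the construction of $r'$ airtight, so that no process outside the cone ever gains or loses a message. Two points carry this. First, the crash round of a cone process in $r'$ must be exactly $\min(\tau_h,c_h)$, and one must verify this is a legal crash whose message pattern replays that of $r$ up to that round (including, when $\tau_h<c_h$, sending a \emph{full} round-$\tau_h$ message set before crashing). Second --- and this is the essential use of the model --- the send-before-receive ordering within a round is what lets a process entering the cone in round $t+1$ still transmit its ``uncontaminated'' round-$(t+1)$ messages; without it such a process would already reflect $v_j$ when it sends, and the invariant would collapse. Once these are settled, the residual work --- faulty processes outside the cone replaying their $r$-behaviour consistently, the two-way equality of the message multiset delivered to $h$ in round $t+1$, and the count of $\bbF^{r'}$ --- is routine.
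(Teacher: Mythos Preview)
Your proof is correct and follows essentially the same approach as the paper's: both argue by contraposition, constructing an indistinguishable run $r'$ with $v_j=0$ by crashing every process in $\Sr_j(m-1)$ so that the ``cone'' of~$j$ never leaks to~$i$. The only differences are cosmetic---you crash a cone process in its entry round $\tau_h$ after it sends its full message set (and explicitly take $\min(\tau_h,c_h)$ to handle already-faulty cone members), whereas the paper crashes it in round $k_h{+}1$ sending nothing; and your inductive invariant is phrased over the time-varying set $\Sr_j(t)$, which is equivalent to the paper's condition ``$z\notin\Sr_j(m-1)$ or $\ell<k_z$.''
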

\begin{proof}
	Let $i$, $j$, $r$ and~$m$ satisfy the assumptions. If there is a message chain from~$j$ to~$i$ in~$r$, then we are done. So assume that no such message chain exists. Let $r'$ be a run in which $v_i$ is the same as in~$r$ and $v_j=0$. Clearly, $r\ind{i}{0}r'$ and so $\neg K_i(v_j=1)$ at time~0 in~$r$. It follows that $m>0$. 
	Roughly speaking, $\Sr_j(t)$ is the set of processes that could have learned~$v_j$ by time~$t$, via a message chain from~$j$. 
	We will establish that \mbox{$|\bbF^r\cup \Sr_j(m-1)|>f$} by proving the contraposition; i.e., that if $|\bbF^r\cup \Sr_j(m-1)|\le f$ then~$i$ does not know that $v_j=1$ at time~$m$. 
	Intuitively, the claim will follow because if $|\bbF^r\cup \Sr_j(m-1)|\le f$ then~$i$ would consider it possible at time~$m$ that~$v_j=0$ and~$j$ crashed initially, and that everyone who may have noticed its value crashed without informing anyone. 
	
	More formally, for every $h\in \Sr_j(m-1)$, define $k_h$ to be the round in~$r$ in which~$h$ first receives a message completing a message chain from~$j$. 
	Consider a run~$r'$ of~$\cal P$ in which $v_j=0$ and all other initial values are the same as in~$r$. Every process $h\in \Sr_j(m-1)$ crashes in round $k_h+1$ of~$r'$ without sending any messages. (In particular, $j$ crashes initially.) 
	%All other processes behave exactly as they do in~$r$ up to round~$m$. 
	The faulty processes in~$r'$ are those in $\bbF^r\cup \Sr_j(m-1)$. By assumption, their number does not exceed~$f$, and so~$r'$ is a run of~$\cal P$ in~$\modelf$. 
	
	We now prove by induction on~$\ell$ in the range $0\le\ell\le m-1$ that for all~$z\in\Proc$, if $z\notin \Sr_j(m-1)$ or $z\in \Sr_j(m-1)$ but $\ell<k_z$, then $r_z(\ell)=r'_z(\ell)$,
	and $z$ sends the same messages and performs the same actions in round~$\ell+1$ of both runs.  Recall that $j\in \Sr_j(m-1)$ and $k_j=0$. 
	For~$\ell=0$ we have that $r_z(0)=r'_z(0)$ for all $z\ne j$ since by choice of~$r'$  all initial values other than $v_j$ are the same in both runs. Since~$\cal P$ is deterministic, all processes other than~$j$ send the same messages and perform the same actions in round~1. 
	Let $0< \ell \le m-1$, and assume that the inductive claim holds for $\ell-1$.  Suppose that either $z\notin \Sr_j(m-1)$ or both $z\in \Sr_j(m-1)$ and $\ell<k_z$. Then by definition of~$\Sr_j(m-1)$ and~$k_z$ it follows that in round~$\ell$ of~$r$ process~$z$ receives no messages from a process $y\in \Sr_j(m-1)$ for which $k_y<\ell-1$. 
	Consequently, $z$ receives the same set of messages in round~$\ell$ of both runs.  Since $r_z(\ell-1)=r'_z(\ell-1)$ by the inductive hypothesis, it follows that $r_z(\ell)=r'_z(\ell)$. Again, since~$\cal P$ is deterministic, $z$ will send the same messages and perform the same actions in round~$\ell+1$ of both runs. This completes the inductive argument.

	The inductive claim implies, in particular, that all $z\notin \Sr_j(m-1)$ send the same messages in round~$m$ of both runs~$r$ and~$r'$. 
	Notice that $i\notin \Sr_j(m-1)$ by assumption, and~$i$ receives no messages from processes in~$\Sr_j(m-1)$ in round~$m$ of the original run~$r$. It follows that~$i$ receives the same set of messages in round~$m$ of both runs. Since $i\notin \Sr_j(m-1)$ we also have that $r_i(m-1)=r'_i(m-1)$, and so we obtain that $r_i(m)=r'_i(m)$. In other words, the two runs are indistinguishable to~$i$ at time~$m$. By construction, $v_j\ne 1$ in~$r'$, and so $K_i(v_j=1)$ does {\em not} hold at time~$m$ in~$r$. The claim follows. 
\end{proof}

By \cref{thm:silent}, any run in which a process learns about initial values without an explicit message chain must contain a silent choir. Indeed, it is possible to identify a silent choir in the message-optimal AC protocol of ~\cite{guerraoui2017fast}, as well as in message-optimal protocols for Byzantine agreementf~\cite{amdur1992message,hadzilacos1993message} and for failure discovery~\cite{FD}.
%\gnew{
%	The {\em Silent Choir Theorem} helps in the design of efficient protocols as we shall see in~\cref{sec:AC} for AC.
%	Its communication pattern also appears when analyzing the message optimal Byzantine agreement protocols of~\cite{amdur1992message,hadzilacos1993message} (described briefly after \Sthree).
%	}

\section{Efficient~Protocols~for~Atomic Commitment}
\label{sec:Applying}
\label{sec:AC}
Guerraoui and Wang~\cite{guerraoui2017fast} present state-of-the-art protocols for AC optimized for nice runs in a variety of models and point out that there is a tradeoff between time and messages in their protocols. Thus, for example, for $\modelf$ they present three different protocols. In nice runs of their most message-efficient protocol,  $n+f-1$ messages are sent, and processes decide after $n+2f$ rounds. In nice runs of their fastest protocol, decisions are obtained within two rounds, and $2n^2-2n$ messages are sent.

%We now set out to use an explicit silence-based analysis in order to precisely characterize the tradeoff between messages and decision times in nice runs of AC protocols. 

\subsection{Lower Bounds on AC in the Common Case}\label{sec:lb}
We are now ready to apply the formalism of~\cref{sec:Silence} to the analysis of  AC protocols that are efficient in the common case. Our first goal is to show that silence must be used in message-optimal AC protocols. To do so, we start by proving a useful combinatorial property relating message chains and message complexity: 
\begin{lemma}
\label{lem:n+k-1}
Let $k>0$, and assume that the run~$r$ contains, for every process~$j\in\Proc$, message chains to at least~$k$ other processes. Then at least $n+k-1$ messages are sent in~$r$. 
\end{lemma}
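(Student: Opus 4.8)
The plan is an induction on $k$, after recasting the statement combinatorially. Identify the messages sent in $r$ with the finite set $M$ of triples $(a,b,t)$ for which a message from $a$ to $b$ is sent at time $t$ in $r$ (at most one such triple per sender--receiver pair per round, so counting triples can only underestimate the number of messages). For $j\in\Proc$, let $R_j=\{h\in\Proc\setminus\{j\}:(j,0)\mccr(h,m)\text{ for some }m\}$; the hypothesis is exactly $|R_j|\ge k$ for all $j$, and the goal is $|M|\ge n+k-1$. Note that nothing here uses that $r$ is a legal execution of any particular protocol --- it is a statement purely about $M$ and the message-chain relation it induces, so I would prove it in that abstract form.

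For the base case $k=1$: every process $j$ has $R_j\ne\emptyset$, so there is a message chain out of $j$; since $h\ne j$ for $h\in R_j$, this chain falls under clause~(b) of the definition and its first link is a message sent by $j$. Hence every process is the sender of at least one message of $M$, and since each message has a unique sender, $|M|\ge n=n+1-1$.

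For the inductive step, let $k\ge 2$ and assume the claim for $k-1$. Since $|R_j|\ge 2$ we have $M\ne\emptyset$; let $t^{*}$ be the maximal time occurring in a triple of $M$, fix any $\mu=(a,b,t^{*})\in M$, and put $M'=M\setminus\{\mu\}$. The crucial observation is that $\mu$ can occur in a message chain only as its last link: chains use strictly increasing times, and no message is sent after $t^{*}$. Consequently, for every $j$, if $h$ is reachable from $j$ using $M$ but not using $M'$, then every $j$-to-$h$ chain ends with $\mu$, which forces $h=b$; thus $R_j$ loses at most one element when $\mu$ is deleted, and $M'$ still provides, from every process, message chains to at least $k-1$ others. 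Applying the induction hypothesis to $M'$ gives $|M'|\ge n+(k-1)-1$, whence $|M|=|M'|+1\ge n+k-1$, completing the induction.

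The step I expect to require the most care is the ``$R_j$ loses at most one element'' claim. It is what dictates choosing $\mu$ to be a \emph{last} message --- deleting an interior message could sever many chains at once --- and it is precisely here that the strictly-increasing-time structure of message chains, as opposed to mere reachability in the communication graph, is essential; a directed cycle already shows that plain graph reachability is too weak to yield the bound. It is also worth recording that the bound is tight (e.g.\ via a directed path with a short ``feedback'' extension appended so the late vertices of the path also reach $k$ others), though tightness is not needed for the lemma itself.
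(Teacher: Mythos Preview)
Your proof is correct, but it takes a genuinely different route from the paper's. The paper argues directly: define the \emph{rank} of a process as the length of the longest message chain starting from it, pick a process~$h$ of minimal rank, and let~$M_h$ be the set of messages lying on chains out of~$h$. Then $|M_h|\ge k$ (since~$h$ reaches~$k$ others), and every $j\ne h$ must send at least one message \emph{not} in~$M_h$, for otherwise prepending the $h$-to-$j$ prefix to $j$'s longest chain would make~$h$'s rank strictly larger than~$j$'s, contradicting minimality. This yields $n-1+|M_h|\ge n+k-1$ in one shot.

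Your induction on~$k$---peeling off a message of maximal timestamp and observing that this can cost each~$R_j$ at most the single element~$b$---is equally valid and arguably more mechanical; the key insight (strictly increasing times force the deleted message to be a terminal link) is exactly what you flagged. The paper's argument, by contrast, is non-inductive and exposes a bit more structure: it tells you \emph{which} $n+k-1$ messages to point to (the chain-tree out of the minimal-rank vertex, plus one spare per other process). Both arguments silently assume only finitely many messages, which is harmless since the bound is vacuous otherwise.
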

\begin{proof}
First notice that every process~$i\in\Proc$ must send at least one message in~$r$, since it has message chains to other processes.
Define the \defemph{rank} of a process $i\in\Proc$ to be the length of the longest message chain from $(i,0)$ in the run~$r$. 
If there is a process with rank $\ge n+k-1$ we are done. Otherwise let~$h$ be a process with minimal rank, and let~$M$ be the set of messages on chains that start at~$h$. Clearly $|M|\ge k$,  because $r$ contains message chains from~$h$ to at least~$k$ other processes. Moreover, every $j\ne h$ must send at least one message that is not in~$M$, since otherwise~$j$'s rank would be strictly smaller than that of~$h$. Thus,  at least $n-1+|M|\,\ge\, n+k-1$ messages must be sent in~$r$, as claimed. 
\end{proof}

The message-optimal AC protocol for $\modelf$ presented in \cite{guerraoui2017fast}, 
which we shall refer to as the {\em GW protocol}, 
%Guerraoui and Wang \cite{guerraoui2017fast} present a message-optimal AC protocol for $\modelf$, called $(n-1+f)$NBAC, which
%$(n-1+f)$NBAC, 
sends $n+f-1$ messages in nice runs.  
%(We shall henceforth refer to it as the {\em GW protocol}).
This can be used to prove
\begin{corollary}\label{cor:necessity}
Message-optimal AC protocols in~$\modelf$ must make use of silence when $f<n-1$. 
\end{corollary}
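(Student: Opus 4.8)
## Proof Proposal

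The plan is to combine the Silent Choir Theorem (\cref{thm:silent}) with the combinatorial bound of \cref{lem:n+k-1} and the fact that the GW protocol establishes $n+f-1$ as the minimum message count in nice runs. The strategy is a contradiction argument: suppose $\cal P$ is a message-optimal AC protocol for $\modelf$ that sends no messages in some nice run $r$ beyond what silence-free information flow would require --- more precisely, suppose that in the nice run of $\cal P$ there is \emph{no} process that learns an initial value via silence (i.e., every process learns all the values $v_j=1$ it needs via explicit message chains). I then want to show that $\cal P$ is forced to send strictly more than $n+f-1$ messages in that run, contradicting message-optimality (since GW achieves $n+f-1$).

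First I would set up the nice run $r$ of $\cal P$ in which all initial values are~$1$ and no failures occur. By \cref{cor:know-1}, every correct process that commits must know $\AllOne$; since $r$ is nice, every process decides, and Commit Validity together with Agreement forces them all to commit (an $\abort$ in a nice run would be inconsistent with some other process's view, or we can argue directly that a nice run must commit because Abort Validity is not satisfiable). Hence every process~$i$ must, at its decision time, know $v_j=1$ for every $j\neq i$. Now I invoke \cref{thm:silent}: fix any ordered pair $(i,j)$ with $i\neq j$. Either $r$ contains a message chain from~$j$ to~$i$, or it does not --- and in the latter case, since $\bbF^r=\emptyset$ (the run is nice), \cref{thm:silent} yields $|\Sr_j(m-1)|>f$, i.e., strictly more than $f$ processes received a message chain from~$j$ by the time $i$ decides. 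The key dichotomy: either \emph{every} process learns \emph{every} value via an explicit chain --- in which case each $j$ has message chains to all $n-1$ other processes and \cref{lem:n+k-1} with $k=n-1$ gives at least $2n-2$ messages --- or \emph{some} value $v_j$ reaches some process only through silence, and then $\Sr_j(m-1)$ already has size $>f$, meaning $j$ has explicit message chains to at least $f+1$ processes, hence $f+1$ other processes; but this must be combined with the constraint that \emph{every} process (not just $j$) needs its value known everywhere.

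The cleaner route, which I would actually carry out, is: since $n>2$ and each process $i$ must know $v_j=1$ for all $n-1$ other $j$ at decision time, for each such $j$ either $i$ is reached by an explicit chain from $j$, or $\Sr_j(m-1)$ has more than $f$ members. In either case, I claim $j$ has a message chain to at least $f+1$ \emph{other} processes: if $i$ learns via silence from $j$, then $|\Sr_j(m-1)|>f$ and (since $j\in\Sr_j$) there are at least $f$ others, and one checks the bound is actually $\geq f+1$ others because $i\notin\Sr_j(m-1)$ by the hypothesis of \cref{thm:silent}, so $\Sr_j(m-1)\setminus\{j\}$ together with $i$'s being outside it forces $|\Sr_j(m-1)\setminus\{j\}|\geq f$ — wait, I need $f+1$ others, so more care is needed here. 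Let me instead argue: if for \emph{some} pair $(i,j)$ the value is learned via silence, then $j$ already reaches $\geq f$ processes other than itself by chains; and separately, since there are $n-1\geq 2$ other processes and message-optimality forces economy, I derive that \emph{every} $j$ reaches $\geq f+1$ others by chains whenever $f<n-1$. Applying \cref{lem:n+k-1} with $k=f+1$ then gives at least $n+f$ messages, contradicting message-optimality ($n+f-1$ suffices). Conversely if \emph{no} pair uses silence, every $j$ reaches all $n-1$ others, and \cref{lem:n+k-1} with $k=n-1$ gives $\geq 2n-2 = n+(n-2) > n+f-1$ when $f<n-1$, again a contradiction. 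Either way, some value must be conveyed via silence, which is the claim.

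The main obstacle I anticipate is the bookkeeping in the ``silence is used somewhere but not enough'' case: showing that a single silent transfer forces the message count above $n+f-1$ requires carefully accounting for which chains are shared. The subtle point is that $\Sr_j(m-1)>f$ gives many chains \emph{out of the single source $j$}, but \cref{lem:n+k-1} wants \emph{every} process to have $k$ outgoing chains. The resolution is that in a nice run every process is also a source whose value must be known everywhere, so the $k=n-1$ bound from \cref{lem:n+k-1} applies \emph{as soon as} no silence is used at all; and if silence \emph{is} used for even one pair, a short separate argument (using that the protocol must still be message-minimal and $f<n-1$) pins the count strictly above $n+f-1$. I expect the authors' proof to streamline this by directly contrasting the explicit-chain requirement with the $n+f-1$ figure, observing that without silence the requirement ``every $j$ has chains to all $n-1$ others'' already costs $2n-2>n+f-1$ whenever $f<n-1$, so any message-optimal protocol is forced off the all-explicit regime.
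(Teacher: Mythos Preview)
Your final paragraph correctly anticipates the paper's proof, and that anticipated argument is complete on its own: assume a message-optimal protocol uses no silence in its nice run~$r$; then by \cref{thm:silent} every process~$i$ learns every $v_j=1$ via an explicit message chain, so every~$j$ has chains to all $n-1$ others; \cref{lem:n+k-1} with $k=n-1$ gives at least $2n-2$ messages, and $2n-2>n+f-1$ whenever $f<n-1$, contradicting message-optimality (GW achieves $n+f-1$). That is exactly the paper's proof.

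The confusion in your middle section is a logical one about the proof structure. You set up a dichotomy---either no silence is used, or some silence is used---and then try to derive a contradiction in \emph{both} branches. But the second branch is not a case requiring contradiction: ``some value is conveyed via silence'' is precisely the conclusion you want. There is no need to argue that ``if silence is used for even one pair, the message count is still above $n+f-1$''; indeed that would be false in general (GW itself uses silence and hits exactly $n+f-1$). The contradiction argument lives entirely in the no-silence branch, and once that branch is ruled out, the other branch \emph{is} the theorem. Strip out the Case~B bookkeeping and your proof is the paper's.
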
 
\begin{proof}
Every process~$i$ commits in a nice run~$r$, and by \cref{cor:know-1} $i$~must know that $v_j=1$, for all~$j$, when it commits. \cref{thm:silent} implies that if silence is not used in~$r$, then there must be a message chain from~$j$ to~$i$. The conditions of \cref{lem:n+k-1} thus hold in~$r$ for $k=n-1$, implying that a nice run must send at least $2n-2>n+f-1$ messages. The claim follows. 
\end{proof}

Recall that the common case in this setting consists of nice runs, i.e., runs where all initial values are~1, and no failures occur. \cref{cor:know-1} captures a basic property of AC protocols: A committing process must know that all initial values are~1.
The Silent Choir Theorem and the KoP can be used to show the following property, which will be key to our lower bound proofs:
\begin{lemma}
\label{lem:know-2}
When a process~$i$  commits in a run of an AC protocol, it must know, for every $j\in\Proc$, that 
there is a message chain from~$j$ to some correct process in the current run.
\end{lemma}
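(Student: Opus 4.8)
The plan is to combine \cref{cor:know-1} with the Silent Choir Theorem (\cref{thm:silent}) and the KoP (\cref{thm:kop}), using the fact that crash failures make $\AllOne$ an ``unstable'' fact in the following sense: a run in which $v_j=1$ but $j$ fails silently is indistinguishable, to any process that never hears from $j$ via a chain, from a run in which $v_j=0$. Concretely, I would first fix a run $r$ in which process~$i$ commits at some time~$m$, fix an arbitrary $j\in\Proc$, and let $\varphi_j$ denote the fact ``there is a message chain from~$j$ to some correct process in the current run.'' The goal, by \cref{thm:kop}, is to show $\varphi_j$ is a necessary condition for~$i$'s commit action; then $K_i\varphi_j$ follows.

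To show $\varphi_j$ is necessary, suppose for contradiction that $i$ commits at time~$m$ in~$r$ but $r\not\models\varphi_j$, i.e., every message chain from~$j$ in~$r$ ends at a faulty process (equivalently $\Sr_j(m') \subseteq \bbF^r$ for all $m'$, and in particular every process that ever learns $v_j$ along a chain is faulty; note $j$ itself, if faulty, is fine, and if $j$ is correct then trivially $(j,0)\mccr(j,m')$ puts a correct process in the chain-closure, contradicting the assumption — so $j$ must be faulty). Now consider two cases. If $r$ contains no message chain from~$j$ to~$i$ at all, then by \cref{cor:know-1} we have $K_i(v_j=1)$ at time~$m$, and \cref{thm:silent} gives $|\bbF^r\cup \Sr_j(m-1)|>f$; but under our assumption $\Sr_j(m-1)\subseteq\bbF^r$, so this set equals $\bbF^r$, forcing $|\bbF^r|>f$, contradicting that $r$ is a run in $\modelf$. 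If instead $r$ does contain a message chain from~$j$ to~$i$, then $i\in\Sr_j(m)$; since $i$ commits at time~$m$ it is, in particular, not crashed at time~$m$, hence $i$ is a correct process on a chain from~$j$, so $\varphi_j$ holds in~$r$ after all — contradicting the assumption. Either way we reach a contradiction, so $\varphi_j$ is a necessary condition for committing, and the lemma follows by \cref{thm:kop}.

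The main obstacle I anticipate is getting the bookkeeping around $\Sr_j$, $\bbF^r$, and the timing of~$m$ exactly right — in particular making sure that ``chain from~$j$ to~$i$'' versus ``chain from~$j$ to some correct process'' are handled in the correct logical order, and that the degenerate cases ($m=0$, $j=i$, $j$ correct) do not slip through. The cleanest way to avoid trouble is probably to prove directly that $\varphi_j$ holds in~$r$: if $r$ has a chain from~$j$ to~$i$ we are immediately done since~$i$ is correct (it decides); otherwise \cref{cor:know-1} plus \cref{thm:silent} give $|\bbF^r\cup\Sr_j(m-1)|>f$, and since $|\bbF^r|\le f$ there exists $h\in\Sr_j(m-1)\setminus\bbF^r$, i.e., a correct process reachable by a chain from~$j$, which is exactly $\varphi_j$. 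Then $K_i\varphi_j$ follows from \cref{thm:kop} once we observe $\varphi_j$ is a fact about the run. I would write it this second way, as it is shorter and sidesteps the case split on whether $j$ is faulty.
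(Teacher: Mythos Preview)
Your overall strategy---reduce via \cref{thm:kop} to showing that $\varphi_j$ holds in every run where a commit occurs, then invoke \cref{cor:know-1} and \cref{thm:silent}---is exactly the paper's. But both versions of your argument have the same gap in the case ``there is a message chain from~$j$ to~$i$.'' You write that~$i$ ``is correct (it decides),'' and in the first version that ``$i$ commits at time~$m$ [so] it is, in particular, not crashed at time~$m$, hence~$i$ is a correct process.'' In the crash model of~$\modelf$, a process is \emph{correct} only if it never crashes; a faulty process behaves correctly in all rounds before its crashing round and may very well commit at time~$m$ and crash in round~$m+1$ or later. So the existence of a chain from~$j$ to~$i$ does not by itself witness~$\varphi_j$, and your case analysis breaks down precisely here. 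Concretely, consider a run in which~$j$ sends only to~$i$ in round~1 and then crashes, $i$ commits at time~$m$, and~$i$ crashes in round~$m+1$: your argument produces no correct endpoint for a chain from~$j$.

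The missing ingredient is \emph{Agreement} (together with Decision). The paper's proof does not try to argue about~$i$ at all in this role: since $f<n$ there is a correct process~$h$; by Decision~$h$ decides, and by Agreement it must decide $\commit$ (because~$i$ did). Now apply \cref{cor:know-1} to~$h$ at its decision time, and run your two-case argument on~$h$ rather than on~$i$: if there is a chain from~$j$ to~$h$ you are done because~$h$ really is correct; otherwise \cref{thm:silent} gives $|\bbF^r\cup\Sr_j(m-1)|>f$, hence some process in $\Sr_j(m-1)$ is correct. With this one change your ``second way'' goes through verbatim.
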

\begin{proof}
Assume that~$i$ commits in the run~$r$ of~$\Prot$. By \cref{thm:kop}, it suffices to show that if~$i$ commits in~$r$ then $r$ contains a message chain from~$j$ to some correct process, for every $j\in\Proc$. Fix~$j\in\Proc$. Since $f<n$, there must be at least one correct process in~$r$; denote it by~$h$. Given that~$i$ commits in~$r$, we have by Agreement that~$h$ commits in~$r$. Let us denote by~$m$ the decision time of~$h$.  By \cref{cor:know-1} we have that $K_h(v_j=1)$ holds when~$h$ commits. If there is a message chain from~$j$ to~$h$ in~$r$ we are done, since~$h$ is the desired correct process. Otherwise, \cref{thm:silent} states that \mbox{$|\bbF^r\cup \Sr_j(m-1)|>f$}.
Thus, $\Sr_j(m-1)$ contains a correct process and there is a message chain from~$j$ to a correct process in~$r$.
\end{proof}

We are now in a position to state and prove a set of lower bounds on the tradeoff between the decision time and number of messages sent in nice runs of AC protocols. 

\begin{theorem}
\label{thm:MT-tradeoff}
Let~$\Prot$ be an AC protocol  for $\modelf$, let $D$ be the number of rounds at the end of which decision is reached in its nice runs, and let~$M$ be the number of messages sent in nice rounds. Then $D$ and~$M$ satisfy the following constraints: \vspace{-3pt}
\begin{enumerate}[itemsep=0pt]
\item[(a)] $M\ge n+f-1$;~~% 
%\quad (Dwork and Skeen \cite{dwork1983inherent}, Guerraoui and Wang \cite{guerraoui2017fast});%
\footnote{We state and prove part (a) for completeness. It was proved for $f=n-1$ by Dwork and Skeen~\cite{dwork1983inherent}, and appears for general~$f$ in Guerraoui and Wang~\cite{guerraoui2017fast}, with some details of the proof omitted.}
\item[(b)] if $f>1$ then $D>1$;
\item[(c)] If $D=2$ then $M\ge fn$; and 
\item[(d)] If $M=n+f-1$ and $f>1$ then $D\ge 3$. 
\end{enumerate}
 \end{theorem}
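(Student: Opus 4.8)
The plan is to derive each of the four parts from the Silent Choir machinery (\cref{thm:silent}, \cref{lem:know-2}) together with the message-counting bound \cref{lem:n+k-1} and the Knowledge of Preconditions principle. For part~(a), observe that in a nice run every process commits, so by \cref{lem:know-2} each $j\in\Proc$ has a message chain to some correct process; since the run is nice, every process is correct, so in fact every $j$ has a chain to at least one other process, i.e.\ $k=1$ in \cref{lem:n+k-1}, which yields $M\ge n+1-1$ — wait, that only gives $M\ge n$. To recover the sharper $n+f-1$ I would argue more carefully using the structure of message chains near the decision time together with the fact that a committing process must rule out the $f$ "initial crash" scenarios; concretely, I expect to show that some process must have in-degree or out-degree contributing $f$ extra messages beyond a spanning structure. (Since the footnote defers this to \cite{dwork1983inherent,guerraoui2017fast}, I would follow their counting argument: a committing process at any round must know no failure occurred, and blocking the $f$ distinct single-crash runs forces $f-1$ messages beyond a minimal connected pattern on $n$ nodes.)

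For part~(b): if $f>1$ and $D=1$, every process decides (commits) at the end of round~1. A process~$i$ committing at time~1 must know $\AllOne$, hence by \cref{thm:silent}, for each $j\ne i$ either there is a message chain from~$j$ to~$i$ in one round — meaning $j$ sent directly to $i$ in round~1 — or $|\bbF^r\cup \Sr_j(0)|>f$. But in a nice run $\bbF^r=\emptyset$ and $\Sr_j(0)=\{j\}$, so $|\{j\}|=1\le f$ (no contradiction needed from size alone, but the point is the chain must be direct). So in a nice run every ordered pair must exchange a round-1 message, forcing $\Theta(n^2)$ messages; more to the point, I would exhibit a single crash that is indistinguishable at time~1 from a nice run to some process, making it commit while abort is also forced elsewhere — violating Agreement. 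The cleanest route: with $f>1$, construct a run where a process $p\ne i$ has initial value~$0$ and crashes in round~1 sending to nobody; then $i$'s time-1 state can be made identical to that in a nice run, so $i$ commits, contradicting Commit Validity. The hard part here is the bookkeeping to ensure the crashed process's silence is consistent with $i$'s view while at most $f$ processes fail.

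For part~(c): suppose $D=2$. Fix any process $j$. In a nice run, the correct process~$h$ that commits at time~2 must know $v_j=1$ at time~2 by \cref{cor:know-1}. If there is no direct chain $j\to h$ we need $|\bbF^r\cup\Sr_j(1)|>f$, i.e.\ (nice run, $\bbF^r=\emptyset$) $|\Sr_j(1)|>f$: so within one round, $j$ must inform more than $f$ processes — that is $j$ sends to at least $f$ others in round~1. Alternatively $j$ sends directly to every process that commits at time~2. Either way I would argue each $j$ must be the source of at least $f$ distinct round-1 messages (the "silent choir of size $f+1$ including $j$" must all be mum, which forces $j$ to have messaged them), summing to $\ge fn$ messages. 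The main obstacle is handling the case split between "$j$ directly informs a committing process" and "$j$ seeds a choir"; I expect to show that a committing process at time~2 cannot rely on a choir it didn't directly hear from unless that choir has $>f$ members all reached from $j$ in round~1, and in either sub-case $j$'s round-1 out-degree is $\ge f$.

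For part~(d): assume $M=n+f-1$ and $f>1$; we want $D\ge 3$. By part~(b), $D>1$, so it remains to rule out $D=2$. But part~(c) shows $D=2$ forces $M\ge fn$, and since $f>1$ and $n>2$ we have $fn\ge 2n>n+f-1$ (as $n>f-1+1=f$, indeed $2n-(n+f-1)=n-f+1>0$), a contradiction with $M=n+f-1$. Hence $D\ne 2$, and combined with $D>1$ we get $D\ge 3$. This part is essentially immediate once (b) and (c) are in hand; I anticipate the genuine work is concentrated in part~(c)'s case analysis and in reproducing the tight $n+f-1$ count for part~(a).
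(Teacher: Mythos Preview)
Parts (a), (b), and (c) each have a genuine gap; only (d) goes through as you wrote it.

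For (a), applying \cref{lem:know-2} gives only $k=1$ in \cref{lem:n+k-1}, as you noticed, and the vague ``blocking $f$ single-crash runs'' sketch is not an argument. The fix is to use \cref{thm:silent} directly: if some $i$ has no chain from $j$ in the nice run, then since $i$ commits at some time $m$ with $K_i(v_j=1)$, \cref{thm:silent} forces $|\Sr_j(m-1)|>f$ (as $\bbF^r=\emptyset$), so $j$ has chains to at least $f$ other processes; and if every process has a chain from $j$, then $j$ reaches $n-1\ge f$ others anyway. Now $k=f$ in \cref{lem:n+k-1} gives $M\ge n+f-1$.

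For (b) and (c), the approaches you sketch do not close, and the missing lever in both is a direct use of \cref{lem:know-2} with a run that severs the chain to a correct process. In (b), your Commit-Validity run with $v_p=0$ and $p$ silent is \emph{distinguishable} to $i$ at time~1: you yourself argued that every $j\ne i$ must send to $i$ in round~1 of the nice run, so $i$ notices $p$'s missing message and the indistinguishability cannot be set up. The paper instead keeps all values~1, crashes $j$ in round~1 sending \emph{only} to $i$, and crashes $i$ right after it commits; then $r'\ind{i}{1}r$ but $r'$ has no chain from $j$ to a correct process (two crashes, hence the hypothesis $f>1$), contradicting \cref{lem:know-2}. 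In (c), bounding $j$'s round-1 out-degree via \cref{thm:silent} fails in the ``chain exists'' branch of your case split: $j$ could send to a single relay who forwards to everyone in round~2, so every committing process has a chain from $j$ while $j$ sends only one message, and Silent Choir is never invoked. The paper's argument instead bounds each process $i$'s \emph{total} out-degree: let $T$ be the set of all recipients of $i$'s messages in the nice run; if $|T|<f$, crash each $h\in T$ in round~2 sending only to $i$, and crash $i$ after it decides at time~2. Then $r'\ind{i}{2}r$, there are $|T|+1\le f$ failures, and $r'$ contains no chain from $i$ itself to a correct process, again contradicting \cref{lem:know-2}. Hence every process sends to at least $f$ others, and $M\ge fn$.
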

\begin{proof}
Throughout the proof, $r$ will denote a nice run of an AC protocol~$\Prot$. To prove part~(a), it suffices, by \cref{lem:n+k-1}, to prove that
 the run~$r$ must contain, for every process~$j\in\Proc$, message chains from~$j$ to at least~$f$ other processes. 
Fix a process~$j$. If $r$ contains message chains from~$j$ to all other processes then we are done, since $f\le n-1$. Otherwise, let~$i$ be a process to whom there is no message chain from~$j$ in~$r$. Since~$r$ is nice, all processes commit in~$r$. Denote by~$m$ the time at which~$i$ commits. Given that there is no message chain from~$j$ to~$i$ in~$r$, \cref{thm:silent} implies that \mbox{$|\bbF^r\cup \Sr_j(m-1)|>f$}, where 
$\Sr_j(m-1)$ is the set of processes to whom a message chain from~$j$ is completed by time~$m-1$ in~$r$. Since~$r$ is nice we have that $\bbF^r=\emptyset$, and so $|\Sr_j(m-1)|>f$. Finally, as $j\in|\Sr_j(m-1)|$, there must be message chains from~$j$ to at least~$f$ other processes in~$r$. 
For part~(b), suppose that~$f>1$ and that~$i\in\Proc$ commits at time~1 in~$r$, and choose a process $j\ne i$. Consider the run~$r'$ of~$\Prot$ in which everyone starts with~1, and only~$i$ and~$j$ fail. Moreover, $j$ fails in the first round without sending messages to any process other than~$i$, while~$i$ crashes after deciding to commit, and sends no messages after the first round. The run~$r'$ is a run of~$\Prot$ in~$\modelf$, since $f>1$. The run~$r'$ contains no message chain from~$j$ to a correct process. Recall that~$i$ commits at time~1 in~$r$. 
By construction, $r'\ind{i}{1}r$, and hence, when~$i$ commits in the nice run~$r$, it does not know that the run contains a message chain from~$j$ to a correct process, contradicting \cref{lem:know-2}.
For part (c) it suffices to show that if $D=2$ then every process must send at least~$f$ messages in~$r$. 
Let $\Prot$ be an AC protocol with $D=2$, let $i\in\Proc$, and denote by~$T$ the set of processes to which~$i$  sends messages in the nice run~$r$. Assume, by way of contradiction, that $|T|<f$. Consider a run~$r'$ of~$\Prot$ in which~$i$ crashes after deciding, without sending any messages after the second round.
Moreover, every process $h\in T$ fails in the second round without sending messages to any process except possibly to~$i$ (to whom~$h$ sends iff it does in~$r$). The run~$r'$ is a run of~$\Prot$ in~$\modelf$ because it contains $|T|+1\le f$ failures. The run~$r'$ contains no message chain from~$i$ itself to a correct process. Moreover, $r'\ind{i}{2}r$ by construction, and therefore when it commits in~$r$, process~$i$ does not know that the run contains a message chain from~$j$ to a correct process. As in part (b), this contradicts \cref{lem:know-2}.
Finally, part~(d) follows directly from part~(c), since $fn$ messages must be delivered when deciding in $D=2$ rounds, and the assumption that $f>1$ implies that $fn\ge 2n>n+f-1$.
\end{proof}

Next we turn to prove that the bounds of \cref{thm:MT-tradeoff} are tight by providing matching upper bounds.

\subsection{Upper Bounds on AC in the Common Case}\label{sec:ub}
 Recall that we consider the set of processes to be $\Proc=\{0,1,\ldots,n-1\}$. 
 Roughly speaking, the message-optimal GW protocol spends $n+f-1$ rounds creating a long chain of $n+f-1$ messages sent in cyclic order from process~$1$ through~$n-1$ and then continuing to $0,\ldots,f$.  It then runs a consensus protocol for another~$f$ rounds starting at time $n+f$ to decide among the actions $\commit$ and $\abort$. In our terminology, the first process to know that $v_j=1$ for all~$j$ is process~$0$. The processes $0,1,\ldots,f$ form the silent choir: They must inform the processes in case the chain is broken. At time $n+f$, if a process is not informed of a problem, it knows that all initial values are~1. 

Using the terminology of \cref{thm:MT-tradeoff}, the GW protocol decides in $D=n+2f$ rounds.  
Dwork and Skeen's  message-optimal protocol for the case $f=n-1$ decides in $D=2n+1$ rounds \cite{dwork1983inherent}. There is a very large gap between these decision times and the lower bound of $D\ge 3$ established in \cref{thm:MT-tradeoff}(d). We now present the \Stealth\ protocol, a message-optimal AC protocol that decides in $D=3$ rounds. We will then proceed to present two additional protocols that are round optimal, deciding in $D=2$ rounds for the case of $f>1$ and in $D=1$ round when $f=1$, respectively. Together, these three protocols prove that the bounds in \cref{thm:MT-tradeoff} are all tight, and completely characterize the tradeoff between time and message complexities in the common case for AC protocols.%
\footnote{Guerraoui and Wang present a protocol called 1NBAC in \cite{guerraoui2017fast} in which processes decide in the second round in nice runs, but do so after the sending phase of the second round, and before receiving second-round messages. 1NBAC requires $2n^2-2n$ messages in nice runs. In \cref{sec:1.5} we prove a lower bound of $n^2+fn-n$ messages for  their model, and present a matching protocol that proves that the bound is tight.}
Full descriptions of all protocols in this section, and their proofs of correctness, appear in Appendices \ref{sec:QC} -- \ref{sec:D1f1}.

\vspace{2mm}

\noindent\underline{\Stealth:\hspace{1mm}\/ an AC protocol with $D=3$ and $M=n+f-1$}.$~$\\[.7ex]
{\bf Fast Path:}~~In the first round, all processes that have value~1 send a message to process~$0$. If process~$0$ hears from all processes, it sends messages to processes $1,2,\ldots,f$ in the second round. In case all these messages are indeed sent, no messages are sent in the third round. A process that receives no message in the third round performs $\commit$ at time~3. 
 Finally, if such a process receives no messages in the fourth round, it halts at time~~4.\\
{\bf Slow Path:}~~ The slow path in \Stealth\ operates as follows.
Any member of the choir that did not receive its second round message (or $n-1$ first-round messages in the case of process~$0$) broadcasts (i.e., sends all processes)  an ``error'' message in the third round. Any process that receives error messages in the third round broadcasts a clarification request message ``huh?'' in the fourth round. Finally, if any ``huh?'' message is sent in the fourth round, then the processes run an $f-1$ crash-tolerant consensus for~$f$ rounds starting in the fifth round. The variant of consensus used by \Stealth\ has the property that if a correct process starts with value~1 then the final decision is~1. For details, see \cref{sec:Stealth}. 
A process votes~1 (in favor of committing)  in the consensus protocol if it received no message in the third round, or if it is in the choir and received its second round message. It votes~0 (in favor of aborting) otherwise. Processes that participate in the consensus protocol halt when their role in the consensus protocol ends. In the worst case, the last process halts in \Stealth\ at time $f+5$.\hfill $\clubsuit$

\vspace{3mm}
In nice runs, the \Stealth\/ protocol constructs a silent choir for the fact that all values are~1 in round~2, and uses round~3 to allow the choir to perform a silent broadcast of this fact to all processes. 
Interestingly, it uses silence in an additional manner: Round~4 performs a silent broadcast from all processes to all processes. This does not require a silent choir because it makes use of the rule \Stwo: a null message from~$j$ informs~$i$ that $j$ is either active and has committed, or has crashed without deciding. 
The next  protocol also uses two rounds of silence in a similar manner. 
\vspace{2mm}

\noindent\underline{\Quick:\hspace{1mm}\/ an AC protocol with $D=2$ and $M=fn$.}$~$\\[.7ex]
{\bf Fast Path:}~~In the first round, every process~$j$ with $v_j=1$ sends messages to the $f$ processes $j+1,j+2,\ldots,j+f$ (wrapping around mod~$n$). A process that sent messages and received all~$f$ possible messages in the first round remains silent in the second round. A process that receives no messages in the second round performs $\commit$ at time~2. Finally, if such a process receives no messages in the third round, it halts at time~3.
\\{\bf Slow Path:} The slow path is similar in spirit to that of~\Stealth, starting one round earlier. A process~$j$  such that $v_j=0$, or that does not receive first-round messages from all of its~$f$ predecessors $j-1,j-2,\ldots,j-f$ (mod~$n$) broadcasts an ``error'' message in the second round. In the third round, a process that (sent or) received an ``error'' message in the second round broadcasts a message listing all processes whose values it knows to be~1. 
All processes that do not halt at time~3 participate in a consensus protocol as in \Stealth, that starts in the fourth round. In this case, a process enters the consensus protocol with value~1 iff it knows that all initial values were~1. \hfill$\clubsuit$
\vspace{3mm}

Our third protocol treats the boundary case of $f=1$. Rather than two silent rounds, only the latter one, corresponding to \Stwo, is needed in nice runs: 
\vspace{2mm}

\noindent\underline{\Short:\hspace{1mm}\/ an AC protocol with $D=1$ and $M=n^2-n\,$ for $f=1$.}$~$\\[.8ex]
{\bf Fast Path:}~~In the first round, every process~$j$ with $v_j=1$ sends messages to all other processes. 
 A process with value~1 that receives messages from all processes in the first round performs $\commit$ at time~1. Finally,  if such a process receives no messages in the second round, it halts at time~2.
\\{\bf Slow Path:} For the slow path, a process~$j$ with $v_j=1$ that does not receive messages from everyone in the first round broadcasts a  ``huh?'' message in the second round. A committed process that receives such a message in the second round, responds in the third round with an ``$\AllOne$'' message to the requesting process~$j$. Finally,  a process that did not commit at time~1 will perform $\commit$ at time~3 if it received an ``$\AllOne$'' message in the second round, and will perform $\abort$ otherwise. All processes halt  at time~3 (possibly after deciding at time~3) and send no messages in the fourth round. \hfill$\clubsuit$

%%%%%%%%%%%%%%%%%%%%%%%%%%%%%%%%%%%%%%%%%%%%%%%%%%%%%%%%%%%%%%%%%%%%%%%%%%%%%%%%%%%%%%%%%%%%%%%%%%%%%%%%%%%%%%%%%%%%%%%%%%%%%%%%%%%%%%%%%%%%%%%%%%%%%%%%%%%%

\section{Discussion}
\label{sec:discussion}
In the well-known story {\em Silver Blaze}, Sherlock Holmes is able to conclude that a victim's dog was familiar with the murderer, based on the fact that the dog was silent when the criminal entered \cite{Blaze}. In that example, Conan Doyle crisply illustrates how crucial knowledge can be gained by observing silence.
The focus of our investigation has been on the use of silence in distributed protocols. 
A variety of protocols in the literature make implicit use of silence (e.g., \cite{CGM,dolev1983authenticated,lenzen2016near}). 
 In network protocols,  periodically sending {\it Keep-alive} messages is commonly used to inform a peer that a node is active. The absence of such messages allows the peer to discover that a failure has occurred \cite{KeepAlive}. 

As has been elucidated by  \cite{lamport1984using}, synchronous channels allow processes to send null messages by not sending a message. 
%Moreover, \cite{centipede} defined a notion of potential causality for synchronous systems based on allowing null messages in message chains. 
In a precise sense, our analysis considers what information can be passed using null messages. 
This becomes especially interesting in systems with failures, since the absence of a message in such settings can be caused by a failure, rather than by deliberate silence. 
In the presence of a bounded number of crash failures, we identified the notion of a silent choir as an essential component of information transfer by silence. We showed that the only way that a process~$j$ can inform another process~$i$ of~$j$'s initial value without an explicit message chain between them is by constructing a silent choir (\cref{thm:silent}). 
%A variety of message-efficient protocols have transferred such information without constructing message chains. 
Indeed, in a variety of message-optimal protocols it is possible to identify silent choirs. These include 
 \cite{guerraoui2017fast} for Atomic Commitment, \cite{amdur1992message,hadzilacos1993message} for Byzantine Agreement, and \cite{FD}  for the failure discovery problem.

The silent choir is a valuable tool for a protocol designer. 
%We have shown several message and round-efficient protocols for Atomic Commitment that exploit this. 
By directly constructing silent choirs, we were able to reduce the number of rounds for message-optimal AC protocols from $n+2f$ to~3 in the \Stealth\ protocol, and to design a protocol (\Quick) that is message optimal among  round-optimal AC protocols in \cref{sec:ub}. 
Since silent choirs are necessary in some cases, they can also serve in theoretical analyses. In \cref{thm:MT-tradeoff}, we used the necessity of silent choirs to obtain a complete characterization of the tradeoff between decision times and communication costs for AC protocols.

A notion of potential causality for reliable synchronous systems based on allowing null messages in message chains was defined in \cite{centipede}. Defining a similar notion for fault-prone systems by using silent choirs rather than null messages is an interesting topic for further investigation. 

While our investigation focused on the synchronous round-based model $\modelf$ in which processes fail by crashing and the network is fully connected, it applies  to more general synchronous systems. In particular, the rules \Stwo\/ and \Sthree\/ and their formalizations, as well as the Silent Choir Theorem, hold with minor modifications even in the presence of Byzantine failures. The use of silence, and more generally the topic of how information can be gleaned from actions that are {\em not} performed by processes in a system, is an exciting topic that deserves further investigation in the future.

%%%%%%%%%%%%%%%%%%%%%%%%%%%%%%%%%%%%%%%%%%%%%%%%%%%%%%%%%%%%%%%%%%%%%%%%%%%%%%%%%%%%%%%%%%%%%%%%%%%%%%%%%%%%%%%%%%%%%%%%%%%%%%%%%%%%%%%%%%%%%%%%%%%%%%%%%%%%

\bibliographystyle{plain}
\bibliography{z}

\begin{thebibliography}{10}

\bibitem{amdur1992message}
Eugene~S Amdur, Samuel~M Weber, and Vassos Hadzilacos.
\newblock On the message complexity of binary byzantine agreement under crash
  failures.
\newblock {\em Distributed Computing}, 5(4):175--186, 1992.

\bibitem{bar1991consensus}
Amotz Bar-Noy and Danny Dolev.
\newblock Consensus algorithms with one-bit messages.
\newblock {\em Distributed Computing}, 4(3):105--110, 1991.

\bibitem{centipede}
Ido Ben-Zvi and Yoram Moses.
\newblock Beyond lamport’s happened-before: On the role of time bounds in
  synchronous systems.
\newblock In {\em International Symposium on Distributed Computing}, pages
  421--436. Springer, 2010.

\bibitem{bernstein1985loosely}
Arthur~J. Bernstein.
\newblock A loosely coupled distributed system for reliably storing data.
\newblock {\em IEEE Transactions on Software Engineering}, (5):446--454, 1985.

\bibitem{BHG87}
P.A. Bernstein, V.~Hadzilacos, and N.~Goodman.
\newblock {\em Concurrency control and recovery in database systems}.

\bibitem{CGM}
Armando Castaneda, Yannai~A Gonczarowski, and Yoram Moses.
\newblock Unbeatable consensus.
\newblock In {\em International Symposium on Distributed Computing}, pages
  91--106. Springer, 2014.

\bibitem{Consensus}
Danny Dolev, Cynthia Dwork, and Larry~J. Stockmeyer.
\newblock On the minimal synchronism needed for distributed consensus.
\newblock In {\em 24th FOCS}, pages 393--402, 1983.

\bibitem{dolev1983authenticated}
Danny Dolev and H.~Raymond Strong.
\newblock Authenticated algorithms for byzantine agreement.
\newblock {\em SIAM Journal on Computing}, 12(4):656--666, 1983.

\bibitem{Blaze}
Arthur~Conan Doyle and Sidney Paget.
\newblock {\em The adventure of silver blaze}.
\newblock Mary McLaughlin and M. Einisman for the Scotland Yard Bookstore,
  1892.

\bibitem{DM}
C.~Dwork and Y.~Moses.
\newblock Knowledge and common knowledge in a {B}yzantine environment: crash
  failures.
\newblock 88(2):156--186, 1990.

\bibitem{dwork1983inherent}
Cynthia Dwork and Dale Skeen.
\newblock The inherent cost of nonblocking commitment.
\newblock In {\em Proceedings of the second annual ACM symposium on Principles
  of distributed computing}, pages 1--11. ACM, 1983.

\bibitem{FHMV}
R.~Fagin, J.~Y. Halpern, Y.~Moses, and M.~Y. Vardi.
\newblock {\em Reasoning about Knowledge}.
\newblock MIT Press, Cambridge, Mass., 2003.

\bibitem{guerraoui2017fast}
Rachid Guerraoui and Jingjing Wang.
\newblock How fast can a distributed transaction commit?
\newblock In {\em Proceedings of the 36th ACM SIGMOD-SIGACT-SIGAI Symposium on
  Principles of Database Systems}, pages 107--122. ACM, 2017.

\bibitem{hadzilacos1987knowledge}
Vassos Hadzilacos.
\newblock A knowledge-theoretic analysis of atomic commitment protocols.
\newblock In {\em Proceedings of the sixth ACM SIGACT-SIGMOD-SIGART symposium
  on Principles of database systems}, pages 129--134. ACM, 1987.

\bibitem{FD}
Vassos Hadzilacos and Joseph~Y Halpern.
\newblock The failure discovery problem.
\newblock {\em Mathematical systems theory}, 26(1):103--129, 1993.

\bibitem{hadzilacos1993message}
Vassos Hadzilacos and Joseph~Y Halpern.
\newblock Message-optimal protocols for byzantine agreement.
\newblock {\em Mathematical Systems Theory}, 26(1):41--102, 1993.

\bibitem{HM1}
J.~Y. Halpern and Y.~Moses.
\newblock Knowledge and common knowledge in a distributed environment.
\newblock {\em Journal of the ACM}, 37(3):549--587, 1990.
\newblock A preliminary version appeared in {\em Proc.~3rd ACM PODC}, 1984.

\bibitem{zookeeper}
Patrick Hunt, Mahadev Konar, Flavio~Paiva Junqueira, and Benjamin Reed.
\newblock Zookeeper: Wait-free coordination for internet-scale systems.
\newblock In {\em USENIX annual technical conference}, volume~8. Boston, MA,
  USA, 2010.

\bibitem{jogalekar2000evaluating}
Prasad Jogalekar and Murray Woodside.
\newblock Evaluating the scalability of distributed systems.
\newblock {\em IEEE Transactions on parallel and distributed systems},
  11(6):589--603, 2000.

\bibitem{LF82}
L.~Lamport and M.~J. Fischer.
\newblock Byzantine generals and transactions commit protocols.
\newblock Technical Report Opus 62, SRI {I}nternational, Menlo Park, Calif.,
  1982.

\bibitem{lamport1984using}
Leslie Lamport.
\newblock Using time instead of timeout for fault-tolerant distributed systems.
\newblock {\em ACM Transactions on Programming Languages and Systems (TOPLAS)},
  6(2):254--280, 1984.

\bibitem{lenzen2016near}
Christoph Lenzen and Joel Rybicki.
\newblock Near-optimal self-stabilising counting and firing squads.
\newblock In {\em International Symposium on Stabilization, Safety, and
  Security of Distributed Systems}, pages 263--280. Springer, 2016.

\bibitem{lev2016modular}
Kfir Lev-Ari, Edward Bortnikov, Idit Keidar, and Alexander Shraer.
\newblock Modular composition of coordination services.
\newblock In {\em USENIX Annual Technical Conference}, pages 251--264, 2016.

\bibitem{liskov1993practical}
Barbara Liskov.
\newblock Practical uses of synchronized clocks in distributed systems.
\newblock {\em Distributed Computing}, 6(4):211--219, 1993.

\bibitem{XFT}
Shengyun Liu, Paolo Viotti, Christian Cachin, Vivien Qu{\'e}ma, and Marko
  Vukolic.
\newblock Xft: Practical fault tolerance beyond crashes.
\newblock In {\em OSDI}, pages 485--500, 2016.

\bibitem{kop}
Yoram Moses.
\newblock Relating knowledge and coordinated action: The knowledge of
  preconditions principle.
\newblock {\em Proceedings of TARK 2015, arXiv:1606.07525}, 2015.

\bibitem{Bitcoin}
Satoshi Nakamoto.
\newblock Bitcoin: A peer-to-peer electronic cash system, 2008.

\bibitem{PSL}
M.~Pease, R.~Shostak, and L.~Lamport.
\newblock Reaching agreement in the presence of faults.
\newblock {\em Journal of the ACM}, 27(2):228--234, 1980.

\bibitem{KeepAlive}
Richard Price, Peter Ti{\v{n}}o, and Georgios Theodoropoulos.
\newblock Still alive: Extending keep-alive intervals in p2p overlay networks.
\newblock {\em Mobile Networks and Applications}, 17(3):378--394, 2012.

\end{thebibliography}

%\newpage
\appendix

%%%%%%%%%%%%%%%%%%%%%%%%%%%%%%%%%%%%%%%%%%%%%%%%%%%%%%%%%%%%%%%%%%%%%%%%%%%%%%%%%%%%%%%%%%%%%%%%%%%%%%%%%%%%%%%%%%%%%%%%%%%%%%%%%%%%%%

\section{The \Stealth\/ Protocol}\label{sec:QC}\label{sec:Stealth}
Recall that the goal is optimizing AC for the common-case, i.e., for nice runs. Therefore, in order to capture the essence of silence and simplify the protocols, improvements regarding other cases (the uncommon cases) are left out.
In some not ``nice'' runs the \Stealth\/ and \Quick\/ protocols make use of a consensus variant we denote as~$\BB$ (standing for {\sc Biased-to-1-Uniform-Consensus}). The properties that a $\BB$ protocol satisfies are:
\begin{definition}\label{def:1-Cons NEW}
	Protocol $\cal P$ is a {\sc Biased-to-1-Uniform-Consensus} (which we denote by~$\BB$), iff every run $r$ of $\cal {P}$ satisfies the following conditions:
		\begin{itemize}[itemsep=.5pt]
			\item[]\textbf{Validity:}\quad A process decides on a value $\hat{v}$ only if some process proposes $\hat{v}$.
			\item[]\textbf{Agreement:}\quad  No two process decide differently.
			\item[]\textbf{Decision:}\quad Every correct process eventually decides, \quad and% \\[.3ex]
			\item[]\textbf{Biased-to-1:}\quad If some correct process starts with~1, processes may only decide~1.%\\[.3ex]
		\end{itemize}
\end{definition}
\noindent The \Stealth\/ protocol, sketched in \cref{sec:ub}, is presented in \cref{fig:Stleath-portocol}. 
It reaches the lower bound of~\cref{thm:MT-tradeoff}(d). That is, for~$f>1$, \Stealth\/ is round-optimal among message-optimal AC protocols with~$M=n+f-1$ and~$D=3$.
The protocol is also optimal in bits as well as in messages.
Each message sent in \Stealth\/ consists of a single bit.
Both~$\AllOne$,~`\err' and~`huh?' can be represented by~`1' as they are the only messages that might appear in rounds~2, ~3 and~4 respectively.

%==================================================================== 
\begin{figure}[h] 
	\centering{ 
		\fbox{ 
			\begin{minipage}[t]{150mm} 
				\footnotesize 
				\renewcommand{\baselinestretch}{2.5} 
				\begin{tabbing} 
					aaaaaaaa\=aa\=aaaaaaa\=\kill  
					{\bf round 1} \\					
					$\forall i\in \Proc$:\\
					\>\>{\bf if} $v_i=1$ {\bf then} send `1' to process~$0$ 
					%%%%%%%%%%%%%%%%%%%%%%%%%%%%%%%%%%%%
					\\[1ex]
					{\bf round 2}\\ 					
					Process~$0$:\\			
					\>\>{\bf if} received `1' from all {\bf then} send $\AllOne$ to processes $\left\{0,1,2,...,f\right\}$
					%%%%%%%%%%%%%%%%%%%%%%%%%%%%%%%%%%%%
					\\[1ex]{\bf round 3}\\
					Processes $\left\{0,1,2,...,f\right\}$:\\					
					\>\>{\bf if} 
					received no message from process~$0$ in round~2 {\bf then} send `\err' to all
					%%%%%%%%%%%%%%%%%%%%%%%%%%%%%%%%%%%%
					\\[1ex]{\bf round 4}\\
					$\forall i\in \Proc$:\\
					\>\>{\bf if} did not receive any `\err' {\bf then} $\commit$\\
					%\>\hskip3.725cm
					\>\> {\bf else} send `huh?' to all		
					%%%%%%%%%%%%%%%%%%%%%%%%%%%%%%%%%%%%
					\\[1ex]{\bf round 5} \\
					$\forall i\in \Proc$:\\					
					\>\> {\bf if} no  `huh?' message received {\bf then} halt\\[1ex]
					\>\>{\bf else}\\[1ex]
					\>\>	\hskip1cm		$\;\hat{v}_i \leftarrow \left\{
					\begin{tabular}{ l l }
					1 & {\bf if} received $\AllOne$ or performed $\commit$\\
					0 & otherwise \\
					\end{tabular} \right\};$\\[1ex]
					\>\>\hskip1cm  $\decision\gets \BB(f-1,\hat{v}_i)$;\\[1ex]
					\>\>\hskip1cm {\bf if} haven't performed~$\commit$ {\bf then}\\[.7ex]
					\>\>\hskip1cm\begin{tabular}{ l l }
						\hskip.8cm$\commit$ & {\bf if} $\decision=1$,\\
						\hskip.8cm$\abort$ & {\bf if} $\decision=0$ \\
					\end{tabular}% \\[1ex]
				\end{tabbing} 
				\normalsize 
			\end{minipage} 
		} 
		\caption{\Stealth\/: protocol by rounds.
			All processes participate in rounds~1,4 and~5, only process~$0$ in round~2, and only processes~$\left\{0,1,2,...,f\right\}$ in round~3.
			In nice runs everyone decides in round~4 and halts in round~5.}
		\label{fig:Stleath-portocol} 
	} 
\end{figure}
%==================================================================== 

\begin{figure}[!b]
	\centering
\begin{subfigure}[h]{.4\textwidth} 
	\centering{ 
		\includegraphics[width=\textwidth %4cm,height=5cm,keepaspectratio
		]{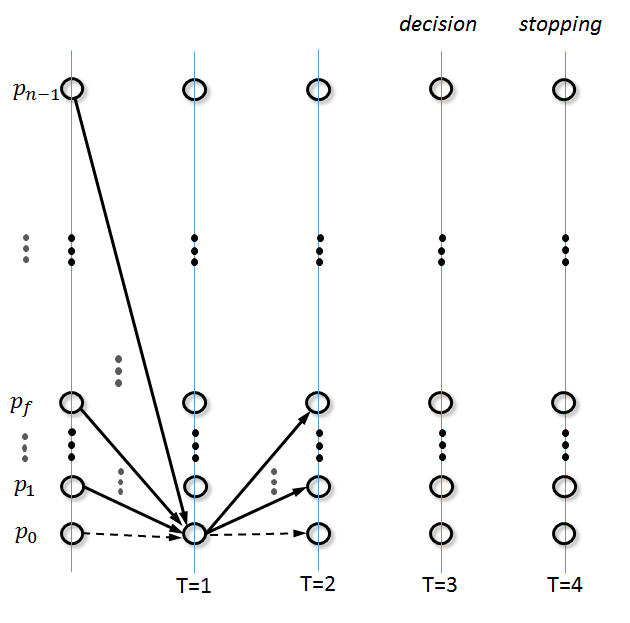}
		\caption{\Stealth\/: A nice run (all vote~1 and no failures occur).}
		\label{fig:QC-nice} 
	} 
\end{subfigure}\hspace{.5cm}%
\begin{subfigure}[h]{0.5\textwidth} 
	\centering{ 	
		\includegraphics[width=\textwidth %4cm,height=5cm,keepaspectratio
		]{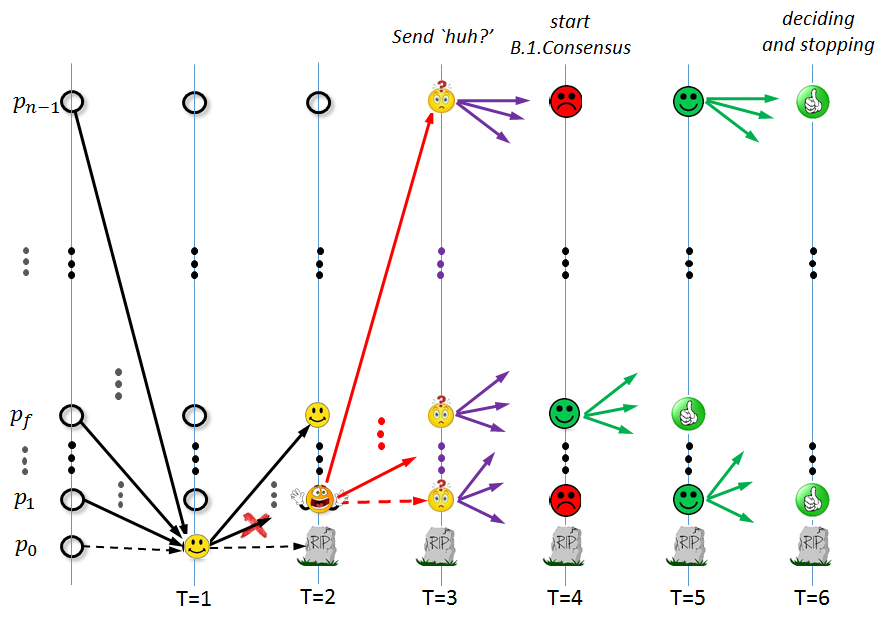}
		\caption{\Stealth\/: Example of a run with failures.}
		\label{fig:QC-not_nice} 
	} 
\end{subfigure}
\caption{Example for two possible runs of \Stealth\/}
\label{fig:exmQC}
\end{figure}

Two possible runs of \Stealth\ are illustrated in \cref{fig:exmQC}.
A nice run goes as in \cref{fig:QC-nice}: \textbf{round~1} all processes send~1 to process~$0$; \textbf{round~2} process~$0$ sends $\AllOne$ to processes~$1,2,..,f$; \textbf{round~3} all processes are quiet; \textbf{round~4} all processes decide~1; and finally in \textbf{round~5} all processes halt.
\cref{fig:QC-not_nice} depicts an example of a complicated run that is not nice. It goes as follows:
in \textbf{round~1} all processes send~1 to process~$0$; in \textbf{round~2} process~$0$ tries to send $\AllOne$ to processes~$1,2,..,f$ but crashes. some receive it and some not; in \textbf{round~3} processes from~$\{ 1,2,..,f\}$ that didn't receive $\AllOne$, send~\err\ to all, the rest are quiet; in \textbf{round~4} all processes that received~\err\ send~`huh?' to all (asking for help); in \textbf{round~5} the living processes start $\BB$, a process that knows $\AllOne$ starts with~1 (green smiley), the rest start with~0 (red smiley).

%%%%%%%%%%%%%%%%%%%%%%%%%%%%%%%%%%%%%%%%%%%%%%%%%%%%%%%%%%%%%%%%%%%%%%%%%%%%%%%%%%%%%%%%%%%%%%%%%%%%%%%%%%%%%%%%%

\subsection{Correctness of \Stealth}\label{sec:Correctness of Stealth}
Recall that the specification of the AC problem is given in~\cref{sec:Ac-def}. 
The detailed description of \Stealth\/ appears in \cref{fig:Stleath-portocol} below. 
We now consider the protocol's correctness. Throughout the Appendix, the protocols are assumed to execute in the context~$\modelf$.

\begin{claim}\label{clm:QCComplexity}
	If~$r$ is a nice run of \Stealth , then exactly $n+f-1$ messages are sent over the network, and all processes perform~$\commit$ at time~3 in~$r$.
\end{claim}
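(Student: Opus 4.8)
The plan is to verify the claim by carefully tracing through the \Stealth\ protocol (\cref{fig:Stleath-portocol}) in a nice run, rounds 1 through 5, establishing inductively that every process behaves exactly as the ``Fast Path'' prescribes, and counting messages along the way. Recall that in a nice run all initial values are~1 and no process crashes, so every process executes its protocol faithfully in every round.

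First I would handle round~1: since $v_i=1$ for every $i\in\Proc$, each of the $n$ processes sends `1' to process~$0$. That is $n$ messages. (Process~$0$ sends `1' to itself; whether one counts this as a network message is a minor bookkeeping point, and the protocol description treats process~$0$'s test as ``received `1' from all,'' so I would follow that convention; in any case the arithmetic is set up so the total comes out to $n+f-1$.) Next, round~2: process~$0$ receives `1' from all $n$ processes (no crashes), so its guard is satisfied and it sends $\AllOne$ to the $f$ processes in $\{1,2,\ldots,f\}$ --- note $0$ already has the information, so effectively $f$ new messages go out; combined with round~1 this gives $n+f-1$ messages, matching the claim. Then round~3: each process in $\{0,1,\ldots,f\}$ did receive process~$0$'s round-2 message (for process~$0$ itself, it ``received `1' from all'' in round~1, so it has no reason to complain), hence the guard ``received no message from process~$0$ in round~2'' is false for all of them, and no `\err' message is sent --- zero messages in round~3. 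Finally round~4: since no `\err' message was sent in round~3, every process $i\in\Proc$ has the guard ``did not receive any `\err''' true, so every process performs $\commit$ at time~3 and sends nothing in round~4. (The decision happens at the end of round~4, i.e.\ at time~3, consistent with the wording.) No messages are sent in rounds~4 or~5, and no further messages anywhere, so the total is exactly $n + f - 1$.

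Assembling these observations: the only rounds in which messages are sent are round~1 ($n$ messages to process~$0$) and round~2 ($f$ messages from process~$0$ to $\{1,\ldots,f\}$), and discounting the self-message of process~$0$ (or equivalently observing the protocol's counting convention), the network carries exactly $n+f-1$ messages; and every process performs $\commit$ at time~3. This completes the proof.

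I do not expect a genuine obstacle here --- the claim is essentially a direct unwinding of the protocol definition in the nice-run case, and the main ``work'' is just being precise about the message count (in particular, being consistent about whether process~$0$'s message to itself is counted, so that the total reads $n+f-1$ rather than $n+f$). The only mild subtlety worth stating explicitly in the write-up is that the absence of crashes is what guarantees process~$0$'s round-2 guard fires and that no choir member sends `\err' in round~3; once that is noted, everything else is immediate.
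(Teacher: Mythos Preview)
Your proposal is correct and follows essentially the same round-by-round trace as the paper's own proof: count $n-1$ first-round messages to process~$0$ (self-messages are not network messages), $f$ second-round messages from process~$0$ to $\{1,\ldots,f\}$, observe that the choir stays silent in round~3, and conclude that every process commits at time~3. One small slip: your parenthetical ``the decision happens at the end of round~4, i.e.\ at time~3'' mislabels the timing---round~4 ends at time~4, not time~3; the commit is taken at time~3 based on the (absent) round-3 messages---but this does not affect the argument.
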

\begin{proof}%[proof of lemma~\ref{lem:niceRuns}]
	Let~$r$ be a nice run of \Stealth .
	Since~$r$ is nice, it holds that $v_i=1$ for all~$i\in \Proc$ and no failures occur, therefore,~$r$ goes as follows:
	In the first round $n-1$ messages of~`1' are sent to process~$0$. In~round 2, process~$0$ sends~$\AllOne$ to $S=\left\{0,1,2,...,f\right\}$, adding~$f$ to the total number of sent messages (it does not actually sends a message to itself).
	In round~3 all is quiet (no message is sent), then, at time~3, all processes perform~$\commit$.
	This results in a total of~$n+f-1$ sent messages. The communication pattern is illustrated in~\cref{fig:QC-nice}.
\end{proof}

\begin{lemma}\label{lem:commitValidity}
	All runs of \Stealth\/ satisfy \textit{commit validity}.
	%, (if some process starts with 0, \textbf{no one} decides 1).
\end{lemma}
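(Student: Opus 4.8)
The plan is to show that in any run of \Stealth, a process performs $\commit$ only when all initial values are~1, by tracing the (only) two code points at which $\commit$ is executed: at time~3 in round~4, and at time~$f+5$ (or earlier) at the end of the $\BB$ sub-protocol in round~5. For the round-4 commit, a process $i$ commits because it received no `\err' message in round~3. I would argue that if some $v_j=0$, then $j$ sent no `1' to process~$0$ in round~1 (by the round-1 code), so process~$0$ did not receive~`1' from all~$n-1$ other processes, hence did not send $\AllOne$ to the choir $\{0,1,\ldots,f\}$ in round~2; consequently every member of the choir received no round-2 message from process~$0$ and therefore broadcast `\err' to all in round~3. But then $i$ would have received an `\err', contradicting that it committed at time~3. (One must handle the boundary subtlety that process~$0$ itself is a choir member and its own value matters; this is covered because if $v_0=0$ then process~$0$ likewise did not get~`1' from all others, unless $n-1$ others sufficed — here I would lean on the exact wording ``received `1' from all'', meaning all $n-1$ potential senders.)

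For the round-5 commit, a process decides $\commit$ only if $\decision=1$, where $\decision=\BB(f-1,\hat v_i)$. By the \textbf{Validity} property of $\BB$ (\cref{def:1-Cons NEW}), $\decision=1$ implies some process proposed~$\hat v=1$, i.e.\ some process set $\hat v_i\leftarrow 1$, which by the round-5 code happens only if that process received $\AllOne$ or had already performed $\commit$. So it suffices to show: (i) if any process ever performs $\commit$ at time~3, then $\AllOne$ holds (this is exactly the round-4 case handled above); and (ii) if any process received an $\AllOne$ message, then $\AllOne$ holds. For (ii), an $\AllOne$ message is sent only by process~$0$ in round~2, and only after process~$0$ received~`1' from all $n-1$ other processes; together with the round-1 rule (a process sends `1' to~$0$ iff its value is~1) this means every process other than~$0$ has value~1. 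For process~$0$'s own value: if $v_0=0$ then process~$0$ executes round~1 sending nothing, but the guard in round~2 is about messages \emph{received}, so I must argue separately that the protocol is only claimed to satisfy commit validity, and commit validity permits committing only when \emph{all} values are~1 including $v_0$ — so I need that $v_0=1$ as well. This follows if we additionally note process~$0$ itself is in the choir and, in the $v_0=0$ case, would have to have a reason not to raise `\err'; but actually the cleanest route is: process~$0$ sends $\AllOne$ in round~2 regardless of $v_0$, yet if $v_0=0$ then in a real run process~$0$ did receive `1' from the other $n-1$ processes but $\AllOne$ is still false. This is the one genuine gap, and I expect it to be the main obstacle.

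To close that gap I would re-read the round-1 line carefully: it says $\forall i\in\Proc$, if $v_i=1$ then send `1' to process~$0$ — so process~$0$ with $v_0=1$ also ``sends to itself'', or at least registers its own vote, and the round-2 guard ``received `1' from all'' should be read as ``knows all $n$ votes are~1'', which includes $v_0$. Under that reading, process~$0$ sends $\AllOne$ only when $v_0=1$ too, and then $\AllOne$ genuinely holds whenever an $\AllOne$ message exists, completing (ii). Thus the argument structure is: first dispatch the time-3 commit using the choir/`\err' propagation chain; then dispatch the time-$\ge 5$ commit by invoking $\BB$-Validity to pull a proposal of~1 back to either a prior time-3 commit (reduce to the first case) or an $\AllOne$ receipt (use the process-0 guard); in both cases conclude $\AllOne$, which is precisely \emph{commit validity}. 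The remaining steps — that these are the only two places $\commit$ appears, and that the faulty-crash behaviour does not create a spurious commit (a crashed process performs no actions) — are routine and I would state them in one line each.
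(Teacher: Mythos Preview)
Your approach is correct and essentially matches the paper's: the paper argues the contrapositive (assume some $v_i=0$, trace forward through the rounds to show no $\commit$ ever occurs, using $\BB$ Validity at the end), while you case-split on the two commit points and argue backwards, but the protocol-level reasoning is identical.

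Two small remarks. First, in your round-4 case you write that ``every member of the choir \ldots\ broadcast `\err' '' and hence $i$ receives one; make explicit the step the paper uses here: choir members may have crashed, but since $|\{0,1,\ldots,f\}|=f+1>f$ at least one is correct, and \emph{that} process's `\err' reaches~$i$. Second, your worry about $v_0=0$ is legitimate and your resolution is the intended one (process~$0$'s round-2 guard ``received `1' from all'' is to be read as including its own vote); the paper's own proof glosses over this edge case, so you are being more careful than the original, not less.
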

\begin{proof}%[proof of lemma~\ref{lem:commitValidity}]
	Let~$r$ be a run of \Stealth\/ in which some process, say~$i$, starts with~0, we show that no decision in~$r$ is~$\commit$.
	By construction of \Stealth\/, since~$i$ starts with~0 in~$r$, it does not send~`1' to process~$0$ in round~1. Thus, in round~2 of~$r$, process~$0$ does not receive~`1' from all and will not send~$\AllOne$ to anyone.
	In round~3 of~$r$, none of processes in~$S=\left\{0,1,2,...,f\right\}$ receive~$\AllOne$ from process~$0$, so they send~\err\ to all (if not crashed).
	Since $\left| S \right|=f+1$, it is guaranteed that at least one of them is correct in~$r$, so in round~3 all have heard \err\ (at least from the correct process). Thus, in round~4 of~$r$, no one commits and each process send `huh?' to all.
	In round~5 $\BB$ is initialized and starts.
	Since no process ever received $\AllOne$ from process~$0$ in the 2nd round of~$r$, all processes in $\BB$ are initialized with~0.
	From $\BB$ validity guarantee, it only returns~0 in~$r$, hence, the only decisions possible in~$r$ is~$\abort$ and no process performs~$\commit$ in~$r$.
\end{proof}

\begin{lemma}\label{lem:abortValidity}
	All runs of \Stealth\/ satisfy \textit{abort validity}.
\end{lemma}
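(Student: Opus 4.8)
The plan is to prove the contrapositive of abort validity: in any run $r$ of \Stealth\ in which all initial values are~$1$ and no process fails, no process performs $\abort$. Note that a run satisfying these two conditions is precisely a \emph{nice} run, so this amounts to showing that the fast path of \Stealth\ leads every process to $\commit$ and never to $\abort$. Most of the needed work is already contained in \cref{clm:QCComplexity}, which asserts that in a nice run all processes perform $\commit$ (at time~$3$); since in \Stealth\ a process decides at most once and $\commit\neq\abort$, this alone yields the claim. I would nonetheless also spell out the trace of the protocol on a nice run, both for self-containedness and because it makes transparent \emph{why} the slow path (and hence any $\abort$) is never triggered.

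Concretely, fix a nice run $r$. First, since $v_i=1$ for every $i\in\Proc$ and no process fails, every process $i\neq 0$ sends `1' to process~$0$ in round~1 and this message is delivered, so process~$0$ receives `1' from all and therefore sends $\AllOne$ to the processes $1,2,\ldots,f$ in round~2. Second, again using that no failure occurs, each of these messages is delivered; hence every process in the choir $S=\{0,1,\ldots,f\}$ has received its round-2 message from process~$0$ (with the bookkeeping convention that process~$0$ regards itself as having received $\AllOne$), and so no process in $S$ sends `\err' in round~3. Third, since no `\err' message is sent, in round~4 every process receives no `\err' and performs $\commit$; in particular no `huh?' message is sent, so no process ever invokes $\BB$. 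Since the only way a process of \Stealth\ performs $\abort$ is by obtaining $\decision=0$ from $\BB$ in the last phase, and $\BB$ is never invoked in $r$, no process performs $\abort$ in $r$. This establishes abort validity.

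I do not expect a genuine obstacle here; this is the easy validity direction (the delicate one, commit validity, is handled in \cref{lem:commitValidity}). The only points requiring care are (i)~the convention that process~$0$ treats itself as having received its own round-2 message, so that process~$0$ does not spuriously emit `\err' in a nice run, and (ii)~confirming, by inspection of \cref{fig:Stleath-portocol}, that the \emph{only} route to an $\abort$ decision is $\decision=0$ returned by $\BB$, which in turn requires a `huh?' message in round~4, hence an `\err' message in round~3, hence a broken chain or a $0$-vote in round~1 --- none of which can occur in a nice run.
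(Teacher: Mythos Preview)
Your proposal is correct and follows essentially the same approach as the paper: the paper's proof simply observes that a run with all initial values~1 and no failures is nice, and then invokes \cref{clm:QCComplexity} to conclude that every process commits (hence no process aborts). Your additional step-by-step trace of the nice run is sound and merely elaborates what \cref{clm:QCComplexity} already establishes.
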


\begin{proof}%[proof of lemma~\ref{lem:abortValidity}]
	Let~$r$ be a run of \Stealth\/ in which no process starts with~0 (they all start with~1) and no failures occur, then~$r$ is a nice run by definition. By~\cref{clm:QCComplexity} no process performs~$\abort$ in~$r$ (they all perform~$\commit$).
\end{proof}

\begin{lemma}\label{lem:decision}
	All runs of \Stealth\/ satisfy \textit{decision}.
\end{lemma}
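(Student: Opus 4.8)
The plan is to show that every correct process eventually decides $\commit$ or $\abort$. First I would trace the fast path: in any run, a process that receives no `\err' message in round~3 performs $\commit$ at time~3 by construction, and is therefore done. So it remains to handle correct processes that do receive an `\err' message in round~3. For those, the protocol prescribes that each such process sends `huh?' to all in round~4, so at least one `huh?' message is sent in round~4; consequently, in round~5 every process that has not halted (in particular, every correct process that has not already committed) sees that a `huh?' message was sent, sets $\hat v_i$, and invokes $\BB(f-1,\hat v_i)$.

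The key step is then to argue that the call to $\BB$ actually returns a value at every correct process. This follows from the Decision property of $\BB$ (\cref{def:1-Cons NEW}), together with the observation that at most $f$ processes fail in $\modelf$, so that $\BB$, being $(f-1)$-resilient, is invoked in a setting consistent with its fault assumption --- here I would note that at least one process must already have committed (namely, no process sends `huh?' unless it received `\err', and `\err' is only sent by choir members who themselves need some correct process to relay it), so the number of remaining participants and the failure bound are compatible with $\BB$'s guarantees; alternatively, and more cleanly, I would simply observe that $\BB$ is required to tolerate up to $f-1$ crashes among \emph{its} participants, and since the consensus starts in round~5 after at most $f$ crashes total, the standard resilience accounting gives that $\BB$'s Decision property applies. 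Once $\decision$ is obtained, the process performs $\commit$ if $\decision=1$ and $\abort$ if $\decision=0$ (or has already committed), so it decides.

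The main obstacle I anticipate is the bookkeeping around how many failures $\BB$ must tolerate: I must make sure that invoking a $(f-1)$-resilient consensus is legitimate, i.e., argue that in any run reaching round~5 at most $f-1$ of the processes still running $\BB$ can crash during $\BB$, or else appeal to the fact that some process has already committed and the agreement properties carry the decision regardless. I would handle this by a short case analysis on whether any process committed at time~3: if so, that committed process contributes its value and the resilience slack is available; if not, all choir members sent `\err' and all processes sent `huh?', and the run has "used up" at most the budget consistent with $\BB$'s tolerance. Modulo this accounting, the argument is a direct walk through the protocol's control flow, appealing only to the Decision clause of \cref{def:1-Cons NEW} and the bound $f<n$.
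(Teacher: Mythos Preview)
Your skeleton---a correct process either commits by round~4 or enters $\BB$, and then the Decision clause of \cref{def:1-Cons NEW} finishes the job---is exactly the paper's argument, which is dispatched there in two sentences without further elaboration.

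Where you go beyond the paper is in questioning whether an $(f{-}1)$-resilient $\BB$ suffices in a model allowing~$f$ crashes. The concern is legitimate, but neither branch of your case analysis resolves it. If some process committed at time~3, that by itself does not consume a crash, so no ``resilience slack'' materializes; and if no one committed, it is simply false that a failure has already been spent---take a run in which some $v_j=0$ and no process crashes before round~5: every choir member sends `\err', every process sends `huh?', every process enters $\BB$, and all~$f$ crashes may still land inside $\BB$. Your ``alternative'' accounting has the same hole. So the resilience issue you flag is a genuine loose end---one the paper's own proof also leaves open---and your proposed patch does not close it.
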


\begin{proof}%[proof of lemma~\ref{lem:decision}]
	Let~$r$ be a run of \Stealth\/ and let~$i\in \Proc$ be a correct process in~$r$.
	If~$i$ does not decide in round~4 of~$r$, then it starts $\BB$.
	By the decision guarantee of $\BB$,~$i$ decides in a finite number of rounds in~$r$.
\end{proof}

\begin{lemma}\label{lem:agreement}
	All runs of \Stealth\/ satisfy \textit{agreement}.
\end{lemma}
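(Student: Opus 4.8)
The plan is to show that any two processes that decide in a run~$r$ of \Stealth\/ make the same decision, by a case analysis on \emph{when} each decision is taken. Decisions in \Stealth\/ happen in exactly two places: at time~3 (a process commits in round~4 because it received no `\err' message), or at the end of the $\BB$ sub-protocol that starts in round~5. I would first argue that these two ``modes'' are consistent with one another, and then that each mode is internally consistent.

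First I would handle the case where some process~$i$ commits at time~3. By construction this means~$i$ received no `\err' message in round~3. I claim that then \emph{every} member of the choir $S=\{0,1,\dots,f\}$ received its second-round message from process~$0$: if some $j\in S$ had not, it would have broadcast `\err' to all in round~3, and since the round-3 `\err' broadcasts go to everyone (and a sender's own round-3 message is received, or it crashed before sending to~$i$, in which case — here I need to be a little careful — I should note that a process that crashes mid-broadcast may miss~$i$; the right statement is that \emph{if no `\err' reaches~$i$ then no process in~$S$ both is correct through round~3 and failed to get its round-2 message}). Combined with $|S|=f+1>f$, at least one member of~$S$ is correct, so that correct member received its round-2 message, so process~$0$ did send~$\AllOne$ in round~2, so process~$0$ received `1' from all~$n-1$ other processes in round~1, so $\AllOne$ holds in~$r$. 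Consequently no process ever starts with~0 and, by \cref{lem:commitValidity}'s contrapositive reasoning, no process aborts: any process reaching~$\BB$ enters with~$\hat v=1$ (it either committed, or is a correct choir member that got its round-2 message, or — this needs checking — received $\AllOne$), and by the \textbf{Biased-to-1} and \textbf{Validity} properties of~$\BB$, every $\BB$-decision is~1, hence $\commit$. So all decisions agree on $\commit$.

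Next, the case where no process commits at time~3: then every deciding process decides via~$\BB$, and agreement among them is immediate from the \textbf{Agreement} property of~$\BB$ — provided I first check that all processes that run~$\BB$ run the \emph{same} instance with consistent parameters, which holds since the branch into~$\BB$ is triggered uniformly (some `huh?' was sent in round~4 iff some `\err' was sent in round~3, and every process that does not halt at time~5 sees this and starts $\BB(f-1,\hat v_i)$). A subtle point I must address: could it happen that one process commits at time~3 while another, which crashed or was partitioned from the round-3 silence, does \emph{not} see silence and instead goes on to~$\BB$ and decides~0? The argument of the previous paragraph rules this out, because once \emph{anyone} commits at time~3 we showed $\AllOne$ holds and every $\BB$-entry value is~1, forcing~$\BB$ to output~1; so the two modes cannot disagree.

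The main obstacle I anticipate is the bookkeeping around \emph{crashes during round-3 broadcasts}: I need the implication ``$i$ sees no `\err' $\Rightarrow$ the choir is intact enough that $\AllOne$ holds,'' and the delicate step is that a faulty choir member might send `\err' to some processes but not to~$i$. The clean way around it is to use~$|S|>f$ to extract a \emph{correct} choir member~$h$: if $h$ did not receive its round-2 message it broadcasts `\err' to \emph{all} (it is correct, so~$i$ receives it), contradiction; hence~$h$ received its round-2 message, which already forces process~$0$ to have sent $\AllOne$, which forces $\AllOne$. That single observation is the crux, and once it is in place the rest is a routine propagation through the $\BB$ properties (\textbf{Validity}, \textbf{Agreement}, \textbf{Biased-to-1}) established in \cref{def:1-Cons NEW}.
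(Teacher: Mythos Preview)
Your overall strategy matches the paper's: split on whether any process commits at time~3, and in the ``yes'' case extract a correct choir member~$h\in S=\{0,\dots,f\}$ that must have received its round-2 $\AllOne$ message. But there is a genuine gap at the step where you assert that ``any process reaching~$\BB$ enters with~$\hat v=1$.'' This is false for~$f\ge 2$. Consider a run in which process~$0$ crashes in round~2 after sending $\AllOne$ to~$h$ but not to some other choir member~$c$; then~$c$ broadcasts \err\ in round~3 and itself crashes mid-broadcast, reaching a non-choir process~$j$ but not~$i$. Now~$i$ commits at time~3, yet~$j$ received \err, did not commit, is not in the choir, and never received $\AllOne$; hence~$j$ enters~$\BB$ with~$\hat v_j=0$. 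Your three-way disjunction (committed / correct choir member with its round-2 message / received $\AllOne$) simply does not exhaust the processes that reach~$\BB$. The appeal to the contrapositive of \cref{lem:commitValidity} is also a non-sequitur here: commit validity constrains when $\commit$ may occur, not when $\abort$ may not.

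The fix is exactly what the paper does, and you already hold the key ingredient: you do not need \emph{all} entrants to~$\BB$ to start with~1, only the single correct process~$h$. You have shown that~$h$ received $\AllOne$, so~$\hat v_h=1$. What remains is to check that~$h$ actually participates in~$\BB$ whenever some process decides there. The paper handles this by observing that any process that decides \emph{after} round~4 must itself have received \err\ in round~3 (else it would have committed at time~3), hence it sent `huh?' to all in round~4, and since it went on to decide it did not crash in round~4; therefore~$h$ receives that `huh?' and enters~$\BB$. Now the \textbf{Biased-to-1} property of~$\BB$ alone (a correct process starting with~1 forces the decision to be~1) finishes the argument. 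Your write-up gestures at the~$\BB$ trigger being ``uniform,'' but the crash-during-`huh?'-broadcast worry is resolved precisely by noting that the late-deciding process is the one that reliably delivers `huh?' to~$h$.
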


\begin{proof}%[proof of lemma~\ref{lem:decision}]
	Let~$r$ be a run of \Stealth\/.	
	We prove by dividing into the two possible cases for~$r$:	
	~\textbf{case (1)} All decisions in~$r$ are done after round~4. By construction of \Stealth\/ this implies that all decisions in~$r$ are made according to a consensus protocol, therefore, in this case~$r$ satisfies agreement based on the agreement property pf consensus.
	~\textbf{case (2)} Some process, say~$i$, decides by round~4 in~$r$. 
	By construction of \Stealth\/ every process that decides by round~4, including~$i$, decides~$\commit$ in round~4.
	For process~$i$ to commit in round~4 of~$r$, all process in $S=\left\{0,1,2,...,f\right\}$ must be quiet in round~3.
	$\left| S \right|=f+1$, therefore, at least one of the process in~$S$ is correct in~$r$ and is quiet because it knows~$\AllOne$ (process~$0$ or someone that heard $\AllOne$ from process~$0$).
	Let's name this process $p_{\mbox{\tiny good}}$.
	By construction of \Stealth\/, processes that decide \textbf{after} round~4 in~$r$ send~`huh?' to all in round~4. After which all processes that have not crashed by round~5 in~$r$ start~$\BB$.
	$p_{\mbox{\tiny good}}$ is correct in~$r$ and therefore if a process decides round~4 in~$r$ it participates in~$\BB$, and since~$p_{\mbox{\tiny good}}$ knows~$\AllOne$, it initializes its consensus' initial-value to~1.
	$\BB$ is a uniform-consensus-biased-to-1. Thus, because of $p_{\mbox{\tiny good}}$ is correct and starts with~1, the biased to-1 condition of ensures that all decisions of~$\BB$ in~$r$ will be~1, which leads to~$\commit$. Consequently, every process in~$r$ that decides after round~4 decides~$\commit$ (like all decisions by round~4 in~$r$), and~$r$ satisfies agreement in this case as well.
	We got that in both possible cases~$r$ satisfies agreement.	
\end{proof}

\begin{claim}\label{clm:QCsolvesAC}
	\Stealth\/ is an AC protocol in $\modelf$.
\end{claim}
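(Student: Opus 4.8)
The plan is to assemble \Cref{clm:QCsolvesAC} directly from the four correctness lemmas already established for \Stealth, namely \Cref{lem:commitValidity} (commit validity), \Cref{lem:abortValidity} (abort validity), \Cref{lem:decision} (decision), and \Cref{lem:agreement} (agreement). Recall from \cref{sec:def} that a protocol solves AC in $\modelf$ if and only if every run of $R_{\cal P}$ satisfies the Decision, Agreement, Commit Validity and Abort Validity conditions. So the proof is essentially a one-line conjunction: each of the four lemmas asserts that all runs of \Stealth\/ (executed in the context $\modelf$, as stipulated at the top of \cref{sec:Correctness of Stealth}) satisfy the corresponding condition, hence every run of \Stealth\/ in $\modelf$ satisfies all four conditions simultaneously, and therefore \Stealth\/ is an AC protocol in $\modelf$.

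Concretely, I would first recall the characterization of when a protocol solves AC, pointing back to the discussion following the definition of $\RP$ in \cref{sec:model}. Then I would invoke the four lemmas in turn, noting that together they cover exactly the four required properties. A small remark worth including is that the lemmas are stated for ``all runs of \Stealth,'' and since the Appendix declares that the protocols are analyzed in $\modelf$, these are precisely the runs of $\RP = R(\Stealth,\modelf)$; there is nothing more to check.

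I do not anticipate any real obstacle here: this claim is a bookkeeping statement that packages the preceding lemmas, and all the substantive work (the choir argument for agreement, the biased-to-1 property of $\BB$, the message count for nice runs) has already been done. The only thing to be careful about is making sure the four lemmas are genuinely exhaustive with respect to the AC specification in \cref{sec:Ac-def} — Decision, Agreement, Commit Validity, Abort Validity — which they are.

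Here is the proof I would write.

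\begin{proof}
	By the discussion following the definition of $\RP$ in \cref{sec:model}, a protocol solves AC in $\modelf$ if and only if every run of $R(\Prot,\modelf)$ satisfies the Decision, Agreement, Commit Validity and Abort Validity conditions of \cref{sec:Ac-def}. As stated at the beginning of \cref{sec:Correctness of Stealth}, the \Stealth\/ protocol is analyzed in the context~$\modelf$, so the runs under consideration are exactly those of $R(\Stealth,\modelf)$. \Cref{lem:decision} shows that every such run satisfies Decision, \Cref{lem:agreement} shows that every such run satisfies Agreement, \Cref{lem:commitValidity} shows that every such run satisfies Commit Validity, and \Cref{lem:abortValidity} shows that every such run satisfies Abort Validity. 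Since all four required conditions hold in every run of \Stealth\/ in~$\modelf$, the protocol \Stealth\/ solves AC in~$\modelf$.
\end{proof}
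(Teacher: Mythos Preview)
Your proposal is correct and takes essentially the same approach as the paper: the paper's proof simply cites \Cref{lem:commitValidity}, \Cref{lem:abortValidity}, \Cref{lem:decision}, and \Cref{lem:agreement} and concludes that every run of \Stealth\/ in~$\modelf$ satisfies the AC conditions. Your version is slightly more verbose in spelling out the characterization from \cref{sec:model}, but the substance is identical.
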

\begin{proof}
	Lemmas~\ref{lem:commitValidity},~\ref{lem:abortValidity},~\ref{lem:decision}, and~\ref{lem:agreement} prove that every run of the \Stealth\/ protocol in the context~$\modelf$ satisfies the required conditions for AC, thus, establishing that \Stealth\/ is an AC protocol in the context of~$\modelf$.
\end{proof}

%%%%%%%%%%%%%%%%%%%%%%%%%%%%%%%%%%%%%%%%%%%%%%%%%%%%%%%%%%%%%%%%%%%%%%%%%%%%%%%%%%%%%%%%%%%%%%%%%%%%%%%%%%%%%%%%%
%\newpage
\section{The \Quick\/ Protocol}\label{sec:2MD}
The protocol in~\cref{fig:2.Rounds-algorithm} commits at time~2 in nice runs and uses only~$nf$ messages to do so.
%====================================================================== 
\begin{figure}[h] 
	\centering{ 
		\fbox{ 
			\begin{minipage}[h]{150mm} 
				\footnotesize 
				\renewcommand{\baselinestretch}{2.5} 
				\setcounter{linecounter}{0} 
				\begin{tabbing} 
					aaaaaaaa\=aa\=aaaaaaa\=\kill  
					$\forall i\in\Proc$:\\					
					{\bf round 1} \\
					\>{\bf if} $v_i=1$ {\bf then} send `1' to $\{i,i+1,...,i+f\}\mod n$;
					%%%%%%%%%%%%%%%%%%%%%%%%%%%%%%%%%%%%
					\\[1ex]
					{\bf round 2}\\ 					
					%\> {\bf if} not received `1' from all $\{i,i-1,...,i-f\}\mod n$ in previous round {\bf then} send `\err ' to all;
					\> {\bf if} received fewer than~$f+1$ messages of `1'  {\bf then} send `\err ' to all;
					%%%%%%%%%%%%%%%%%%%%%%%%%%%%%%%%%%%%
					\\[1ex]
					{\bf round 3}\\ 					
					\> {\bf if} did not receive `\err ' in previous round {\bf then} $\commit$;\\
					\> {\bf else} send everyone the id's of processes from whom 1 was received in round 1;
					%%%%%%%%%%%%%%%%%%%%%%%%%%%%%%%%%%%%
					\\[1ex]
					{\bf round 4}\\ 					
					\> {\bf if} did not receive any messages in round~3 {\bf then} halt;\\[.7ex]
					\>{\bf else}\\[1ex]
					\>\>	\hskip0.9cm		$\;\hat{v}_i \leftarrow \left\{
					\begin{tabular}{ l l }
					1 & {\bf if} received all id's $\{0,1,...,n-1\} $ or performed $\commit$\\
					0 & otherwise \\
					\end{tabular} \right\};$\\[1ex]
					\>\>\hskip1cm  $\decision\gets \BB(f-1,\hat{v}_i)$;\\[1ex]
					\>\>\hskip1cm {\bf if} haven't performed~$\commit$ {\bf then}\\[.7ex]
					\>\>\hskip1cm\begin{tabular}{ l l }
						\hskip.8cm$\commit$ & {\bf if} $\decision=1$,\\
						\hskip.8cm$\abort$ & {\bf if} $\decision=0$ \\
					\end{tabular}% \\[1ex]
				\end{tabbing} 
				\normalsize 
			\end{minipage} 
		} 
		\caption{\Quick\/ - Protocol: Total cost of a nice run = $nf$.}
		\label{fig:2.Rounds-algorithm}
	} 
\end{figure} 
%======================================================================  

\noindent We next prove the correctness of \Quick.

\subsection{Correctness of \Quick }\label{sec:Correctness of Quick}
\begin{claim}\label{clm:D2complexity}
	If~$r$ is a nice run of~~$\Prot _{D2}$ in context~$\modelf$, then exactly $fn$ messages are sent over the network and all processes perform~$\commit$ at time~2 in~$r$.
\end{claim}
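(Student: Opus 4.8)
The plan is to establish \Cref{clm:D2complexity} by a direct, round-by-round trace of \Quick\/ on an arbitrary nice run~$r$, exactly mirroring the proof of \Cref{clm:QCComplexity} for \Stealth. Since~$r$ is nice, by definition every process~$i$ has $v_i=1$ and no process crashes, so the run is completely determined and we only need to count the messages the protocol prescribes and verify that the decision happens at time~2.

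First I would analyze round~1: since $v_i=1$ for every~$i$, each process~$i$ sends `1' to the $f$ processes $\{i+1,i+2,\ldots,i+f\}\bmod n$ (the message to~$i$ itself is not actually transmitted), contributing exactly $fn$ messages to the network. Next, round~2: because the predecessor set wraps around mod~$n$ and all processes participate, every process~$i$ receives a `1' from each of its~$f$ predecessors $\{i-1,\ldots,i-f\}\bmod n$, hence receives $f+1$ messages of `1' (counting the conceptual self-message) — in any case not fewer than $f+1$ — so no process sends an `\err' message in round~2, and the network stays silent. Then round~3: since no process received `\err' in round~2, every process performs $\commit$ at time~3; wait — I must be careful here, the claim says processes commit at time~2, so I should re-examine the indexing. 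Looking at \cref{fig:2.Rounds-algorithm}, the commit in the ``round 3'' block is taken at the start of round~3, i.e.\ at time~2, after receiving round-2 messages and before sending round-3 messages; so the decision indeed occurs at time~2, and no round-3 messages are sent in a nice run. Thus the total message count is exactly $fn$, all sent in round~1, and all processes perform $\commit$ at time~2.

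The argument is essentially a bookkeeping exercise with no real obstacle, but the one point requiring genuine care is the wrap-around counting in round~2: one must verify that the map sending~$i$ to its successor set $\{i+1,\ldots,i+f\}\bmod n$ has the property that~$i$'s set of senders is precisely $\{i-f,\ldots,i-1\}\bmod n$, which has size~$f$, so that together with the self-message each process sees at least $f+1$ occurrences of `1' and the error branch is never triggered. This uses $f<n$ (guaranteed in $\modelf$) to ensure the~$f$ predecessors are distinct and distinct from~$i$ itself. I would also note explicitly, as in \Cref{clm:QCComplexity}, that a process does not transmit a message to itself, so the literal count on the wire is $fn$ (with~$n$ processes each sending to~$f$ others) rather than $(f+1)n$. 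With these points settled, the claim follows immediately; I would close by referencing \cref{fig:2.Rounds-algorithm} for the communication pattern, just as the \Stealth\/ proof references \cref{fig:QC-nice}.
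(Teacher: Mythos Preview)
Your proposal is correct and follows essentially the same approach as the paper: a direct round-by-round trace showing that in a nice run each process sends exactly~$f$ first-round messages (total $fn$), no \err\ is sent in round~2, and hence every process commits at time~2. Your treatment is in fact more careful than the paper's on the wrap-around counting and the role of the self-message in meeting the ``$f{+}1$ received'' threshold; the mid-argument hesitation about whether the commit occurs at time~2 or~3 is resolved correctly.
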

\begin{proof}
	Let~$r\in R(\Prot _{D2},\modelf)$ be a run in which all processes start with~1 and no failure occurs (a nice run). Then~$r$ goes as follows:	
	In round~1 of~$r$ every~$i\in \Proc$ sends~1 to~$\{ i,i+1,...,i+f \} $ (a total of $n\cdot f$ messages), and they all receive the message.
	Thus, in round~2 all processes received the expected messages so they all remain quiet.
	Since every process~$i$ did not receive~\err\ by time~2, it commits at time~2.
	Hence, exactly $fn$ messages are sent and all processes commit at time~2 in~$r$.
\end{proof}

\begin{lemma}\label{lem:2comVal}
	All runs of $\Prot _{D2}$ in context~$\modelf$ satisfy \textit{commit validity}.
	%The 2MD protocol satisfies commit validity
\end{lemma}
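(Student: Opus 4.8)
The plan is to follow the pattern of the commit-validity argument for \Stealth\ (\cref{lem:commitValidity}). Fix a run $r$ of $\Prot_{D2}$ in the context $\modelf$ in which some process $i$ has initial value $v_i=0$; the goal is to show that every decision taken in $r$ is $\abort$, so in particular no process decides $\commit$.

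The first step is to produce a silent choir of ``error'' reporters. Since $v_i=0$, process $i$ sends no `1' message in round~1. Let $T=\{i,i+1,\ldots,i+f\}$ (indices mod $n$) be the $f+1$ processes to which $i$ would have sent `1' had $v_i$ been~$1$. For each $j\in T$ the process $i$ is one of the (at most) $f+1$ potential contributors of a `1' to $j$'s round-1 count --- namely the processes in $\{j,j-1,\ldots,j-f\}$, whose `1' messages can reach $j$, together with $j$'s own vote --- and $i$ contributes nothing to that count since $v_i=0$ (either it is the self-slot $i=j$ with vote $0$, or it is some $i\ne j$, which with $v_i=0$ sends no `1' at all). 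Hence $j$ receives at most $f$ messages of `1' in round~1, which is fewer than $f+1$, so $j$ broadcasts `\err' in round~2 unless it has already crashed. As $|T|=f+1>f$, some process $h\in T$ is correct in $r$; being correct, $h$ broadcasts `\err' in round~2 and all of its round-2 messages are delivered. Consequently every process receives `\err' in round~2 of $r$, so no process commits at time~3, and every process still alive in round~3 instead broadcasts its list of round-1 `1'-senders.

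It then remains to rule out a $\commit$ decision in round~4, which is reached only via the $\BB$ sub-protocol. The point here is that the identity $i$ never appears on anyone's list of round-1 `1'-senders, because $v_i=0$ means $i$ sends no `1' message at all; hence no process $j$ can satisfy the ``received all ids $\{0,1,\ldots,n-1\}$'' clause used to set $\hat v_j$. Together with the fact that no process performed $\commit$ at time~3, this shows every process that enters $\BB$ proposes $\hat v=0$. By the Validity property of $\BB$ (\cref{def:1-Cons NEW}), a process decides a value in $\BB$ only if some process proposed it, so every such decision is $0$, giving $\decision=0$ and hence $\abort$. Combining the two steps, every decision in $r$ is $\abort$; since $r$ was an arbitrary run containing a $0$ initial value, $\Prot_{D2}$ satisfies commit validity.

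I expect the only genuinely delicate point to be the round-1 counting step: checking, via the wrap-around index arithmetic, that $i$ really is ``visible'' to every $j\in T$ as one of the $f+1$ processes whose `1' could have reached $j$, and matching this to the exact threshold ($f+1$) that the protocol uses before sending `\err'. The remainder is the same silent-choir reasoning already deployed for \Stealth\ (a group of $f+1$ would-be reporters forces a correct reporter, hence a universally received report), plus a direct appeal to $\BB$'s validity.
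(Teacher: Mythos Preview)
Your proof is correct and follows essentially the same route as the paper's: the set $T=\{i,i+1,\ldots,i+f\}$ misses $i$'s round-1 message and hence (by the $f+1$ threshold) all send `\err', one of them is correct so every process receives `\err', nobody commits in round~3, and since $i$'s id never appears on any list everyone enters $\BB$ with $\hat v=0$, whence Validity of $\BB$ forces $\abort$. Your version is simply more explicit about the modular-index bookkeeping (which the paper leaves implicit), and the minor point that the early commit is taken at time~2 rather than time~3 does not affect the argument.
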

\begin{proof}
	Let~$r$ be a run of$\Prot _{D2}$ in~$\modelf$, in which some process $i\in \Proc$ starts with~$v_i=0$. We show that no process commits in~$r$.
	
	In round~1 $i$ does not send~1 to~$\{ i,i+1,...,i+f \} $, thus, in round~2 all correct processes in~$\{ i,i+1,...,i+f \} $ send~\err\ to all.
	Since~$|\{ i,i+1,...,i+f\} |>f$, at least one of these process is correct in~$r$ and therefore its~\err\ messages reaches all processes by time~2.
	Because every process received an~\err\ message in round~2, in round~3 none of them decides. (Instead, they all send the id's of processes from whom they received~1.)
	As no process received~1 from~$i$ in round~1, no one sends the id of~$i$ in round~3, Hence, no process receives all id's by time~4 in~$r$.
	Therefore, in round~4 every~$j\in \Proc $ starts consensus with~$\hat{v}_j=0$. By consensus validity no process decides~1 (and consequently commits) in~$r$.
\end{proof}

\begin{lemma}\label{lem:2abrVal}
	All runs of $\Prot _{D2}$ in context~$\modelf$ satisfy \textit{abort validity}.
\end{lemma}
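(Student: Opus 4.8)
The plan is to reduce \emph{abort validity} to its contrapositive and then invoke \cref{clm:D2complexity}. Recall that abort validity requires that a process decides $\abort$ in a run $r$ only if some initial value in $r$ is~$0$ or some process fails in~$r$. Equivalently, it suffices to show that if $r$ is a run of $\Prot_{D2}$ in which every initial value is~$1$ and no process fails, then no process performs $\abort$ in~$r$. But such a run is precisely a \emph{nice} run by definition.

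The key step is then a direct appeal to \cref{clm:D2complexity}: in a nice run $r$ of $\Prot_{D2}$ in the context $\modelf$, every process performs $\commit$ at time~$2$. Since the only decision values available in the protocol are $\commit$ and $\abort$, and every process commits, no process can also perform $\abort$ in~$r$. This closes the argument, exactly mirroring the structure of the proof of \cref{lem:abortValidity} for \Stealth.

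I do not expect any real obstacle here: the entire content has been front-loaded into \cref{clm:D2complexity}, whose proof traces the message pattern of a nice run (each $i$ with $v_i=1$ sends to $\{i,\dots,i+f\}\bmod n$, all $f+1$ expected messages arrive, so no `\err' is sent in round~$2$, and hence everyone commits at time~$2$). The only point to be careful about is confirming that ``nice run'' as used in \cref{clm:D2complexity} coincides with ``all initial values~$1$ and no failures,'' which it does by the definition given in \cref{sec:Ac-def}. Thus the proof is a two-sentence reduction plus citation of the claim.

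\begin{proof}
Let $r$ be a run of $\Prot_{D2}$ in $\modelf$ in which no process starts with~$0$ (so all start with~$1$) and no failure occurs; then $r$ is a nice run by definition. By \cref{clm:D2complexity}, every process performs $\commit$ at time~$2$ in~$r$, and therefore no process performs $\abort$ in~$r$. Hence every run of $\Prot_{D2}$ in which some process decides $\abort$ must have some initial value equal to~$0$ or some process that fails, which is exactly abort validity.
\end{proof}
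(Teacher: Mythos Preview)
Your proof is correct and follows essentially the same approach as the paper. The only difference is that the paper's proof of \cref{lem:2abrVal} re-traces the nice-run execution explicitly rather than citing \cref{clm:D2complexity}, whereas you (cleanly, and in line with the paper's own proof of \cref{lem:abortValidity} for \Stealth) factor through that claim; the content is identical.
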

\begin{proof}
	Let~$r$ be a run of $\Prot _{D2}$ in~$\modelf$, in which all processes start with~1 and no failure occur (a nice run). We show that all processes commit at time~2 in~$r$.
	
	In round~1 of~$r$ every~$i$ sends~1 to~$\{ i,i+1,...,i+f \} $, and they all receive the message.
	Thus, in round~2 all processes received the expected messages so they all remain quiet.
	Since every process~$i$ did not receive~\err by time~2, it commits (in the beginning of round~3).
	Hence, all processes commit in~$r$.
\end{proof}

\begin{lemma}\label{lem:2agree}
	All runs of $\Prot _{D2}$ in context~$\modelf$ satisfy \textit{agreement}.
\end{lemma}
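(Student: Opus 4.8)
The plan is to mirror the agreement argument for \Stealth\ in \cref{lem:agreement}, shifted one round earlier. Fix a run~$r$ of~$\Prot_{D2}$ in~$\modelf$ and split on whether any process decides at time~3. If none does, then by construction of \Quick\ every decision in~$r$ is an output of the $\BB$ instance started in round~4, so agreement follows from the Agreement property of~$\BB$. So assume some process~$i$ decides at time~3. Since the only decision available at time~3 is $\commit$, process~$i$ and every other process deciding at time~3 commits, and it remains to show that every process deciding later --- necessarily via~$\BB$ --- also decides $\commit$, i.e., that every $\BB$-decision in~$r$ equals~1.

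Two facts are forced by $i$'s early commit. First, $\AllOne$ holds: if $v_j=0$ then no process in $T_j\eqdef\{j,j+1,\ldots,j+f\}\bmod n$ receives $f+1$ first-round `1'-messages, so every correct member of~$T_j$ broadcasts \err\ in round~2; since $|T_j|=f+1>f$, some member is correct, its \err\ reaches~$i$, and~$i$ could not have committed at time~3. Second, since~$i$ receives no \err\ in round~2, no correct process broadcasts \err\ in round~2, so every correct process received `1' from all~$f$ of its predecessors in round~1; hence a correct process that does not commit at time~3 broadcasts in round~3 the full list of its $f+1$ round-1 senders, and it failed to commit only because it received \err\ in round~2 from a process that crashed during round~2.

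The crux is to exhibit a correct process that proposes~$1$ to~$\BB$ whenever~$\BB$ is invoked in~$r$; the Biased-to-1 property of~$\BB$, together with its Agreement property, then forces every $\BB$-decision in~$r$ to be~$1$, hence $\commit$. Suppose some process runs~$\BB$; then it received a round-3 id-list, so at least one process did not commit at time~3, and by the second fact above at least one crash occurred in rounds~1--2. Since every process running~$\BB$ is alive at the start of round~4, this leaves at most $f-1$ faulty $\BB$-participants, so $\BB$'s Agreement indeed applies. It remains to see that some $\BB$-participant proposes~$1$: if some correct process committed at time~3 it proposes~$1$ to~$\BB$ (and it runs~$\BB$ as soon as any correct process broadcasts a round-3 id-list, which reaches it); otherwise every correct process received \err\ in round~2 and broadcasts its full round-1 window in round~3, and since every $j\in\Proc$ lies in the round-1 window of some correct process (the $f+1$ processes $\{j,\ldots,j+f\}$ cannot all be faulty), these broadcasts collectively list all of~$\Proc$ and reach every correct process, so every correct $\BB$-participant proposes~$1$ by the round-4 rule.

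I expect the last step --- pinning down which correct process proposes~$1$ to~$\BB$, and in particular ruling out a degenerate run in which~$\BB$ is run only by faulty processes, none of whom proposes~$1$ --- to be the main obstacle, exactly as the ``$p_{\mbox{\tiny good}}$ participates in~$\BB$'' step is the delicate point of \cref{lem:agreement}. It is handled by the two structural observations above: that~$\BB$ tolerates $f-1$ crashes (so the inevitable round-1 or round-2 crash does not break its Agreement), and that correct-to-correct round-3 broadcasts always arrive (so the round-1 windows advertised by the non-committing correct processes are seen by every correct $\BB$-participant). With these in hand, the proof of agreement collapses to the same few lines as for \Stealth.
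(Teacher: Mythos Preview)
Your approach differs from the paper's in a meaningful way. The paper argues that in the mixed case (some process commits early, some do not), \emph{every} process entering $\BB$ proposes~$1$: for each~$j$ it picks a correct witness in $\{j,\dots,j+f\}$ that received~$1$ from~$j$ and asserts that this witness ``succeeds in sending $j$'s id to all in round~3,'' so all consensus inputs are~$1$ and plain \emph{Validity} finishes the job. You instead split on whether some \emph{correct} process committed, and in each branch exhibit only a \emph{single} correct $\BB$-participant proposing~$1$, then invoke \emph{Biased-to-1}. Your route is in fact the more robust one: the paper's assertion is not literally true when the correct witness for~$j$ itself received no \err\ and therefore commits rather than broadcasting in round~3, so it is not the case that every $\BB$-participant must propose~$1$, and Validity alone does not close the argument. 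Your reliance on Biased-to-1 sidesteps this neatly.

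One residual gap in your case~(a): you say the committed correct process ``runs $\BB$ as soon as any correct process broadcasts a round-3 id-list,'' but you do not cover the sub-case in which \emph{no} correct process broadcasts in round~3 (equivalently, every correct process committed). There the only round-3 senders are faulty, so the particular committed correct process you named may receive nothing in round~3 and halt rather than enter~$\BB$. The patch is immediate and you should state it: in that sub-case every correct process committed, so any correct process that does enter~$\BB$ (by receiving a faulty process's partial round-3 broadcast) proposes~$1$, and Biased-to-1 still forces the outcome. Two cosmetic points: the early commit occurs at time~$2$ (the start of round~3), not time~$3$; and your detour through establishing $\AllOne$ is sound but unnecessary for Agreement --- the paper never uses it.
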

\begin{proof}
	Let~$r$ be a run of $\Prot _{D2}$ in~$\modelf$. If all decisions in~$r$ are done before round~4, it is only committing, and thus, $r$ satisfies agreement.
	If no decision in~$r$ is done before round~4, then all decisions are according to a uniform consensus protocol and thus,~$r$ satisfies agreement.
	Hence, the only case left to check is when some, but not all, processes decide before round~4, specifically this happens at time~2.
	
	Let~$i$ be a process that decides at time~2 in~$r$, we will prove that all other processes that decide in~$r$, decide the same as~$i$.
	By construction of $\Prot _{D2}$, every decision before round~4 is commit, thus,~$i$ and whomever decides before round~4 are committing. We are left to show that every decision after time~4 is to commit as well.
	Since~$i$ did not receive any \err\ messages in round~2 and no more then~$f$ failures are possible in~$r$, then, for every~$j\in \Proc$ there is at least one correct process in~$\{ j,j+1,...,j+f \} $ which received~1 from~$j$ in round~1. This process succeeds in sending~$j$'s id to all in round~3 of~$r$.
	In round~4, for every~$j\in \Proc$ its id was received by all processes. Therefore, all processes start the consensus protocol with initial value of~1, and by validity of consensus all decisions must be~1 ( and consequently to $\commit$). Thus, they all agree with~$i$, proving the claim that~$r$ satisfies agreement.
\end{proof}

\begin{lemma}\label{lem:2dec}
	All runs of the $\Prot _{D2}$ in context~$\modelf$ satisfy \textit{decision}.
\end{lemma}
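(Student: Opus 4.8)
The plan is to follow the same two-case template used for the Decision property of \Stealth\ in \cref{lem:decision}: a correct process either decides on the fast path, or it reaches the consensus subprotocol~$\BB$, at which point the Decision clause of \cref{def:1-Cons NEW} finishes the job. So fix a run~$r$ of~$\Prot_{D2}$ in~$\modelf$ and an arbitrary correct process~$i$ in~$r$, and branch on whether~$i$ receives an \err\ message in round~2 of~$r$.

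If~$i$ receives no \err\ message in round~2, then by the protocol text~$i$ performs~$\commit$ at time~3, so~$i$ decides and this branch is done. Otherwise~$i$ does receive an \err\ message in round~2. Then, by construction,~$i$ does not commit at time~3 but instead broadcasts the identities of the processes from whom it received `1' in round~1; in particular~$i$ sends a round-3 message to every process, itself included. Consequently, at time~4 process~$i$ has received at least one round-3 message (the one it sent to itself), so it does not halt: it sets~$\hat v_i$ and runs $\decision\gets\BB(f-1,\hat v_i)$. Since~$i$ is correct, the Decision property of~$\BB$ guarantees that~$i$ eventually returns with some $\decision\in\{0,1\}$, and since in this branch~$i$ has not already committed it then performs~$\commit$ (if $\decision=1$) or~$\abort$ (if $\decision=0$). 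In both branches~$i$ decides; as~$i$ was an arbitrary correct process, every run of~$\Prot_{D2}$ satisfies the Decision condition.

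I expect the one step that needs care is the assertion that a correct process which reaches round~4 having received \err\ actually participates in~$\BB$ rather than halting. This rests on the convention --- already implicit in \cref{clm:D2complexity}, where a first-round broadcast to $\{i,i+1,\dots,i+f\}$ counts~$i$ among the recipients without being charged a message --- that a process broadcasting in a given round is one of its own recipients. Without this convention, in a run with $f=n-1$ crashes a single surviving correct process could be left with no round-3 message and halt undecided, so I would make the convention explicit (or, equivalently, phrase the round-4 guard so that a process that sent a round-3 message proceeds to~$\BB$). Everything else follows directly from the protocol description and the postulated guarantees of~$\BB$, so no real computation is involved.
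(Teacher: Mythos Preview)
Your proof follows essentially the same two-case approach as the paper's, which is a very terse one-liner: either a correct process decides on the fast path, or it enters~$\BB$ and decides by the Decision property of consensus. You are in fact more careful than the paper, since you explicitly justify why a process that received \err\ in round~2 will not halt in round~4 --- a point the paper's proof simply asserts without comment. One small slip: under the paper's timing conventions the fast-path $\commit$ occurs at time~2 (beginning of round~3), not at time~3; see \cref{clm:D2complexity} and \cref{lem:2abrVal}.
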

\begin{proof}
	Let~$r$ be a run of $\Prot _{D2}$ in~$\modelf$. By construction of~$\Prot _{D2}$, correct processes who don't decide at time~2 or~$r$ will do so using the consensus protocol that starts in round~4. As the consensus protocol satisfies decision, all correct processes eventually decide in~$r$.
\end{proof}

\begin{claim}\label{clm:2DsolvesAC}
	Protocol $\Prot _{D2}$ (\cref{fig:2.Rounds-algorithm}) is an AC protocol in $\modelf$
\end{claim}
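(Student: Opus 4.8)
The plan is to obtain the claim directly from the four correctness lemmas already proved for $\Prot_{D2}$, since together they are precisely the four defining conditions of Atomic Commitment listed in \cref{sec:Ac-def}. First I would recall the observation from \cref{sec:model}: a protocol solves AC in $\modelf$ if and only if every run of $R(\Prot_{D2},\modelf)$ satisfies Decision, Agreement, Commit Validity, and Abort Validity. So it suffices to assemble the pieces: \cref{lem:2dec} supplies Decision, \cref{lem:2agree} supplies Agreement, \cref{lem:2comVal} supplies Commit Validity, and \cref{lem:2abrVal} supplies Abort Validity. Each of these is stated for \emph{all} runs of $\Prot_{D2}$ in the context $\modelf$, so their conjunction yields that every such run meets the full AC specification, which is exactly what the claim asserts.

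The only point where care is needed lies not in this assembly but inside the cited lemmas, namely that whenever a run reaches the consensus phase beginning in round~4 the black box $\BB(f-1,\cdot)$ is invoked within its fault-tolerance budget (this is where the $f>1$ hypothesis enters and where the case analysis in \cref{lem:2agree} and \cref{lem:2dec} already does the work); for the present claim it can simply be invoked. The remaining check is pure bookkeeping: confirming that the four lemma statements quantify over the same run set $R(\Prot_{D2},\modelf)$ that appears in the claim, which they do, and that \cref{clm:D2complexity} is not even needed here (it only pins down the nice-run message count, used to match \cref{thm:MT-tradeoff}(c)).

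I expect no genuine obstacle. The substance of correctness is spread across \cref{clm:D2complexity,lem:2comVal,lem:2abrVal,lem:2agree,lem:2dec}, and this final claim is essentially a one-line corollary recording that those lemmas jointly cover every AC requirement; the proof will therefore consist of naming the four lemmas and citing the characterization of ``solves AC in $\modelf$'' from the model section.
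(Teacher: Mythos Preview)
Your proposal is correct and matches the paper's own proof essentially verbatim: the paper simply cites \cref{lem:2comVal}, \cref{lem:2abrVal}, \cref{lem:2agree}, and \cref{lem:2dec} to cover the four AC conditions and concludes. One small quibble: there is no $f>1$ hypothesis in \cref{clm:2DsolvesAC} (that restriction appears only in the lower-bound \cref{thm:MT-tradeoff}(b),(d)), so your parenthetical aside about where ``the $f>1$ hypothesis enters'' is misplaced, but this does not affect the argument you actually give.
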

\begin{proof}
	Lemmas~\ref{lem:2comVal},~\ref{lem:2abrVal},~\ref{lem:2agree}, and~\ref{lem:2dec} prove that all runs of $\Prot _{D2}$ in the context~$\modelf$ satisfy the required conditions for AC, thus, making protocol~\Quick\/ an AC protocol in the context of~$\modelf$.
\end{proof}
%%%%%%%%%%%%%%%%%%%%%%%%%%%%%%%%%%%%%%%%%%%%%%%%%%%%%%%%%%%%%%%%%%%%%%%%%%%%%%%%%%%%%%%%%%%%%%%%%%%%%%%%%%%%%%%%%

%\newpage
\section{The \Short\ Protocol}\label{sec:D1f1}
In the special case of a single possible failure protocol~\Short\ is presented in~\cref{fig:D1f1-algorithm}.
%====================================================================== 
\begin{figure}[h] 
	\centering{ 
		\fbox{ 
			\begin{minipage}[t]{150mm} 
				\footnotesize 
				\renewcommand{\baselinestretch}{2.5} 
				\setcounter{linecounter}{0} 
				\begin{tabbing} 
					aaaaaaaa\=aa\=aaaaaaa\=\kill  
					$\forall i\in \Proc$:\\					
					{\bf round 1} \\
					\>{\bf if} $v_i=1$ {\bf then} send `1' to all; 
					%%%%%%%%%%%%%%%%%%%%%%%%%%%%%%%%%%%%
					\\[1ex]
					{\bf round 2}\\ 					
					\> {\bf if} $v_i=1$ and received `1' from all in previous round {\bf then} $\commit$;\\
					\> {\bf else} send `huh?' to all;
					%%%%%%%%%%%%%%%%%%%%%%%%%%%%%%%%%%%%
					\\[1ex]
					{\bf round 3}\\ 					
					\> {\bf if} performed $\commit$ and did not receive any `huh?' in previous round {\bf then} halt;\\
					\> {\bf else if} performed $\commit$ and received `huh?' from~$j$ in previous round {\bf then} send $\AllOne$ to~$j$;
					%%%%%%%%%%%%%%%%%%%%%%%%%%%%%%%%%%%%
					\\[1ex]
					{\bf round 4}\\ 					
					\> {\bf if} received $\AllOne$ in previous round {\bf then} $\commit$; halt;\\
					\> {\bf else } $\abort$; halt;
				\end{tabbing} 
				\normalsize 
			\end{minipage} 
		} 
			\caption{\Short\/ - Protocol: A special protocol for~$f=1$ that decides at time~1 (For~$f>1$ we proved~$D\ge 2$). Total cost of a nice run is~$n^2-n$ messages.}
		\label{fig:D1f1-algorithm} 
	} 
\end{figure} 
%====================================================================== 

\subsection{Correctness of \Short}

\begin{claim}\label{clm:D1f1Complexity}
	If~$r$ is a nice run of~~$\Prot_{D1f1}$ in~$\gamma^1$, then exactly~$n^2-n$ messages are sent over the network and all processes perform~$\commit$ at time~1 in~$r$.
\end{claim}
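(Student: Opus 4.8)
The plan is to verify the claim by directly tracing an execution of $\Short$ on a nice run, in the same spirit as the complexity claims for $\Stealth$ (\cref{clm:QCComplexity}) and $\Quick$ (\cref{clm:D2complexity}). So fix a nice run $r$ of $\Prot_{D1f1}$ in $\gamma^1$: by definition, $v_i = 1$ for every $i \in \Proc$ and no process fails in $r$. Everything then follows by reading off the protocol of \cref{fig:D1f1-algorithm} round by round.

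In round~1, since $v_i = 1$ for every $i$ and no process has crashed, each of the $n$ processes sends the message `1' to all of the other $n-1$ processes, and these are the only messages sent; this contributes exactly $n(n-1) = n^2 - n$ messages. In round~2, reliable in-round delivery together with the absence of failures guarantees that each process receives `1' from all other processes in round~1; combined with $v_i = 1$, the guard of the first branch of round~2 holds for every process, so every process performs $\commit$ at time~1 and no `huh?' message is sent. In round~3, every process has performed $\commit$ and received no `huh?' in round~2, so every process halts and sends nothing further; round~4 is then vacuous. Summing over all rounds, exactly $n^2 - n$ messages are sent in $r$, and every process commits at time~1, as claimed.

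I do not expect any real obstacle here: the statement is a straightforward bookkeeping consequence of the protocol's definition restricted to the failure-free, all-ones case. The only small points of care are to count round-1 messages as ordered pairs of distinct processes (yielding $n(n-1)$, not $\binom{n}{2}$), and to observe that, under the model's conventions, a $\commit$ executed during round~2 is a decision ``at time~1.''
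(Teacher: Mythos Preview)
Your proposal is correct and essentially identical to the paper's own proof: both simply trace the execution of $\Prot_{D1f1}$ on a nice run round by round, counting the $n(n-1)$ first-round messages and observing that every process commits at time~1 and then halts without further communication. Your additional remarks about message counting and the timing convention are accurate but not needed beyond what the paper already assumes.
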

\begin{proof}
	Let~$r\in R(\Prot_{D1f1},\gamma^1)$ be a run in which all processes start with~1 and no failure occur (a nice run). Then~$r$ goes as follows:	
	In round~1 of~$r$ every process sends~1 to all, and they all receive the message. Thereafter, since every process had received~1 from all, it performs~$\commit$ at time~1 of~$r$.
	As all perform $\commit$ in round~2 no~`huh?' message is sent in~$r$. Finally, all processes halt at time~2.
	Hence, every process sends~$n-1$ messages (a total of~$n^2-n$ over the network) and commits at time~1 in~$r$.
\end{proof}

\begin{corollary}\label{lem:D1f1abVal}
	All runs of~~$\Prot_{D1f1}$ in~$\gamma^1$ satisfy \textit{abort validity}.
\end{corollary}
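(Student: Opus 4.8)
The plan is to derive this directly from \cref{clm:D1f1Complexity}, exactly as \cref{lem:abortValidity} and \cref{lem:2abrVal} were derived from the corresponding nice-run claims for \Stealth\ and \Quick. Recall that Abort Validity requires that a process may decide $\abort$ only if some initial value is~$0$ or some process has failed. Equivalently, by contraposition, it suffices to show that in a run in which all initial values are~$1$ and no process fails --- that is, a nice run --- no process ever decides $\abort$.

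First I would invoke \cref{clm:D1f1Complexity}, which states that if $r$ is a nice run of $\Prot_{D1f1}$ in $\gamma^1$, then all processes perform $\commit$ at time~$1$ in~$r$. Since every process that decides in such a run decides $\commit$ at time~$1$ (and, per the protocol, halts at time~$2$ without ever entering rounds~$3$ or~$4$ where $\abort$ could occur), no process decides $\abort$ in any nice run of $\Prot_{D1f1}$.

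Finally I would close the contrapositive: if some process decides $\abort$ in a run~$r$ of $\Prot_{D1f1}$ in $\gamma^1$, then~$r$ is not a nice run, so either some initial value is~$0$ or some process fails in~$r$, which is precisely Abort Validity. There is no real obstacle here --- the entire content has already been established in \cref{clm:D1f1Complexity}; the corollary is merely the observation that "everyone commits at time~$1$ in nice runs" immediately rules out aborting in the common case, and the specification permits $\abort$ in all remaining cases.
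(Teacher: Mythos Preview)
Your proposal is correct and follows essentially the same approach as the paper: both invoke \cref{clm:D1f1Complexity} to conclude that in a nice run all processes commit, hence none aborts, which is exactly the contrapositive of Abort Validity. The paper's proof is simply a one-line version of what you wrote.
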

\begin{proof}
	Let~$r$ be a run of~~$\Prot_{D1f1}$ in~$\gamma^1$, in which all processes start with~1 and no failures occur (a nice run). Then by~\cref{clm:D1f1Complexity} no process aborts in~$r$ (they all commit).
\end{proof}

\begin{lemma}\label{lem:D1f1comVal}
	All runs of~~$\Prot_{D1f1}$ in~$\gamma^1$ satisfy \textit{commit validity}.
\end{lemma}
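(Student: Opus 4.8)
The plan is to show that in any run $r$ of $\Prot_{D1f1}$ in which some process starts with initial value~$0$, no process ever performs $\commit$. Let $i$ be a process with $v_i=0$ in $r$. By the structure of the protocol, a process can perform $\commit$ only at time~1 (in round~2) or at time~3 (in round~4, upon receiving an $\AllOne$ message in round~3). I would rule out each of these two possibilities in turn, using the fact that $f=1$, so at most one process crashes.

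First I would handle the time-1 commit. A process $h$ commits at time~1 only if $v_h=1$ and $h$ received `1' from \emph{all} processes in round~1. But process~$i$ has $v_i=0$, so by construction $i$ does not send `1' to anyone in round~1; in particular $h$ does not receive `1' from $i$. Hence no process commits at time~1 in~$r$. (Note $i\ne h$ since $v_i=0\ne 1=v_h$.)

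Next I would handle the time-3 commit. A process $h$ commits at time~3 only if it received an $\AllOne$ message in round~3. An $\AllOne$ message is sent in round~3 only by a process $g$ that performed $\commit$ (necessarily at time~1, by the previous paragraph's conclusion applied to $g$). But we just argued that no process commits at time~1 in~$r$, so no $\AllOne$ message is sent in round~3 of~$r$, and therefore no process commits at time~3. Combining the two cases, $\commit$ is never performed in~$r$, which establishes commit validity: a process decides $\commit$ only when all initial values are~$1$.

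I do not anticipate a serious obstacle here — the argument is a short case analysis driven entirely by the protocol text and the bound $f=1$. The one point requiring a little care is the logical dependency between the two cases: the impossibility of a time-3 commit relies on having already shown the impossibility of a time-1 commit (since the $\AllOne$ messengers are exactly the time-1 committers), so the cases must be addressed in that order rather than independently. A secondary subtlety worth stating explicitly is why $f=1$ is not actually needed for commit validity (it is used elsewhere for agreement), but since the lemma is stated for $\gamma^1$ I would simply work within that context and not belabor the point.
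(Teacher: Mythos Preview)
Your proposal is correct and follows essentially the same route as the paper: since the process with value~$0$ sends no `1' in round~1, no process commits at time~1; hence no $\AllOne$ is sent in round~3, and no process commits at time~3. The paper's proof is the same two-step argument (phrased slightly more narratively), and indeed neither proof actually invokes the bound $f=1$, as you observe.
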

\begin{proof}
	Let~$r\in R(\Prot_{D1f1},\gamma^1)$ be a run in which some process~$j\in \Proc$ starts with~$v_j=0$. We show that no process commits in~$r$.	
	In round~1 of~$r$,~$j$ does not send~1 to~anyone, thus, in round~2 all processes did not receive~1 from all and therefore all processes send~`huh?' to all.
	Since no process performed~$\commit$ in round~2, no $\AllOne$ messages are sent in round~3 of~$r$.
	Finally, a process that reaches round~4 has received no message of~$\AllOne$ previously, therefore, it decides~$\abort$ at time~4 of~$r$.
	Hence, no process commits in~$r$ (they either abort or crash).
\end{proof}

%\begin{observation}\label{lem:1.5decBefore3}
%	In the 1.5MD protocol all decisions made until round~3 are~$\commit$
%\end{observation}

\begin{lemma}\label{lem:D1f1agreement}
	All runs of~~$\Prot_{D1f1}$ in~$\gamma^1$ satisfy \textit{agreement}.
\end{lemma}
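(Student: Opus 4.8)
The plan is to do a two-way case analysis on whether some process performs an \emph{early commit}, i.e. commits at time~$1$ via the round-$2$ rule of $\Prot_{D1f1}$. First I would observe that in this protocol a process can decide in only two ways: it commits at time~$1$ (the round-$2$ rule), or it decides at time~$4$ in round~$4$, committing iff it received an $\AllOne$ message in round~$3$ and aborting otherwise. So Agreement reduces to showing that these two kinds of decision can never disagree.

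\emph{No early commit.} If no process commits at time~$1$, then no process ever executes $\commit$ in round~$2$, so when round~$3$ runs no process satisfies the guard ``performed $\commit$'', and hence no $\AllOne$ message is sent in round~$3$. Consequently every process that reaches round~$4$ finds that it did not receive $\AllOne$ and decides $\abort$. All decisions in the run are $\abort$, so Agreement holds.

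\emph{Some early commit.} Suppose process~$i$ commits at time~$1$. Then $i$ received a `1' message from every other process in round~$1$, and since a process sends `1' in round~$1$ only when its initial value is~$1$, we get $v_j=1$ for all $j\in\Proc$ (this is just Commit Validity, already established in \cref{lem:D1f1comVal}). Now take an arbitrary process~$k$ that decides in the run; if $k$ decides at time~$1$ it commits, matching~$i$, so assume $k$ decides at time~$4$. Since $v_k=1$ but $k$ did not commit at time~$1$, process~$k$ must have failed to receive `1' from some process~$p$ in round~$1$; as $v_p=1$, this is possible only if $p$ crashed during round~$1$. Because $f=1$, $p$ is the \emph{only} faulty process in the run, so every other process, in particular $i$ and~$k$, is correct. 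Tracing the slow path: $k$ did not commit, so in round~$2$ it sends `huh?' to everyone, and since $k$ is correct this reaches~$i$; in round~$3$, $i$ has performed $\commit$ and received `huh?' from~$k$, so it sends $\AllOne$ to~$k$, and since $i$ is correct this reaches~$k$; hence $k$ receives $\AllOne$ in round~$3$ and commits at time~$4$. Thus every decision in the run is $\commit$, and Agreement holds.

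The only delicate point — the main obstacle — is the failure bookkeeping in the second case: one has to argue that the single permitted crash is entirely consumed by $p$'s round-$1$ crash, so that the early committer $i$ is guaranteed to still be active in round~$3$ to answer $k$'s query, and $k$ is guaranteed to still be active in round~$2$ to ask it and in round~$4$ to act on the answer. Everything else is a direct walk through the protocol, resting on two small facts: a `1' received in round~$1$ certifies the sender's initial value, and an $\AllOne$ sent in round~$3$ can only come from a process that committed at time~$1$.
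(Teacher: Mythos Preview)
Your proof is correct. Both you and the paper do a case split on whether some process commits early (at time~$1$), and handle the ``no early commit'' case identically. In the ``some early commit'' case, however, the decompositions differ: the paper fixes the early committer~$i$ and cases on whether~$i$ itself is correct or faulty (if~$i$ is correct it answers every `huh?' with $\AllOne$; if~$i$ is faulty then, since $f=1$, no other process crashes, in particular no one crashes in round~$1$, so in fact \emph{everyone} received `1' from all and commits at time~$1$). You instead fix an arbitrary late decider~$k$ and argue that its failure to commit early forces a round-$1$ crash of some~$p$, which by $f=1$ certifies that both~$i$ and~$k$ are correct, so the `huh?'/$\AllOne$ exchange goes through. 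These are essentially contrapositive organizations of the same idea; yours is arguably more direct because it tracks the single process~$k$ whose decision is in question, while the paper's split makes the role of~$i$'s correctness more explicit. A minor quibble: the late decision is at time~$3$ in the paper's conventions, not time~$4$, but this does not affect your argument.
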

\begin{proof}
	Let~$r$ be a run of~~$\Prot_{D1f1}$ in~$\gamma^1$.
	If no decision (to $\commit$) is done at time~1 in~$r$, then there is no process to send~$\AllOne$, therefore, decisions are made only to~$\abort $ at time~3 of~$r$, and thus, $r$~satisfies agreement.
	On the second case, if a process, say~$i$, decides~$\commit$ at time~1 in~$r$, then if~$i$ does not crash until round~4 in~$r$ it would send~$\AllOne$ to all undecided processes in round~3, thus, enforcing them to perform~$\commit$ at time~3 of~$r$.
	If, however,~$i$ is faulty in~$r$ then no other process fails in~$r$ (no more than a single failure is possible in~$\gamma^1$).  Hence, since~$i$ decided in round~2, $\forall j\in \Proc : v_j=1$ and no one crashes in round~1. Thus, all decide to~$\commit$ at time~1 in~$r$.
	We get that on the second case, some process decides~$\commit$ at time~1, then all decisions are~$\commit$ and~$r$ satisfies agreement.
	Thus, for all cases	$r$~satisfies agreement.
\end{proof}

\begin{lemma}\label{lem:D1f1dec}
	All runs of~~$\Prot_{D1f1}$ in~$\gamma^1$ satisfy \textit{decision}.
\end{lemma}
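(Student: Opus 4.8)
The plan is to argue, case by case on the behavior of the (at most one) faulty process, that every correct process eventually takes a decision by time~4. Since \Short\/ runs in~$\gamma^1$, at most one process crashes in any run~$r$, and every correct process executes the full four-round program of \cref{fig:D1f1-algorithm}. Note first that every decision prescribed by the protocol --- $\commit$ at time~1, $\commit$ at time~3 via an $\AllOne$ message, or $\commit$/$\abort$ at time~4 --- is reached within at most four rounds, so there is no issue of an infinite wait; the only thing to check is that a correct process is never left in a state where the protocol tells it to do nothing and it has not yet decided.

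The key steps, in order, are as follows. First I would observe that a correct process~$i$ that performs $\commit$ at time~1 (because $v_i=1$ and it received~`1' from all in round~1) has decided, and we are done for~$i$. Second, consider a correct process~$i$ that does not commit at time~1: by construction it sends `huh?' to all in round~2. Now split on whether any process committed at time~1. If some process~$h$ committed at time~1 and $h$ is correct, then $h$ receives~$i$'s `huh?' in round~2 and, since it has performed $\commit$, in round~3 it sends $\AllOne$ to~$i$; hence~$i$ receives $\AllOne$ and decides $\commit$ at time~4. If no process that committed at time~1 is correct --- which, since at most one process is faulty, means either nobody committed at time~1, or exactly one process committed at time~1 and it is the unique faulty process --- then no $\AllOne$ message reaches~$i$ in round~3 (a crashed process sends nothing from its crash round on, and a correct non-committed process never sends $\AllOne$). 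Therefore~$i$ falls through to the `else' branch of round~4 and performs $\abort$ at time~4. In every case the correct process~$i$ decides by time~4, which establishes Decision.

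The only subtlety --- and the step I would be most careful about --- is the bookkeeping around the faulty process in the ``$h$ committed at time~1 but $h$ is faulty'' sub-case: one must make sure that no \emph{other} route to $\AllOne$ is overlooked, i.e. that a correct process truly receives no $\AllOne$ and hence is correctly funneled into the $\abort$ branch rather than being stuck. This follows because an $\AllOne$ message is sent in round~3 only by a process that performed $\commit$ at time~1, and in this sub-case the unique such process is exactly the faulty one, which is silent from round~2 onward; any correct process had $v=1$ fail to be globally confirmed (else it too would have committed at time~1), so it sends `huh?' and then, receiving no $\AllOne$, aborts. Since this is routine case analysis with no quantitative estimate, I expect the write-up to be short.
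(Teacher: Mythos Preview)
Your argument does establish Decision, but it is considerably more elaborate than what is needed, and much of the case analysis really belongs to the Agreement proof rather than here. The paper's proof is essentially a one-liner: by the structure of \Short, a correct process that does not $\commit$ at time~1 necessarily reaches round~4, whose code is a total if--then--else (``if received $\AllOne$ then $\commit$; else $\abort$''), so it decides there regardless. The Decision property is indifferent to \emph{which} value is decided, so there is no need to track who might or might not have sent $\AllOne$.

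Your detour also contains a small inaccuracy that happens not to matter for Decision but would matter elsewhere. In the sub-case where the unique committed process~$h$ is the faulty one, you assert that $h$ ``is silent from round~2 onward.'' That need not hold: having committed at time~1, $h$ may crash only in round~3 (or later) and still deliver $\AllOne$ to some or all of the processes that sent `huh?'. Thus a correct~$i$ in this sub-case may well receive $\AllOne$ and $\commit$ rather than $\abort$. Since either branch of round~4 yields a decision, your conclusion survives here; but if you intended to reuse this case split for Agreement, the error would bite.
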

\begin{proof}
	Let~$r$ be a run of~~$\Prot_{D1f1}$ in~$\gamma^1$.
	By construction of~~$\Prot_{D1f1}$, correct processes who don't decide at time~1 in~$r$, decide at time~3 in~$r$. Therefore, all correct processes eventually decide in~$r$ (at time~3 latest).
\end{proof}

\begin{claim}\label{clm:D1f1solvesAC}
	Protocol~~$\Prot_{D1f1}$ (\cref{fig:D1f1-algorithm}) is an AC protocol in~$\gamma^1$.
\end{claim}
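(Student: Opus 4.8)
The plan is to mirror the structure used for the \Stealth\ and \Quick\ correctness proofs: establish each of the four AC conditions as a separate lemma and then assemble them into the final claim. Concretely, for $\Prot_{D1f1}$ running in $\gamma^1$ I would prove (i) \emph{commit validity}, (ii) \emph{abort validity}, (iii) \emph{agreement}, and (iv) \emph{decision}, and then observe that these four lemmas together imply that every run of $\Prot_{D1f1}$ in $\gamma^1$ satisfies the AC specification of \cref{sec:Ac-def}, which is exactly the statement of \cref{clm:D1f1solvesAC}. In fact, since \cref{lem:D1f1comVal}, \cref{lem:D1f1abVal}, \cref{lem:D1f1agreement}, and \cref{lem:D1f1dec} have already been established in the excerpt just above the claim, the proof of the final statement itself is a one-line assembly: cite those four results and conclude.

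The substantive work lives in the four lemmas, so let me sketch how I would argue each. For \textbf{commit validity}: if some $v_j=0$, then $j$ sends nothing in round~1, so no process sees `1' from everyone, hence in round~2 every process broadcasts `huh?' and nobody commits at time~1; with no committed process, no $\AllOne$ is sent in round~3, so every process reaching round~4 aborts — no commit ever occurs. For \textbf{abort validity}: in a nice run (all values~1, no failures) every process receives `1' from all in round~1 and commits at time~1, so nobody aborts; this is essentially \cref{clm:D1f1Complexity}. For \textbf{decision}: a correct process either commits at time~1, or — not having received `1' from everyone — sends `huh?' and is guaranteed to decide by time~4 (it receives $\AllOne$ and commits, or it aborts); since $f=1$ guarantees at least one correct committed process responds when a genuine commit happened, and otherwise everyone aborts, termination holds. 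For \textbf{agreement}: the only early decision is $\commit$ at time~1; if any process commits at time~1 then either it survives through round~3 and forwards $\AllOne$ to every undecided process (forcing them to commit at time~3), or it is the single faulty process, in which case no other process fails and — since it committed, meaning all values are~1 and nobody crashed in round~1 — every process in fact commits at time~1; conversely if nobody commits at time~1, no $\AllOne$ is ever sent, so all decisions are $\abort$ at time~3.

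The main obstacle is the \textbf{agreement} argument, and specifically the case analysis around a committed process that then crashes. The delicate point is using the single-failure budget of $\gamma^1$ twice: once to argue that when a time-1 committer is faulty, every other process is correct (so the $\AllOne$-forwarding mechanism would have reached them had the committer survived), and once — more subtly — to argue that a process committing at time~1 actually certifies a \emph{global} nice-ish condition (all initial values~1 and no crash in round~1), so that in the faulty-committer case everyone else also commits at time~1 directly, not via round~3. Getting the quantifiers right here — distinguishing "received `1' from all" as observed by one process versus the global fact it implies when no failure has yet occurred — is where care is needed; everything else (complexity count, commit/abort validity, decision) is routine.

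\begin{proof}
Lemmas~\ref{lem:D1f1comVal},~\ref{lem:D1f1abVal},~\ref{lem:D1f1agreement}, and~\ref{lem:D1f1dec} prove that all runs of~$\Prot_{D1f1}$ in the context~$\gamma^1$ satisfy \emph{commit validity}, \emph{abort validity}, \emph{agreement}, and \emph{decision}, respectively. Together these are exactly the required conditions for AC, thus establishing that~$\Prot_{D1f1}$ is an AC protocol in~$\gamma^1$.
\end{proof}
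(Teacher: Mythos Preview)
Your proposal is correct and matches the paper's own proof essentially verbatim: the paper's proof of \cref{clm:D1f1solvesAC} is likewise a one-line assembly citing \cref{lem:D1f1abVal}, \cref{lem:D1f1comVal}, \cref{lem:D1f1agreement}, and \cref{lem:D1f1dec}. Your sketches of the four supporting lemmas also track the paper's arguments closely, including the single-failure case split in the agreement lemma.
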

\begin{proof}
	Lemmas~\ref{lem:D1f1abVal},~\ref{lem:D1f1comVal},~\ref{lem:D1f1agreement}, and~\ref{lem:D1f1dec} prove that all runs of protocol~~$\Prot _{D1f1}$ in the context~$\gamma^1$ satisfy the required conditions for AC, thus, making protocol~~$\Prot _{D1f1}$ an AC protocol in the context of~$\gamma^1$.
\end{proof}

%%%%%%%%%%%%%%%%%%%%%%%%%%%%%%%%%%%%%%%%%%%%%%%%%%%%%%%%%%%%%%%%%%%%%%%%%%%%%%%%%%%%%%%%%%%%%%%%%%%%%%%%%%%%%%%%%

\section{Committing in Mid-Round}\label{sec:1.5}
As proved in~\cref{thm:MT-tradeoff}(b), if~$f>1$ then no AC protocol commits in time~1 of a nice run in~$\modelf\!$.
Guerroui and Wang however, present in~\cite{guerraoui2017fast} a protocol that solves AC and commits at time~1 of a nice run! How is this magic performed?

The answer lays in the model. In~\cite{guerraoui2017fast} the commonly used synchronous context of~$\modelf$ is replaced with a different one (which we denote as context~$\tilde{\gamma}^f$).
The difference is that in~$\tilde{\gamma}^f$ if process~$i$ performed a message sending action in its crashing round, it is guarantied that the message would be sent correctly, whereas in~$\modelf$, the sending of~$i$'s messages in its crashing round is not guarantied.
Since,~$\tilde{\gamma}^f$ is a particular case of~$\modelf$ our protocols are also correct in~$\tilde{\gamma}^f$. The opposite is however not true.
For completeness we also analyze the decision before time~2 in~$\tilde{\gamma}^f$ here.

%%%%%%%%%%%%%%%%%%%%%%%%%%%%%%%%%%%%%%%%%%%%%%%%%%%%%%%%%%%%%%%%%%%%%%%%%%%%%%
The following lemma forms the basis of the analysis,
\begin{lemma}\label{lem:1r-msgs}
	Assume that $f>1$, let~$r$ be a nice run of an AC protocol in~$\tilde{\gamma}^f$, and let~$i\in \Proc$ be a process that decides before time~2 in~$r$. Then (a)~$i$ receives a first-round message from every process in~$r$, and 
	(b)~$i$ sends at least~$f$ round-two messages before it commits. 
\end{lemma}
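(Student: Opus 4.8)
The plan is to prove both parts by the same kind of indistinguishability argument used for \cref{lem:know-2} and parts (b)--(c) of \cref{thm:MT-tradeoff}, adapted to the model $\tilde{\gamma}^f$ in which a crashing process's round-$m$ messages, if the protocol prescribes them, are guaranteed to be delivered. The key observations to carry over are: (i) \cref{cor:know-1}, so that when $i$ commits it knows $v_j=1$ for every $j$; and (ii) \cref{thm:silent}, characterizing knowledge of $v_j=1$ in the absence of a message chain from $j$ to $i$. Since $i$ decides (hence commits, because $r$ is nice) before time~2, i.e.\ at time~1, the only messages $i$ could have received before committing are round-one messages.

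For part~(a): fix a process $j$ from whom $i$ receives no first-round message in $r$. I claim there is then no message chain from $j$ to $i$ in $r$ that is completed by time~1 --- indeed any such chain would have to consist of a single round-one message $j\to i$, which by assumption was not sent to $i$ (or not received; in a synchronous system round-one messages are received by time~1). Now I would invoke \cref{thm:silent} in the system $\RP$ for the model $\tilde{\gamma}^f$: since $K_i(v_j=1)$ holds at time~$1$ and there is no message chain from $j$ to $i$, we get $1 = m > 0$ and $|\bbF^r\cup \Sr_j(0)|>f$. But $r$ is nice, so $\bbF^r=\emptyset$, whence $|\Sr_j(0)|>f\ge 2$. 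However $\Sr_j(0)=\{h : (j,0)\mccr (h,0)\}=\{j\}$, since no messages are sent before time~$0$; this gives $1=|\Sr_j(0)|>f\ge 2$, a contradiction. Hence $i$ must receive a first-round message from every process.

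For part~(b): let $T$ be the set of processes to which $i$ sends round-two messages in $r$ \emph{before it commits at time~1}; note that in $\tilde{\gamma}^f$ a process may send messages in its crashing round and those prescribed sends are delivered, so $i$ can ``send at time~1 then crash'' and $T$ is well defined. Suppose for contradiction that $|T|<f$. Construct a run $r'$ of $\Prot$ in $\tilde{\gamma}^f$ agreeing with $r$ on all initial values (all~1), in which $i$ performs its prescribed round-two sends to $T$ and then crashes, and every process $h\in T$ crashes in round~2 sending no message except possibly back to $i$. Then $r'$ has $|T|+1\le f$ faulty processes, so it is a legal run of $\Prot$ in $\tilde{\gamma}^f$; $r'\ind{i}{1}r$ by construction (everything $i$ sees through time~1 is identical, using part~(a) to note that $i$'s received round-one messages are unchanged); and $r'$ contains no message chain from $i$ itself to any correct process (the only chains out of $i$ go through $T$, and those are severed). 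By \cref{lem:know-2}, since $i$ commits in $r$ it must know that there is a message chain from $i$ to a correct process; but by indistinguishability this fact must hold in $r'$ too, contradiction. Hence $|T|\ge f$, so $i$ sends at least $f$ round-two messages before committing.

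The main obstacle I anticipate is making the indistinguishability construction in part~(b) airtight with respect to the model $\tilde{\gamma}^f$: one must verify that $i$'s behavior up to and including its commit at time~1 is a legal prefix in $r'$ (the crash happens only after the decision), that the round-two sends of $i$ to $T$ in $r'$ match those of $r$ exactly so that no process outside the picture can tell the difference before it matters, and that severing $T$'s outgoing messages does not accidentally leave some other message chain from $i$ to a correct process (which is why it is cleanest to route \emph{all} of $i$'s pre-commit communication through round-two sends to $T$ and argue about ranks/chains as in \cref{thm:MT-tradeoff}(c)). A secondary subtlety is confirming that \cref{thm:silent} applies verbatim in $\tilde{\gamma}^f$; since $\tilde{\gamma}^f$ is a sub-model of $\modelf$ (as the excerpt notes in \cref{sec:1.5}), the construction used in the proof of \cref{thm:silent} --- crashing $j$ and everyone in $\Sr_j(m-1)$ without sending messages --- is still a legal run, so the theorem carries over.
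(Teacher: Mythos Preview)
Your argument for part~(a) is correct; invoking \cref{thm:silent} rather than constructing the indistinguishable run directly (as the paper does) is fine, and your check that the crash pattern used in the proof of \cref{thm:silent} remains legal in~$\tilde{\gamma}^f$ is the right thing to verify.

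Part~(b), however, has a genuine gap. You apply \cref{lem:know-2} with the source process taken to be~$i$ itself and assert that in your run~$r'$ ``the only chains out of~$i$ go through~$T$.'' That need not be true: $i$ may send round-one messages at time~$0$, and those recipients are not required to lie in your set~$T$, which you defined to contain only \emph{round-two} recipients. In your~$r'$ only~$i$ and the members of~$T$ crash, so any round-one recipient of~$i$ outside~$T$ is correct in~$r'$ and already furnishes a message chain from~$i$ to a correct process; the intended contradiction evaporates. Your template was the proof of \cref{thm:MT-tradeoff}(c), but there~$T$ is the set of \emph{all} processes to which~$i$ ever sends, which is precisely why that argument goes through and yours does not. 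The paper avoids this by applying \cref{lem:know-2} to a process $j\neq i$: in the indistinguishable run, $j$ crashes in round~1 after delivering its message only to~$i$, and $i$ crashes immediately after committing. Since $j$'s message reaches~$i$ only at time~1, after $i$'s round-one sends have already gone out, every chain from~$j$ to a correct process must pass through $i$'s round-two sends, forcing $|T\setminus\{j\}|\ge f-1$. A second application with some $j'\in T$ (nonempty since $f>1$) in place of~$j$ then gives $|T\setminus\{j'\}|\ge f-1$, hence $|T|\ge f$.
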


\begin{proof}
	Let~$r$ and~$i$ satisfy the assumptions in the claims statement. In particular, $i$ commits in~$r$ at time~$1$, perhaps after sending some messages in round two. By~\cref{cor:know-1}, process~$i$ needs to know that all initial values were~1 when it commits. If~$i$ does not receive a round one message from some process $j\ne i$, then there is a run~$r'$ in which $v_j=0$, everyone else starts with~1, and~$j$ crashes before sending any messages. 
	Clearly~$r'$ is a legal run, $r_i(1)=r'_i(1)$, and so~$i$ does not know that $v_j=1$ at time~1 in~$r$. This establishes (a). 
	
	To prove (b), fix, in addition, a process~$j$. Since $f>1$ and~$i$ has no information about failures at time~1 in~$r$, $i$ considers it possible that~$j$ crashed without sending round one messages to anyone other than~$i$ and that, in addition, $i$ itself will crash immediately after committing.
	
	Recall from \cref{lem:know-2} that when~$i$ commits, it must know that there is a message chain in the run from~$j$ to at least one correct process, with respect to every $j\in \Proc$.  If~$i$ is the only process who has heard from~$j$ in round one and it crashes immediately after committing, then the only way that there will be a message chain from~$j$ to a correct process is if~$i$ sends round-two messages to at least $f-1$ processes other than~$j$ before it commits. Since $f>1$ by assumption, $i$ must send at least one message in round two before committing.  Let $j'$ be a process to whom~$i$ sends a round-two message before committing in~$r$. The same argument as we have made w.r.t.~$j$ (ensuring a message chain from~$j'$ to a correct process), yields that~$i$ must send round-two messages to at least $f-1$ processes other than~$j'$. Since~$i$ also sends a round-two message to~$j'$, it follows that~$i$ must send at least~$f$ round two messages before committing, establishing claim~(b). 
\end{proof}

%%%%%%%%%%%%%%%%%%%%%%%%%%%%%%%%%%% 

\cref{lem:1r-msgs} states that every process that commits before time~2 in a nice run must receive at least $n-1$ messages in the first round and send at least~$f$ messages in the second.
Thus, we get:
\begin{corollary}
	\label{cor:lessThan2}
	For~$f>1$, no AC protocol in context~$\tilde{\gamma}^f$ can commit in a nice run without sending round-two messages.
\end{corollary}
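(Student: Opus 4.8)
The plan is to derive \cref{cor:lessThan2} directly from \cref{lem:1r-msgs}(b). Since we are assuming $f>1$, the bound "$i$ sends at least~$f$ round-two messages before it commits" supplied by part~(b) is already strictly positive, so it immediately rules out committing (before time~2) without any round-two traffic. Concretely, suppose toward a contradiction that $\Prot$ is an AC protocol for $\tilde{\gamma}^f$ with $f>1$, and that $r$ is a nice run of $\Prot$ in which some process~$i$ commits before time~2 while no round-two message is sent by any process in~$r$. (Recall that in a nice run every correct process does commit: Decision forces a decision, and the two Validity conditions forbid $\abort$ when all initial values are~1 and no process fails; so the real question is only whether this can happen with no round-two messages.)

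Next I would apply \cref{lem:1r-msgs} to the run~$r$ and the process~$i$. Its hypotheses are met — $f>1$ and $i$ decides before time~2 in a nice run of an AC protocol in $\tilde{\gamma}^f$ — so part~(b) yields that~$i$ sends at least~$f$ round-two messages before it commits. Since $f>1$, this is at least one round-two message, directly contradicting the assumption that no round-two messages are sent in~$r$. Hence no such protocol and nice run exist, which is exactly the statement of \cref{cor:lessThan2}.

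I expect essentially no obstacle here: all of the substantive work — invoking \cref{cor:know-1} to force knowledge of $\AllOne$, the crash-before-sending indistinguishable run behind part~(a), and the "message chain to a correct process" argument from \cref{lem:know-2} underlying part~(b) — is already packaged inside \cref{lem:1r-msgs}. The only thing \cref{cor:lessThan2} adds is the observation that the quantitative bound~$f$ of part~(b) becomes a qualitative impossibility as soon as $f\ge 2$. The one point worth stating carefully is the reading of "commit in a nice run": it refers to the time-1 (i.e.\ before time~2) commitment discussed throughout \cref{sec:1.5}, so \cref{lem:1r-msgs} applies, and the ambient model is $\tilde{\gamma}^f$ rather than $\modelf$ (in $\modelf$ the stronger \cref{thm:MT-tradeoff}(b) already forbids committing at time~1 outright, so the corollary is of interest precisely for the weaker failure model).
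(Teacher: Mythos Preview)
Your proposal is correct and matches the paper's own derivation: the corollary is stated immediately after \cref{lem:1r-msgs} as a direct consequence of part~(b), exactly as you argue. The additional contextual remarks you make (about Validity forcing $\commit$ in a nice run, and about why $\tilde{\gamma}^f$ rather than $\modelf$ is the relevant model) are accurate and helpful, though the paper itself simply leaves the corollary unproved beyond the sentence ``Thus, we get.''
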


%In~\cite{guerraoui2017fast} they refer to a \textit{1D} AC protocol in $\tilde{\gamma}^f$ as a protocol that is allowed to send messages in round-two (and is assured they are sent correctly) but does not need to receive them.
Moreover, using~\cref{lem:1r-msgs} and the fact that every round-one message is received by a unique process, and every round-two message is sent by a unique process, we obtain: 
\begin{corollary}
	\label{cor:time1}
	Let~$\Prot$ be an AC protocol in~$\tilde{\gamma}^f$, in which all processes commit before time~2 in every nice run. Then at least $n^2+fn-n$ messages are sent in every nice run of~$\Prot$.
\end{corollary}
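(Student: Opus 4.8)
The plan is to deduce \cref{cor:time1} from the two parts of \cref{lem:1r-msgs} by a straightforward double-counting of the round-one and round-two messages. First I would note that every process is subject to the hypotheses of \cref{lem:1r-msgs} in a nice run: if $r$ is a nice run of $\Prot$, then no initial value is $0$ and no process fails, so Abort Validity forbids aborting and the Decision condition forces every one of the $n$ processes to commit in $r$; by the hypothesis of the corollary each of them does so before time $2$. Hence each $i\in\Proc$ satisfies the premises of \cref{lem:1r-msgs} in $r$.

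Next I would bound the number of round-one messages. By part (a) of \cref{lem:1r-msgs}, each process $i$ receives a first-round message from every process $j\neq i$, hence receives at least $n-1$ round-one messages. Since every round-one message has a unique recipient, summing this count over the $n$ recipients shows that at least $n(n-1)=n^2-n$ messages are sent in round one of $r$.

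Then I would bound the number of round-two messages. By part (b) of \cref{lem:1r-msgs}, each process $i$ sends at least $f$ round-two messages (namely the ones it sends before committing). Since every round-two message has a unique sender, summing this count over the $n$ senders shows that at least $nf$ messages are sent in round two of $r$. Finally, round-one messages and round-two messages are disjoint, so the total number of messages sent in $r$ is at least $(n^2-n)+nf=n^2+fn-n$, which is the claimed bound.

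I do not expect a genuine obstacle here, since the substantive work is carried by \cref{lem:1r-msgs}; the corollary is essentially an aggregation step. The only point requiring a moment of care is the bookkeeping: one must verify that the per-recipient count of round-one messages and the per-sender count of round-two messages are not counting any message twice, which is immediate because the two counts range over messages of distinct rounds.
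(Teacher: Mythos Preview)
Your proposal is correct and matches the paper's own justification essentially verbatim: the paper derives the corollary directly from \cref{lem:1r-msgs} by summing the per-recipient round-one count and the per-sender round-two count, exactly as you do. Your write-up is simply a more detailed spelling-out of the one-line argument the paper gives just before the corollary statement.
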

This lower bound is below the upper bound of~$2n(n-1)$ presented in~\cite{guerraoui2017fast}, thereby leaving a gap of~$n^2-(f+1)n$. We therefore close this gap with the protocol in~\cref{fig:1.5 rounds-algorithm}.

Deciding before being able to receive second-round messages makes it impossible to use silence in the second round. Therefore round one must achieve an all to all message pattern in order for $K_i \AllOne$ to hold for every $i\in \Proc$. By sending messages to~$f$ other process informing them~$\AllOne$ in the second round,~$i$ knows that a correct process will know~$\AllOne$ and thereby,~$i$ can commit.

As in \Stealth, all messages of the protocol consist of a single bit. The only non-trivial one is `huh?', implemented in 1-bit by: quiet to $\{i+1,...,i+f\}\% n$; 1 to the rest;
%====================================================================== 
\begin{figure}[h] 
	\centering{ 
		\fbox{ 
			\begin{minipage}[t]{150mm} 
				\footnotesize 
				\renewcommand{\baselinestretch}{2.5} 
				\setcounter{linecounter}{0} 
				\begin{tabbing} 
					aaaaaaaa\=aa\=aaaaaaa\=\kill  
					$\forall i\in \Proc$:\\					
					{\bf round 1} \\
					\>{\bf if} $v_i=1$ {\bf then} send `1' to all; 
					%%%%%%%%%%%%%%%%%%%%%%%%%%%%%%%%%%%%
					\\[1ex]
					{\bf round 2}\\ 					
					\> {\bf if} received `1' from all in previous round\\
					\>\>\hskip.7cm {\bf then} send $\AllOne$ to $\{i+1,...,i+f\}\mod n$; %\\
					%\>\hskip5.925cm
					$\commit$;\\
					\> {\bf else} send `huh?' to all;
					%%%%%%%%%%%%%%%%%%%%%%%%%%%%%%%%%%%%
					\\[1ex]
					{\bf round 3}\\ 					
					\> {\bf if} did not receive any `huh?' in previous round {\bf then} halt;\\
					\>{\bf else}\\[1ex]
					\>\>	\hskip0.9cm		$\;\hat{v}_i \leftarrow \left\{
					\begin{tabular}{ l l }
					1 & {\bf if} received $\AllOne$ or performed $\commit$\\
					0 & otherwise \\
					\end{tabular} \right\};$\\[1ex]
					\>\>\hskip1cm  $\decision\gets \BB(f-1,\hat{v}_i)$;\\[1ex]
					\>\>\hskip1cm {\bf if} haven't performed~$\commit$ {\bf then}\\[.7ex]
					\>\>\hskip1cm\begin{tabular}{ l l }
						\hskip.8cm$\commit$ & {\bf if} $\decision=1$,\\
						\hskip.8cm$\abort$ & {\bf if} $\decision=0$ \\
					\end{tabular}% \\[1ex]
				\end{tabbing} 
				\normalsize 
			\end{minipage} 
		} 
		\caption{\textit{1.5D-algorithm}: Total cost of a nice run is $n^2-n+nf$.}
		\label{fig:1.5 rounds-algorithm} 
	} 
\end{figure} 
%====================================================================== 

%%%%%%%%%%%%%%%%%%%%%%%%%%%%%%%%%%%%%%%%%%%%%%%%%%%%%%%%%%%%%%%%%%%%%%%%%%%%%%%%%%%%%%%%%%%%%%%%%%%%%%%%%%%%%%%%%%%%%%%%%%%%%%%%%%%%%%%%%%%%%%%%%%%%%%%%%%%%
\subsection{Correctness of \textit{1.5D} Protocol}

\begin{claim}\label{clm:1.5Complexity}
	If~$r$ is a nice run of~~$\Prot _{1.5D}$ in~$\tilde{\gamma}^f$, then exactly~$n^2+nf-n$ messages are sent over the network and all processes perform~$\commit$ at time~1 in~$r$.
\end{claim}
\begin{proof}
	Let~$r$ be a run of~~$\Prot _{1.5D}$ in~$\tilde{\gamma}^f$, in which all processes start with~1 and no failure occur (a nice run). Then~$r$ goes as follows:	
	In round~1 of~$r$ every process sends~1 to all, and they all receive the message ($n-1$ messages per process).
	In round~2 of~$r$ every process had received~1 from all, therefore it sends its~$f$ round~2 messages successfully, and then perform~$\commit$ at time~1.
	Hence, every process sends~$n+f-1$ messages (a total of~$n^2+nf-n$ over the network) and commits at time~1 in~$r$.
\end{proof}

\begin{corollary}\label{lem:1.5abVal}
	All runs of~~$\Prot _{1.5D}$ in context~$\tilde{\gamma}^f$ satisfy \textit{abort validity}.
\end{corollary}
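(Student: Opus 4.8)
The plan is to argue by contraposition, exactly as in the proofs of the analogous statements for \Stealth\ and \Quick\ (see \cref{lem:abortValidity} and \cref{lem:2abrVal}). Abort Validity asserts that a process may decide $\abort$ only in a run in which some initial value is~0 or some process fails; equivalently, it suffices to show that no process decides $\abort$ in a \emph{nice} run. So I would fix an arbitrary nice run $r$ of $\Prot_{1.5D}$ in $\tilde{\gamma}^f$ — one in which all initial values equal~1 and no process fails — and prove that no process performs $\abort$ in $r$.

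First I would invoke \cref{clm:1.5Complexity}, which already pins down the behavior of $\Prot_{1.5D}$ on nice runs: in $r$ every process receives `1' from everyone in round~1, hence sends its $f$ round-two $\AllOne$ messages and performs $\commit$ at time~1. Consequently no `huh?' message is ever sent in $r$, the $\BB$ subprotocol is never entered, and the only decision any process makes in $r$ is $\commit$. In particular no process performs $\abort$ in $r$, which is exactly what is needed; hence Abort Validity holds for every run of $\Prot_{1.5D}$.

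There is essentially no obstacle here beyond citing \cref{clm:1.5Complexity}; the only point to be careful about is that ``nice run'' is interpreted identically in $\tilde{\gamma}^f$ and in $\modelf$ (all initial values~1, no failures), so that \cref{clm:1.5Complexity} applies verbatim and the reliability-of-crashing-round-sends that distinguishes $\tilde{\gamma}^f$ plays no role when no crash occurs. Given that, the corollary follows immediately, and I would present it as a one-line deduction from \cref{clm:1.5Complexity}.
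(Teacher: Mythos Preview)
Your proposal is correct and follows essentially the same approach as the paper: both reduce Abort Validity to showing that no process aborts in a nice run, and both dispatch this immediately by citing \cref{clm:1.5Complexity}, which already establishes that every process commits in a nice run. Your additional remarks about no `huh?' messages being sent and $\BB$ never being entered are accurate elaborations but are not needed once \cref{clm:1.5Complexity} is invoked.
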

\begin{proof}
	Let~$r$ be a run of~~$\Prot _{1.5D}$ in~$\tilde{\gamma}^f$, in which all processes start with~1 and no failures occur (a nice run). Then by~\cref{clm:1.5Complexity} no process aborts in~$r$ (they all commit).
\end{proof}

\begin{lemma}\label{lem:1.5comVal}
	All runs of~~$\Prot _{1.5D}$ in context~$\tilde{\gamma}^f$ satisfy \textit{commit validity}.
\end{lemma}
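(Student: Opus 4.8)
The plan is to prove commit validity in its contrapositive form: if some process starts with value~$0$, then no process ever performs $\commit$ in the run. Fix a run~$r$ of $\Prot_{1.5D}$ in $\tilde{\gamma}^f$ and suppose $v_j=0$ for some $j\in\Proc$. The code of $\Prot_{1.5D}$ offers exactly two opportunities to commit: (i)~at time~1, in the ``then'' branch of round~2, guarded by ``received `1' from all in the previous round''; and (ii)~at time~3, where a process that did not already commit performs $\commit$ iff the value $\decision=\BB(f-1,\hat v_i)$ returned by the consensus variant is~$1$. I would rule out both.

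For~(i): since $v_j=0$, the round~1 code does not instruct $j$ to send a `1' message, so $j$ sends no such message in~$r$ (this holds whether or not $j$ crashes, since in $\tilde{\gamma}^f$ only prescribed sends can occur). Hence no process meets the round~2 guard --- it fails at $j$ because $v_j=0$, and at every $i\neq j$ because $i$ is missing $j$'s round~1 message. Therefore no process enters the round~2 ``then'' branch; in particular no process commits at time~1, and no $\AllOne$ message is sent anywhere in~$r$.

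For~(ii): take any process~$i$ that does not halt at the start of round~3 and hence computes $\hat v_i$. By the previous paragraph $i$ received no $\AllOne$ message and did not commit at time~1, so $\hat v_i=0$; thus every process that invokes $\BB$ proposes~$0$. By the Validity property of $\BB$, a process can decide a value only if it was proposed by some participant, so $\decision=0$ for every process to which $\BB$ returns, and that process performs $\abort$ rather than $\commit$. Processes that instead halt at the start of round~3 clearly never commit. Together with~(i), no process commits in~$r$, so~$r$ vacuously satisfies commit validity.

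I expect the only delicate point to be the reading of the guard ``received `1' from all'': one must check that it fails at $j$ itself (not just at the other processes), i.e., that a value-$0$ process does not count as having supplied its own `1' --- otherwise the protocol would be incorrect. It is also worth noting explicitly that the argument uses only the \emph{safety} guarantee Validity of $\BB$, which holds irrespective of how many crashes occur during $\BB$; the tolerance parameter $f-1$ matters only for $\BB$'s Decision (liveness) guarantee, which is handled in the decision lemma. Otherwise the proof follows the template already used for \Quick\ in \cref{lem:2comVal} and for \Short\ in \cref{lem:D1f1comVal}.
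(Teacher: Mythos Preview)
Your proposal is correct and follows essentially the same route as the paper's own proof: assume some $v_j=0$, observe that $j$ sends no round-1 `1', hence no process satisfies the round-2 guard, hence no $\AllOne$ is ever sent, hence every participant enters $\BB$ with $\hat v=0$, and $\BB$'s Validity forces $\decision=0$. Your added remarks --- that one must check the guard fails at $j$ itself, and that only the safety (Validity) property of $\BB$ is used --- are more careful than the paper, which glosses over both points.
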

\begin{proof}
	Let~$r$ be a run of~~$\Prot _{1.5D}$ in~$\tilde{\gamma}^f$, in which some process~$j\in \Proc$ starts with~$v_j=0$. We show that no process commits in~$r$.	
	In round~1 of~$r$,~$j$ does not send~1 to~anyone, thus, in round~2 all processes did not receive~1 from all and therefore all processes send~`huh?' to all.
	Since no $\AllOne$ messages were received in round~2 of~$r$, in round~3 processes begin the consensus protocol with~$\hat{v}=0$, therefore by validity of consensus no process decides~1 and therefore no process commits in~$r$.
\end{proof}

%\begin{observation}\label{lem:1.5decBefore3}
%	In the 1.5MD protocol all decisions made until round~3 are~$\commit$
%\end{observation}

\begin{lemma}\label{lem:1.5agreement}
	All runs of~~$\Prot _{1.5D}$ in context~$\tilde{\gamma}^f$ satisfy \textit{agreement}.
\end{lemma}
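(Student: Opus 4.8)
The plan is to mirror the two-case template used for the agreement proofs of \Stealth\ (\cref{lem:agreement}) and \Quick\ (\cref{lem:2agree}). Given a run~$r$ of $\Prot_{1.5D}$ in~$\tilde\gamma^f$, I split on whether some process performs $\commit$ at time~1 in~$r$. If no process commits at time~1, then every process that is alive in round~2 must have failed the test ``received `1' from all in round~1'' --- otherwise it would have committed in round~2 --- and hence it broadcasts `huh?' to all in round~2. Since $f<n$ there is a correct process, and its round-two `huh?' is received by every process that reaches round~3; thus no such process halts, and every non-crashed process runs~$\BB$. Consequently every decision in~$r$ is an output of~$\BB$, and agreement follows from the Agreement property of~$\BB$ (\cref{def:1-Cons NEW}).

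Now suppose some process~$i$ performs $\commit$ at time~1 in~$r$; I claim every decision in~$r$ is $\commit$. Decisions taken at time~1 are commits, so it remains to handle decisions taken in round~3, i.e.\ by processes that did \emph{not} commit at time~1 and decide according to~$\BB$. If there is no such process we are done; otherwise fix one, call it~$q$. I first produce a correct process~$p$ that, upon reaching round~3, sets $\hat v_p=1$: since~$i$ commits it received `1' from all in round~1, and because in the code the $f$ sends of $\AllOne$ to $i+1,\dots,i+f$ are performed before the $\commit$ (cf.\ \cref{lem:1r-msgs}), the crashing-round delivery guarantee of~$\tilde\gamma^f$ ensures that all of $i+1,\dots,i+f$ receive $\AllOne$ in round~2. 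Hence each of the $f+1$ distinct processes $i,i+1,\dots,i+f$ would set $\hat v=1$ on reaching round~3 (process~$i$ because it ``performed $\commit$'', the others because they ``received $\AllOne$''), and since at most~$f$ of them are faulty, at least one, call it~$p$, is correct. Finally, $q$ did not commit at time~1, so it broadcasts `huh?' to all in round~2, and since~$q$ completes round~2 (it reaches round~3) this `huh?' is delivered to~$p$; therefore~$p$ does not halt at the start of round~3 and enters~$\BB$ with proposal~$1$. By the Biased-to-1 property of~$\BB$, every decision of~$\BB$ in~$r$ is~$1$, so~$q$ --- and hence every process that decides according to~$\BB$ --- decides $\commit$. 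Together with the time-1 commits, every decision in~$r$ is $\commit$, giving agreement in this case as well.

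The only genuinely delicate step is the single appeal to the semantics of~$\tilde\gamma^f$: a process that commits ``at time~1'' actually does so \emph{in mid-round}, after performing its round-two $\AllOne$ sends, and it is exactly the $\tilde\gamma^f$ guarantee --- each message whose send is performed in the crashing round is delivered --- that makes all $f$ of those $\AllOne$ messages arrive even if the committer then crashes, which is what supplies the correct value-1 participant~$p$. One must also verify that~$p$ actually reaches the $\BB$ call, and this is where the observation that~$q$ itself broadcasts (and completes) its round-two `huh?' is used. Everything else is a straightforward reading of the protocol in \cref{fig:1.5 rounds-algorithm} together with the Agreement and Biased-to-1 clauses of \cref{def:1-Cons NEW}, exactly as in the corresponding agreement lemmas for \Stealth\ and \Quick.
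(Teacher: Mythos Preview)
Your proof is correct and follows essentially the same two-case template as the paper's own proof of \cref{lem:1.5agreement}. In fact you are more careful than the paper on two points the paper glosses over: you explicitly invoke the $\tilde\gamma^f$ delivery guarantee to argue that all $f$ of $i$'s round-two $\AllOne$ messages arrive (since the sends precede the $\commit$ in the code), and you explicitly verify that the correct witness~$p$ actually enters~$\BB$ rather than halting, by observing that the non-committing decider~$q$ completes its round-two `huh?' broadcast. One small imprecision: in the no-commit case your claim that ``every process alive in round~2 must have failed the test'' is not literally true for a process that passes the test but crashes in round~2 before executing the $\commit$ line; however your argument only uses this for a \emph{correct} process, for which it is certainly true, so the conclusion stands.
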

\begin{proof}
	Let~$r$ be a run of~~$\Prot _{1.5D}$ in~$\tilde{\gamma}^f$. If all decisions are done before round~3 of~$r$, it is only committing, and thus, $r$~satisfies agreement.
	If no decision is done in~$r$ before round~3, then all decisions are according to a uniform consensus protocol and thus, $r$~satisfies agreement.
	Hence, the only case left to check is when some, but not all, processes decide before round~3 (in round~2).
	
	Let~$i$ be a process that decides before round~3 in~$r$, we will prove that all other processes that decide in~$r$, decide the same as~$i$.
	By construction of~~$\Prot _{1.5D}$ every decision before round~3 is $\commit$, thus,~$i$ and whomever decides before round~3 commit in~$r$. We are left to show that every decision after time~3 in~$r$ is $\commit$ as well.
	Since~$i$ committed in round~2 of~$r$, $K_i \AllOne$ holds in~$r$, moreover,~$i$ informed~$f+1$ processes (including itself) that~$\AllOne$ holds in~$r$, we denote this group of processes by~$S_i$.
	The bound of~$f$ possible failures insures us that at least one process in~$S_i$ is correct and knows~$\AllOne$ in~$r$. This correct process will start the consensus protocol with~$\hat{v}=1$ and by the 1-tending property of the consensus used, all decisions would be~1 which leads to~$\commit$.
	Thus, they all agree with~$i$, proving the claim that~$r$ satisfies agreement.
\end{proof}

\begin{lemma}\label{lem:1.5dec}
	All runs of~~$\Prot _{1.5D}$ in context~$\tilde{\gamma}^f$ satisfy \textit{decision}.
\end{lemma}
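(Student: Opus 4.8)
The plan is to reuse, essentially verbatim, the template behind the Decision lemmas for \Stealth, \Quick, and \Short: a correct process either decides ``on the fast path'' or else falls through to the consensus subprotocol $\BB$, whose own Decision guarantee does the rest. Concretely, fix a run $r$ of $\Prot_{1.5D}$ in $\tilde{\gamma}^f$ and a correct process $i\in\Proc$, and split on whether $i$ performs $\commit$ at time~1 in~$r$.

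If $i$ commits at time~1, then $i$ has decided and there is nothing left to prove. Otherwise, by construction $i$ did not receive `1' from every process in round~1, so in round~2 it broadcasts `huh?' to all processes. I would then observe that every process still active at the start of round~3 takes the ``else'' branch of round~3 and invokes $\BB(f-1,\hat v_i)$: a process that did not commit at time~1 is in that branch by construction, and a process that did commit receives $i$'s round-two `huh?' (since $i$ is correct) and hence also takes the ``else'' branch rather than halting. In particular $i$ itself, and every correct process, starts $\BB$ in round~3. By the Decision clause of the $\BB$ specification (\cref{def:1-Cons NEW}), $i$ obtains a value $\decision\in\{0,1\}$ after finitely many rounds and then performs $\commit$ or $\abort$ (if it has not already committed); so $i$ decides in~$r$.

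The only mildly delicate point is the second case: one must be sure that a non-committing correct process genuinely reaches the $\BB$ invocation rather than halting, and that \emph{all} correct processes reach it so that $\BB$'s guarantee applies to $i$. This is exactly where the round-two `huh?' broadcast does its work --- it forces every live process into the ``else'' branch --- after which the argument reduces to quoting the Decision property of $\BB$, just as in \cref{lem:decision}, \cref{lem:2dec}, and \cref{lem:D1f1dec}. No new machinery is needed here; in particular, no knowledge-theoretic reasoning or appeal to the Silent Choir Theorem enters, those being used only for the validity, agreement, and lower-bound arguments.
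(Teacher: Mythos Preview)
Your proposal is correct and follows essentially the same approach as the paper: a correct process that does not commit at time~1 falls through to the $\BB$ invocation in round~3, and the Decision property of $\BB$ finishes the argument. Your proof is in fact more careful than the paper's two-sentence version, since you explicitly verify---via the round-two `huh?' broadcast from the non-committing correct process~$i$---that every live process takes the ``else'' branch in round~3 and participates in $\BB$, a point the paper leaves implicit.
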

\begin{proof}
	Let~$r$ be a run of~~$\Prot _{1.5D}$ in~$\tilde{\gamma}^f$.
	By construction of~~$\Prot _{1.5D}$, correct processes who don't decide in round~2 of~$r$, decide using the consensus protocol that starts in round~3. As the consensus protocol satisfies decision in~$\tilde{\gamma}^f$, all correct processes eventually decide in~$r$.
\end{proof}

\begin{claim}\label{clm:1.5DsolvesAC}
	Protocol~~$\Prot _{1.5D}$ (\cref{fig:1.5 rounds-algorithm}) is an AC protocol in $\tilde{\gamma}^f$.
\end{claim}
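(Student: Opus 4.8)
The plan is to verify the four defining conditions of AC --- Decision, Agreement, Commit Validity, Abort Validity --- one at a time for every run of $\Prot_{1.5D}$ in $\tilde{\gamma}^f$, and then conjoin them, exactly as in the analogous proofs of \cref{clm:QCsolvesAC}, \cref{clm:2DsolvesAC}, and \cref{clm:D1f1solvesAC}. A preliminary observation is that $\tilde{\gamma}^f$ is a restriction of $\modelf$ (every run in $\tilde{\gamma}^f$ is a legal run in $\modelf$), so the guarantees of the consensus building block $\BB$ --- Validity, Agreement, Decision, and Biased-to-1 --- are available in $\tilde{\gamma}^f$. I would also rely on \cref{clm:1.5Complexity}, which pins down the shape of a nice run (everyone commits at time~1, $n^2+nf-n$ messages sent).

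Three of the conditions are essentially bookkeeping. Abort Validity is immediate from \cref{clm:1.5Complexity}: in a nice run all processes $\commit$, hence none aborts. For Commit Validity, if some $v_j=0$ then $j$ is silent in round~1, so no process receives `1' from everyone, so every process sends `huh?' in round~2 and enters $\BB$ in round~3 with $\hat v=0$; $\BB$'s Validity then forces the decision value to be~$0$, i.e.\ $\abort$, so nobody commits. For Decision, every correct process either commits at the end of round~2 or invokes $\BB$ in round~3, which terminates by $\BB$'s Decision property.

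The substance is Agreement, which I would handle by splitting on when the first decision is taken. If every decision is taken before round~3, each such decision is $\commit$ by construction, so agreement holds trivially; if no decision precedes round~3, all decisions are produced by $\BB$ and agree by $\BB$'s Agreement. The only delicate case is the mixed one: some process $i$ decides (necessarily $\commit$) at the end of round~2 while others decide later through $\BB$. Here the argument runs: $i$'s commit forces $K_i\AllOne$, so $i$ received `1' from all in round~1 and therefore delivered $\AllOne$ to the $f$ processes $\{i+1,\dots,i+f\}\bmod n$ (using that $\tilde{\gamma}^f$ guarantees delivery of a crashing process's round-two messages, though $i$ need not even be faulty). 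Together with $i$ itself this is a set $S_i$ of $f+1$ processes that know $\AllOne$; since at most $f$ fail, some correct $h\in S_i$ enters $\BB$ with $\hat v=1$, and the Biased-to-1 property makes every $\BB$ decision equal to~$1$, i.e.\ $\commit$. Thus every later decision is $\commit$, matching $i$.

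I expect this mixed case to be the main obstacle: it is precisely a silent-choir-flavoured argument in which the $(f+1)$-sized set $S_i$ plays the role of the choir and the Biased-to-1 property of $\BB$ is what turns ``some correct member knows $\AllOne$'' into ``everyone that decides, decides $\commit$''; everything else is mechanical. Since \cref{lem:1.5abVal}, \cref{lem:1.5comVal}, \cref{lem:1.5agreement}, and \cref{lem:1.5dec} already carry out exactly these four arguments, the final step is simply to cite them and conclude that $\Prot_{1.5D}$ satisfies all the AC conditions in $\tilde{\gamma}^f$.
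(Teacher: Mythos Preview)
Your proposal is correct and follows essentially the same approach as the paper: the paper's proof of this claim simply cites \cref{lem:1.5abVal}, \cref{lem:1.5comVal}, \cref{lem:1.5agreement}, and \cref{lem:1.5dec}, exactly as you do in your final sentence. The additional content you provide---recapitulating how each of the four AC conditions is established, including the three-way case split for Agreement and the $(f{+}1)$-sized set $S_i$ feeding a correct process with $\hat v=1$ into $\BB$---matches the paper's separate proofs of those four lemmas essentially verbatim.
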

\begin{proof}
	Lemmas~\ref{lem:1.5abVal},~\ref{lem:1.5comVal},~\ref{lem:1.5agreement}, and~\ref{lem:1.5dec} prove that all runs of protocol~~$\Prot _{1.5D}$ in the context~$\tilde{\gamma}^f$ satisfy the required conditions for AC, thus, making protocol~~$\Prot _{1.5D}$ an AC protocol in the context of~$\tilde{\gamma}^f$.
\end{proof}
%%%%%%%%%%%%%%%%%%%%%%%%%%%%%%%%%%%%%%%%%%%%%%%%%%%%%%%%%%%%%%%%%%%%%%%%%%%%%%%%%%%%%%%%%%%%%%%%%%%%%%%%%%%%%%%%%%%%%%%%%%%%%%%%%%%%%%%%%%%%%%%%%%%%%%%%%%%%

%%%%%%%%%%%%%%%%%%%%%%%%%%%%%%%%%%%%%%%%%%%%%%%%%%%%%%%%%%%%%%%%%%%%%%%%%%%%%%%%%%%%%%%%%%%%%%%%%%%%%%%%%%%%%%%%%%%%%%%%%%%%%%%%%%%%%%%%%%%%%%%%%%%%%%%%%%%%

\end{document}